\newcommand{\field}[1]{\mathbb{#1}}
\newcommand{\E}{\field{E}}
\theoremstyle{example} \theoremstyle{remark} \theoremstyle{lemma}
\theoremstyle{definition} \theoremstyle{corol}
\theoremstyle{proposition} \theoremstyle{condition}
\theoremstyle{assumption}
\newtheorem{theorem}{\n{Theorem}}[section]
\newtheorem{remark}{\n{Remark}}[section]
\newtheorem{lemma}{\n{Lemma}}[section]
\newtheorem{corollary}{\n{Corollary}}[section]
\newtheorem{proposition}{\n{Proposition}}[section]
\newcommand{\Rmnum}[1]{\expandafter\romannumeral #1}
\font\n=cmcsc12
\font\n=cmcsc10
\begin{document}
\title{\ Empirical Bayes, SURE and Sparse Normal Mean Models}
\author{\large Xianyang Zhang and Anirban Bhattacharya
\thanks{Department of Statistics, Texas A\&M University, College Station, TX 77843, USA. E-mail: zhangxiany@stat.tamu.edu; anirbanb@stat.tamu.edu}
\medskip\\
{\large Texas A\&M University}
}
\date{\normalsize This version: \today}
\maketitle
\sloppy%
\textbf{Abstract} This paper studies the sparse normal
mean models under the empirical Bayes framework. We focus on the mixture priors with an atom at zero and a density component centered at a data driven location determined by maximizing the marginal likelihood or
minimizing the Stein Unbiased Risk Estimate. We study the properties of the corresponding posterior median and posterior mean.
In particular, the posterior median is a thresholding rule and enjoys the multi-direction shrinkage property that shrinks the observation toward either the origin or the data-driven location.
The idea is extended by considering a finite mixture prior, which is flexible to model the cluster structure of the unknown means.
We further generalize the results to heteroscedastic normal mean models. Specifically, we propose a semiparametric estimator which can be
calculated efficiently by combining the familiar EM algorithm with
the Pool-Adjacent-Violators algorithm for isotonic regression. The effectiveness of our
methods is demonstrated via extensive numerical studies.
\\
\strut \textbf{Keywords:} EM algorithm, Empirical Bayes,
Heteroscedasticity, Isotonic regression, Mixture modeling, PAV
algorithm, Sparse normal mean, SURE, Wavelet

\section{Introduction}
A canonical problem in statistical learning is the compound estimation of (sparse) normal means from a single observation. The
observed vector $\mathbf{X}=(X_1,\dots,X_p)\in\mathbb{R}^p$ arises from the location model,
$$X_i=\mu_i+\epsilon_i,\quad \epsilon_i\sim^{i.i.d} N(0,1),$$
for $1\leq i\leq p,$ and the goal is estimating the unknown mean vector $(\mu_1,\dots,\mu_p)$ as well as recovering its support. This kind of problems arise in many different contexts such as
adaptive nonparametric regression using wavelets, multiple testing, variable selection and many other areas in statistics.
Location model also carries significant practical relevance in many statistical applications because the observed data are often understood,
represented or summarized as the sum of a signal vector and Gaussian errors.

In this paper, we tackle the problem from the empirical Bayes perspective which has seen a revival in recent years, see e.g. \citet[JS hereafter]{johnstone2004needles}, \citet{brown2009nonparametric,jiang2009general,koenker2014convex,martin2014asymptotically,petrone2014bayes}, among others. \citet{morris1983parametric} classified empirical Bayes into two types, namely parametric empirical Bayes and nonparametric empirical Bayes. 
In sparse models, the parametric (empirical) Bayes approach usually begins with a spike-and-slab prior on each $\mu_i$ that separates signals from noise, which includes the case when the spike component is a point mass at zero [see \citet{mitchell1988bayesian,george1993variable,ishwaran2005spike}]. In contrast, the nonparametric empirical Bayes approach assumes a fully nonparametric
prior on the means which is estimated by general maximum likelihood, resulting in an estimate which is a discrete distribution with no more than $p+1$ support points. Our strategy is different
from both the empirical Bayes with spike-and-slab priors and the general maximum likelihood empirical Bayes (GMLEB).
To account for sparsity, we impose a mixture prior on the entries of the mean vector which admits a point mass at zero. The {\em signal distribution}, that is, the distribution of the non-zero means, is modeled as a finite mixture distribution whose component densities could have nonzero centers.
Thus, the class of priors considered belong to an intermediate class between the spike-and-slab priors and the fully nonparametric priors. The finite mixture approach gives the flexibility of a nonparametric model while with the convenience of a parametric one, see e.g. \citet{allison2002mixture} and \citet{muralidharan2010empirical}.


\begin{figure}[h]
\centering
\includegraphics[height=6cm,width=6cm]{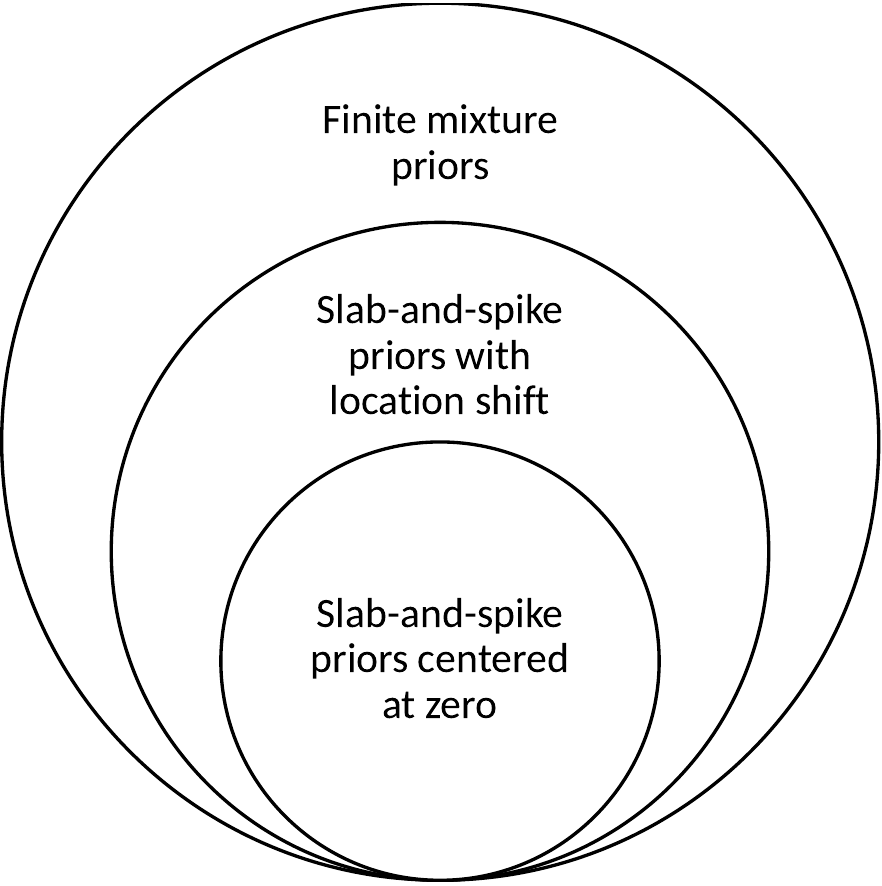}
\caption{Relationship among priors.}\label{fig:prior}
\end{figure}

One advantage of the proposed mixture prior is that it allows users to impose a point mass at zero, which implies sparsity in the posterior median or some other appropriate summary of the posterior \citep{raykar2011empirical}. However, such a goal is not easily achieved for the
GMLEB as its solution does not necessarily have a point mass at zero, and an additional thresholding step might be required to obtain a sparse solution.
Another salient feature of the proposed prior is its added flexibility in modeling potential cluster structures in the
nonzero entries. For example, the posterior mean and median associated with the proposed prior have a multi-direction shrinkage property that shrinks observation toward its nearest center (determined by data). By contrast, the posterior mean and median from usual spike-and-slab prior
shrinks datum toward zero regardless its distance from the origin (although the amount of shrinkage may decrease as the observation gets farther away from zero).
Focusing in particular on two-component mixture priors with a non-zero location parameter in the slab component, we provide an in-depth study of the properties of the posterior median, which is a thresholding rule and enjoys the two-directional shrinkage property. We show through numerical studies that inclusion of the location parameter (determined by the data) significantly improves the performance of the posterior median over JS (2004) when the nonzero entries exhibit certain cluster structure. It is also worth mentioning that the hyperparameters in the proposed prior can be estimated efficiently using the familiar EM-algorithm,
which saves considerable computational cost in comparison with the GMLEB. A price we pay here is the selection of the number of components in the mixture prior,
which can be overcome using classical model selection criterions such as the Bayesian information criterion \citep{fraley2002model}.

We also study the risk properties of the posterior mean under the mixture prior. We propose to estimate the hyperparameters
by minimizing the corresponding Stein Unbiased Risk Estimate (SURE). A uniform consistency result is proved to justify the theoretical validity of this procedure.
As far as we are aware, the use of SURE to tune the hyperparameters in the current context has not been previously considered in the literature.

We further extend our results to sparse heteroscedastic normal mean models, where the noise can have different variances. Heteroscedastic normal mean models have been recently studied from the empirical Bayes perspective; see \citet{xie2012sure,tan2015improved} and \cite{weinstein2015group}.
Our focus here is on the \emph{sparse} case which has not been covered by the aforementioned works. The proposed approach is different from existing ones in terms of the prior as well as
the way we tune the hyperparameters. Motivated by \citet{xie2012sure}, we propose a semiparametric approach to
account for the ordering information contained in the variances in estimating the means. To obtain the marginal maximum likelihood estimator (MMLE), we develop a modified EM algorithm that invokes the pool-adjacent-violators (PAV) algorithm in M-step, see more details in
Section \ref{sec:hete}.

The rest of the article is organized as follows. In Section \ref{sec:sparse1}, we begin with a formal introduction of the empirical Bayes procedure in the sparse normal mean models with two component mixture priors, where the density component has a (nonzero) location shift parameter.
Section \ref{sec:median} studies the posterior median. Extensions to finite mixture priors on the means are considered in Section \ref{sec:mix}. Section \ref{sec:mean} contains some results on the risk of the posterior mean and the uniform consistency for SURE.
Section \ref{sec:hete} concerns the heteroscedastic sparse normal mean models.
Section \ref{sec:sim} is devoted to numerical studies and empirical analysis of image data.
The technical details are gathered in the appendix.

\section{Sparse normal mean models}\label{sec:sparse}

\subsection{Two component mixture priors and the MMLE}\label{sec:sparse1}
Throughout the paper, we assume
that the mean vector $(\mu_1,\dots,\mu_p)$ is sparse in the
sense that many or most of its components are zero.
The notion of
sparseness can be captured by independent prior distributions on each $\mu_i$ given
by the mixture,
\begin{align}\label{eq-mix1}
f(\mu) = (1 - w) \delta_0(\mu) + w f_s(\mu), \quad w \in [0, 1],
\end{align}
where $f_s$ is a density on $\mathbb{R}$, and $\delta_0$ denotes a point mass at zero. While $f_s$ is allowed to be completely unspecified in GMLEB, we aim to harness additional structure by modeling $f_s$ in a semi-parametric way. To begin with, we model $f_s$ via a location-scale family $\gamma(., b, c)$ with scale parameter $b$ and location parameter $c$, i.e.,
$\gamma(\mu;b,c)=b\gamma_0(b(\mu-c))$ with
$\gamma_0(\mu)=\gamma(\mu;1,0)$ and $b>0$. Typical choices of
$\gamma$ include the double exponential or Laplace distribution,
\begin{align}\label{pior-exp}
\gamma(\mu;b,c)=\frac{1}{2}b\exp(-b|\mu-c|),
\end{align}
and the normal distribution
\begin{align}\label{pior-exp}
\gamma(\mu;b,c)=\frac{b}{\sqrt{2\pi}}\exp\{-b^2(\mu-c)^2/2\},
\end{align}
for $b>0$ and $c\in\mathbb{R}$. Note that the location parameter $c$
is equal to zero in the prior distribution suggested by JS (2004).
Our numerical results in Section \ref{sec:sim} suggest that location parameter, which captures cluster structure in signals, can play an important role in sparse normal
mean estimation.

Let
$g(x;b,c)=\int_{-\infty}^{+\infty}\phi(x-\mu)\gamma(\mu;b,c)d\mu$ be
the convolution of $\phi(\cdot)$ and $\gamma(\cdot;b,c)$, where
$\phi(\cdot)$ denotes the standard normal density. Under (\ref{eq-mix1}), the marginal
distribution for $X_i$ is
$$m(x;w,b,c)=(1-w)\phi(x)+wg(x;b,c),$$
and the corresponding posterior distribution for $\mu_i$ is equal to
$$\pi(\mu|X_i=x,w,b,c)=(1-\alpha(x))\delta_0(\mu)+\alpha(x)h(\mu|x,b,c),$$
where
$$\alpha(x)=\frac{wg(x;b,c)}{m(x;w,b,c)}\quad \text{and}\quad h(\mu|x,b,c)=\frac{\phi(x-\mu)\gamma(\mu;b,c)}{g(x;b,c)}.$$

In the sequel, we proceed to estimate the parameters $(w,b,c)$ by maximizing the marginal likelihood of $\mathbf{X}$.
Specifically, the MMLE $(\hat{w},\hat{b},\hat{c})$ is defined as
\begin{align}\label{margin}
(\hat{w},\hat{b},\hat{c})=\arg\max
\sum^{p}_{i=1}\log\{(1-w)\phi(X_i)+wg(X_i;b,c)\},
\end{align}
where the optimization is subject to the constraints that $b>0$,
$-\max_{1\leq i\leq p}|X_i|\leq c\leq \max_{1\leq i\leq p}|X_i|$, and $0\leq
w\leq 1$.
The optimization problem (\ref{margin}) can be solved efficiently
using the EM algorithm. 

\subsection{The posterior median}\label{sec:median}
In case of $c = 0$, JS (2004) noted that the median of the posterior
distribution $\pi(\mu_i|X_i=x,w,b,c)$, denoted by $\delta(x;w,b,c)$,
has the thresholding property, that is, the posterior median is exactly zero on a symmetric interval around the origin. The thresholding property continues to hold even when $c \ne 0$, whence there exist positive constants $t_1(w, b, c)$ and $t_2(w, b, c)$ such that $\delta(x;w,b,c) = 0$ for any $-t_2(w, b, c) \le x \le t_1(w, b, c)$. For $c\neq 0$, the thresholding
levels $t_1(w,b,c)$ and $t_2(w,b,c)$ are unequal, which results in
an asymmetric thresholding rule, see Proposition \ref{prop1}. This
is in sharp contrast with the case $c=0$, where the posterior median
is antisymmetric, i.e., $\delta(-x;w,b,0)=-\delta(x;w,b,0)$ [see
Lemma 2 of JS (2004)]. Figures
\ref{fig:delta} plots the posterior median
$\delta(x;w,b,c)$ as a function of $x$ for various values of $c$.
For $c\neq 0$, the posterior median enjoys the so-called
two-direction shrinkage property i.e., when $x$ is close to
zero, it is being shrunk toward the origin; when $x$ is close to
$c$, it is being pulled toward $c$.

\begin{figure}[h]
\centering
\includegraphics[height=6cm,width=4cm]{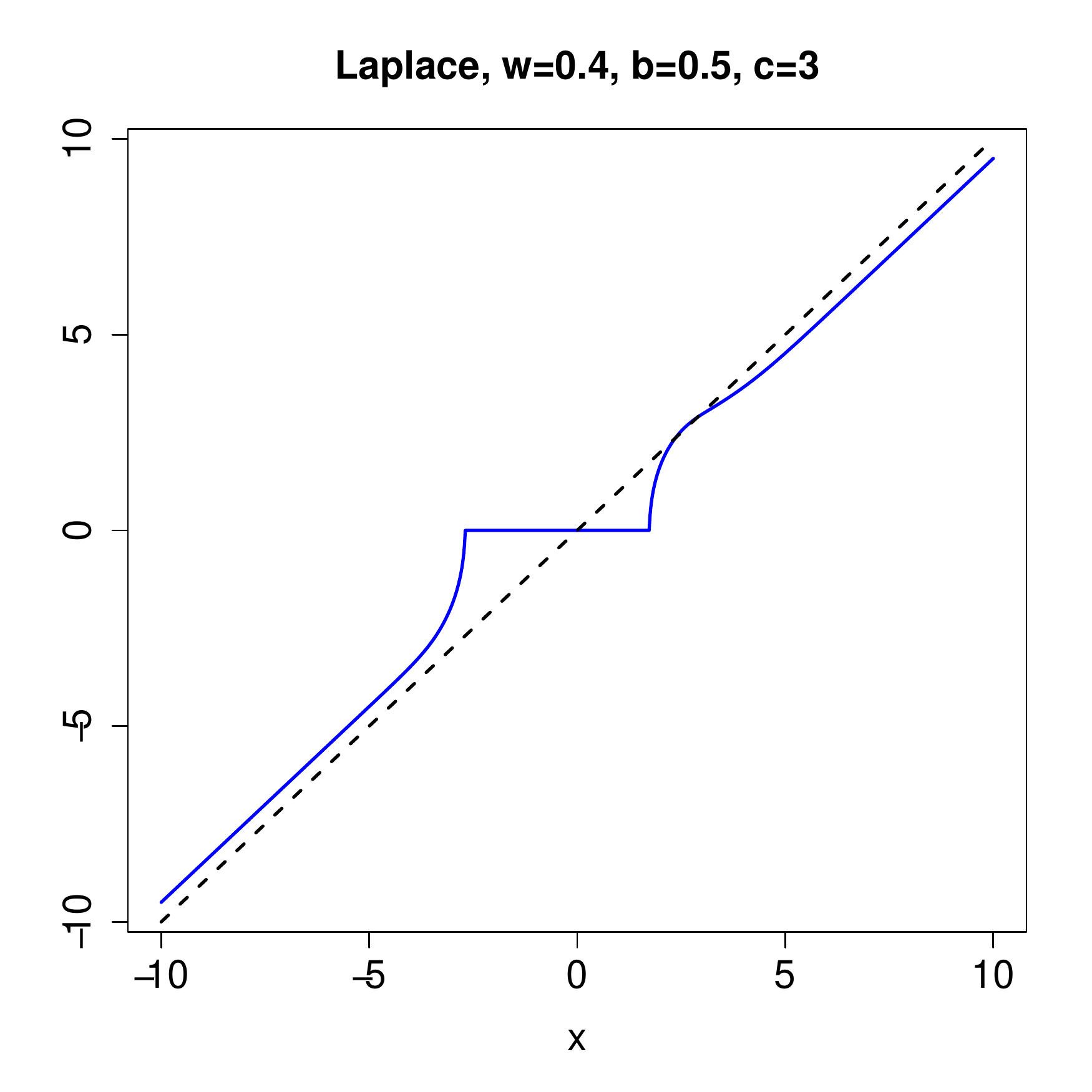}
\includegraphics[height=6cm,width=4cm]{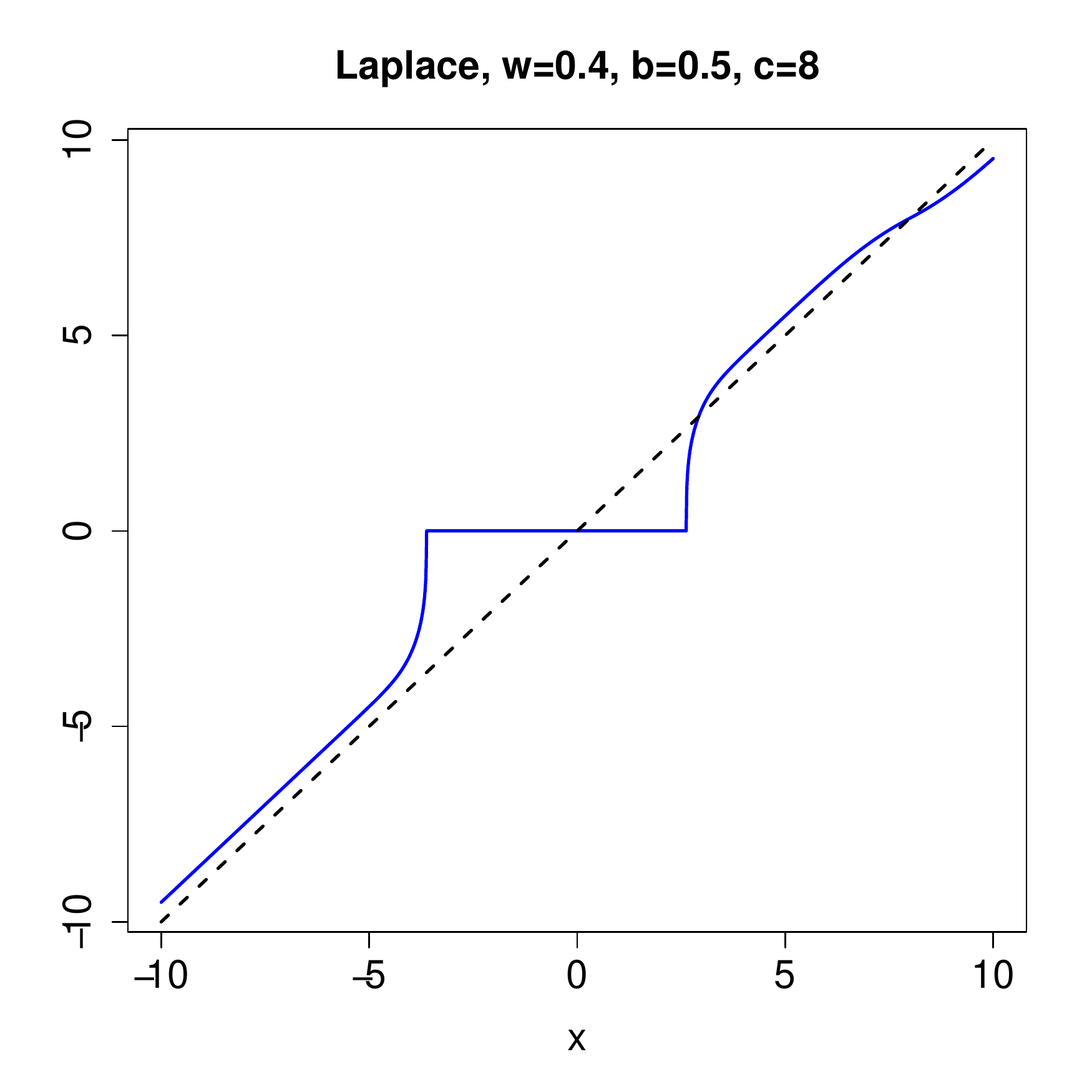}
\includegraphics[height=6cm,width=4cm]{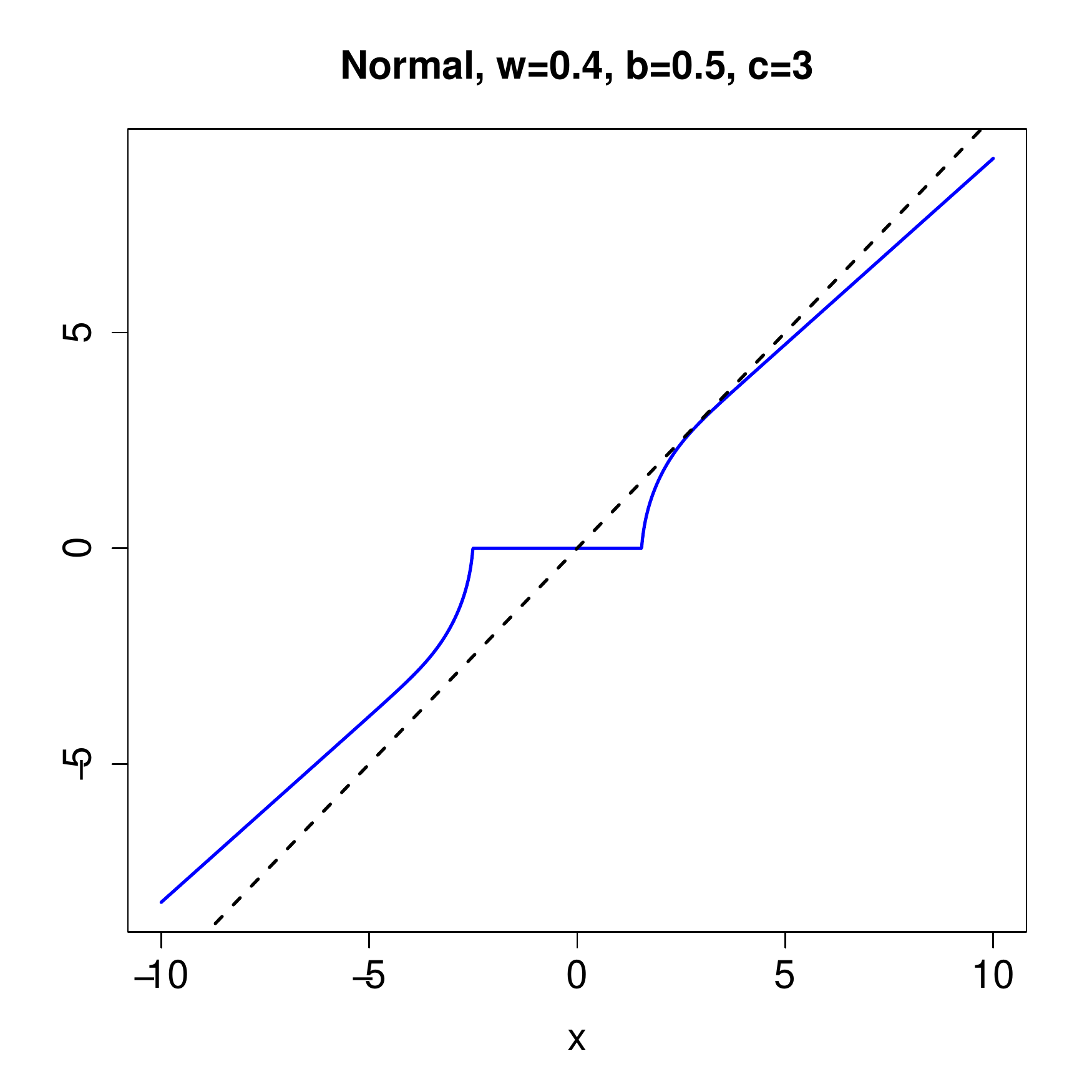}
\includegraphics[height=6cm,width=4cm]{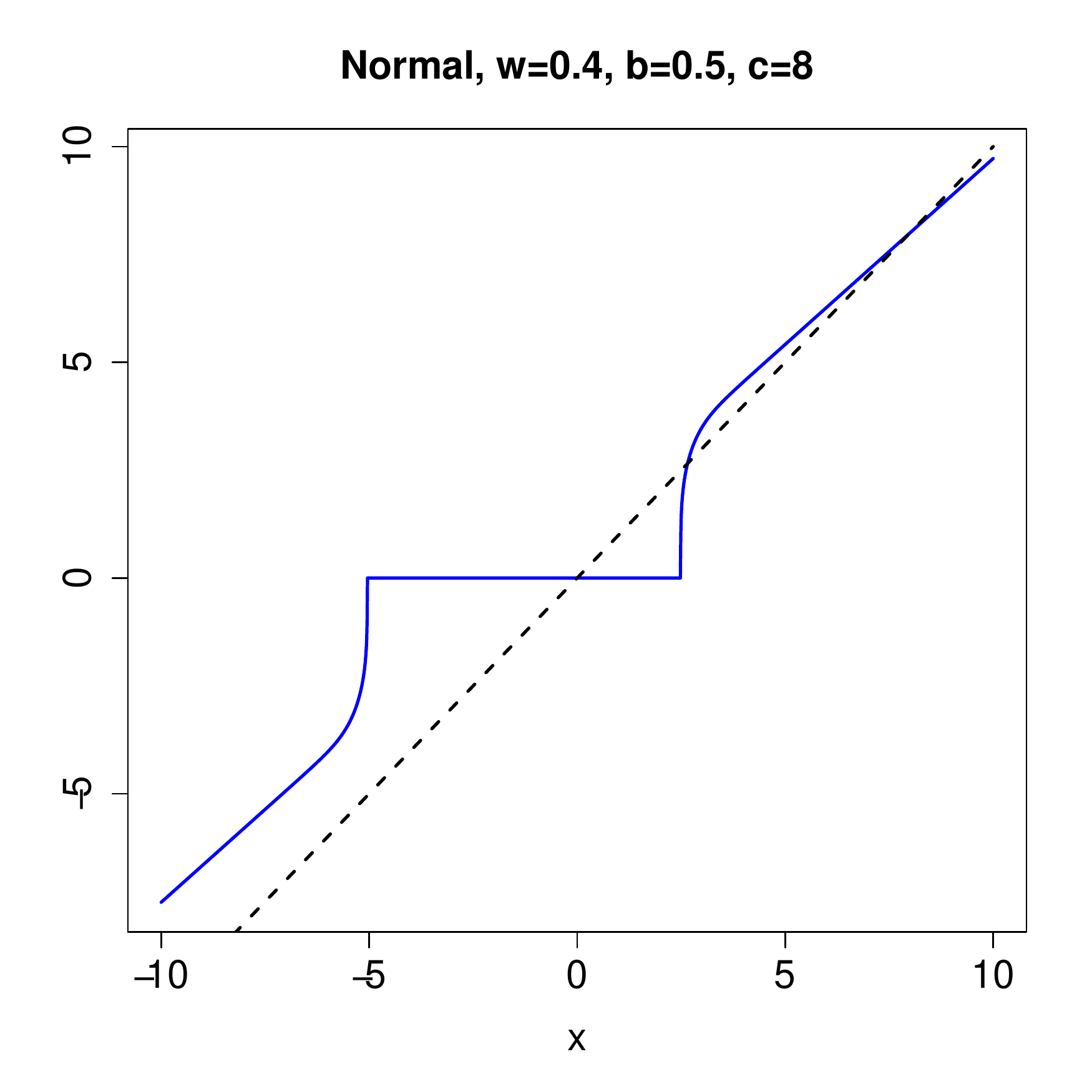}
\caption{Posterior median function for $w=0.4$, $b=0.5$, and
$c=3,8$, where the prior density component is double exponential or
normal with the location parameter $c$.}\label{fig:delta}
\end{figure}

We present some properties regarding the
posterior median below. For the sake of clarity, we set $b=1$ and write
$\gamma(\mu;c)=\gamma(\mu;1,c)$, $\delta(x;w,c)=\delta(x;w,1,c)$,
and $g(x;w,c)=g(x;w,1,c)$. The results can be extended to the general case by rescaling $x$ and $\mu$.
\begin{proposition}\label{prop1}
Assume that there exit $\Lambda,M>0$ such that
\begin{align}\label{eq-lem2}
\sup_{u>M}\left|\frac{d}{du}\log\gamma(u)\right|\leq \Lambda.
\end{align}
The posterior median $\delta(x;w,c)$ satisfies the following properties.
\\(1) $\delta(x;w,c)$ is a nondecreasing function of $x$;
\\(2) Suppose
\begin{equation}\label{eq-prop1}
\begin{split}
\left|\int_{0}^{+\infty}\phi(\mu)\gamma_0(\mu-c)d\mu-\int_{-\infty}^{0}\phi(\mu)\gamma_0(\mu-c)d\mu\right|\leq \frac{1-w}{\sqrt{2\pi}w}.
\end{split}
\end{equation}
Then there exist $t_1:=t_1(w,c)\geq 0$ and $t_2:=t_2(w,c)\geq 0$ such that
\begin{align}
&
\int_{0}^{+\infty}\phi(t_1-\mu)\gamma(\mu;c)d\mu=(1-w)\phi(t_1)/(2w)+g(t_1;c)/2,\label{median1}
\\&
\int_{-\infty}^{0}\phi(-t_2-\mu)\gamma(\mu;c)d\mu=(1-w)\phi(t_2)/(2w)+g(-t_2;c)/2,\label{median2}
\end{align}
and
\begin{align*}
\delta(x;w,c)\begin{cases}
  <0, & \mbox{if }~~x<-t_2,\\
  =0, & \mbox{if }~~-t_2\leq x\leq t_1,\\
  >0, & \mbox{if }~~x>t_1.
\end{cases}
\end{align*}
(3) $|\delta(x;w,c)|\leq |x|\vee |c|$ for any $0\leq w\leq 1$ and $c$;
\\(4) Under (\ref{eq-prop1}), $|\delta(x;w,c)-x|\leq t_1(w,c)\vee t_2(w,c)+c+c_0$ for some constant $c_0>0$.
\end{proposition}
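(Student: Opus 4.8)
The plan is to prove the four claims in order, since the threshold structure in (2) and the shrinkage bound in (4) lean on the monotonicity of (1) and the location bound of (3). For part (1), I would write the posterior c.d.f. as $F(\mu\mid x)=\int_{-\infty}^{\mu}\phi(x-s)\,df(s)\big/\int_{-\infty}^{\infty}\phi(x-s)\,df(s)$, with $f$ the mixture prior (\ref{eq-mix1}), and invoke the monotone likelihood ratio of the Gaussian location family: for $x_1<x_2$, the ratio $\phi(x_2-s)/\phi(x_1-s)=\exp\{(x_2-x_1)(s-(x_1+x_2)/2)\}$ is increasing in $s$. This forces $F(\mu\mid x)$ to be nonincreasing in $x$ for each fixed $\mu$, a property that holds for any mixing measure and is therefore unaffected by the atom at $0$; nonincreasing c.d.f.'s have nondecreasing quantiles, so $\delta(x;w,c)$ is nondecreasing.

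For part (2), the first step is to characterize the event $\{\delta=0\}$. Since the posterior equals $(1-\alpha(x))\delta_0+\alpha(x)h(\cdot\mid x,c)$ with $h$ continuous, the median is zero exactly when the posterior mass on $(0,\infty)$ and on $(-\infty,0)$ are each at most $1/2$. Setting $P_{\pm}(x)=w\int_{\{\pm\mu>0\}}\phi(x-\mu)\gamma(\mu;c)\,d\mu$ and $r(x)=\{P_+(x)-P_-(x)\}/\phi(x)$, an elementary rearrangement shows this is equivalent to $|r(x)|\le 1-w$. Using the tilt $\phi(x-\mu)/\phi(x)=e^{x\mu-\mu^2/2}$ gives $r(x)=w\int_0^{\infty}e^{x\mu-\mu^2/2}\gamma(\mu;c)\,d\mu-w\int_{-\infty}^{0}e^{x\mu-\mu^2/2}\gamma(\mu;c)\,d\mu$, whose derivative $w\int\mu\,e^{x\mu-\mu^2/2}\gamma(\mu;c)\,d\mu$, split by the sign of $\mu$, is strictly positive; hence $r$ is strictly increasing and $\{|r|\le1-w\}$ is an interval $[-t_2,t_1]$ whose endpoints solve $r(t_1)=1-w$ and $r(-t_2)=-(1-w)$, which rearrange into (\ref{median1})--(\ref{median2}). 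Evaluating at $x=0$ shows that (\ref{eq-prop1}) is exactly $|r(0)|\le1-w$, which by monotonicity of $r$ places $0$ in the interval and gives $t_1,t_2\ge0$; for $x>t_1$ (resp.\ $x<-t_2$) one has $r(x)>1-w$ (resp.\ $r(x)<-(1-w)$), so strictly more than half the mass lies on one side of $0$ and the median is strictly positive (resp.\ negative).

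For part (3), since the atom sits at $0\in[-(|x|\vee|c|),|x|\vee|c|]$, it suffices to control the continuous part: I would show the median of $h(\cdot\mid x,c)\propto\phi(x-\mu)\gamma_0(\mu-c)$ lies between $x\wedge c$ and $x\vee c$, whence $\Pr(\mu>|x|\vee|c|\mid x)\le\Pr_h(\mu>|x|\vee|c|)\le1/2$ and symmetrically on the left, giving $|\delta(x;w,c)|\le|x|\vee|c|$. For $x\le c$, substituting $\nu=\mu-x$ reduces $\mathrm{med}\ge x$ to comparing $\phi(\nu)\gamma_0(\nu-d)$ at $\pm\nu$ with $d=c-x\ge0$; symmetry and unimodality of $\gamma_0$ yield $\gamma_0(\nu-d)\ge\gamma_0(\nu+d)$ for $\nu>0$, so the right half carries the larger mass. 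The mirror substitution $\nu=\mu-c$, using unimodality of $\phi$, gives $\mathrm{med}\le c$.

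For part (4) I would split on the value of $x$. On $[-t_2,t_1]$ we have $\delta=0$, so $|\delta-x|=|x|\le t_1\vee t_2$. For the upper deviation, part (3) gives $\delta(x)\le|x|\vee|c|$, hence $\delta(x)-x\le|c|$ for $x\ge0$ (and symmetrically), supplying the $+c$ term. The hard part is the lower deviation $x-\delta(x)$ for large $x$, where (\ref{eq-lem2}) enters: writing $\ell(\mu)=\log h(\mu\mid x,c)=-\tfrac12(x-\mu)^2+\log\gamma_0(\mu-c)+\mathrm{const}$, the bound $|\tfrac{d}{du}\log\gamma_0(u)|\le\Lambda$ for $u>M$ gives $\ell'(\mu)\ge(x-\mu)-\Lambda$, so $h$ increases on $(c+M,\,x-\Lambda)$ and its mass concentrates near $x-\Lambda$; I would turn this into a tail estimate $\Pr_h(\mu<x-c_0)<1/2$ for a suitable $c_0=c_0(\Lambda,M)$ and then verify that the atomic contribution $1-\alpha(x)$ is small enough for large $x$ that $\Pr(\mu<x-c_0\mid x)\le1/2$, i.e.\ $\delta(x)\ge x-c_0$; the intermediate range of $x$ is handled by continuity of $\delta$ together with part (3). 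Collecting the regimes yields $|\delta(x;w,c)-x|\le t_1\vee t_2+c+c_0$. The main obstacle is precisely this large-$x$ lower bound: log-concavity pins the mode of $h$ within $\Lambda$ of $x$, but passing from the mode to the median while controlling the decaying-but-positive atom $1-\alpha(x)$ requires a tail bound that is uniform in $x$, so the delicate point is producing a constant $c_0$ independent of $x$.
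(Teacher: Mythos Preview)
Your treatment of (1)--(3) is correct. For (1) and (3) you follow essentially the paper's route (monotone likelihood ratio of the Gaussian kernel; unimodality/symmetry of $\gamma_0$ and of $\phi$ to trap the median of $h(\cdot\mid x,c)$ between $x$ and $c$). For (2) you actually give a cleaner derivation than the paper: the paper merely checks $\delta(0;w,c)=0$ under (\ref{eq-prop1}) and then appeals to the monotonicity of $\delta$ from (1), whereas your auxiliary function $r(x)=w\{P_+(x)-P_-(x)\}/\phi(x)$ with $r'(x)=w\int|\mu|\,e^{x\mu-\mu^2/2}\gamma(\mu;c)\,d\mu>0$ makes the threshold equations (\ref{median1})--(\ref{median2}) and the strict sign of $\delta$ outside $[-t_2,t_1]$ immediate.

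There is, however, a genuine gap in (4), and it is not quite where you locate it. You correctly split $\Pr(\mu<x-c_0\mid X=x)$ into the continuous tail $\Pr_h(\mu<x-c_0)$ and the atom $1-\alpha(x)$. The first piece is not the hard part: your mode analysis, or the paper's odds comparison via the monotonicity of $u\mapsto\gamma_0(u)e^{\Lambda u}$, yields a $c_0=c_0(\Lambda,M)$ with $\Pr_h(\mu>x-c_0)\ge 3/4$ uniformly in $x$ and $w$. What you leave open is the $w$-dependence of the atom: $1-\alpha(x)=(1-w)\phi(x)/m(x)$ becomes small only once $x$ exceeds some threshold $x_0(w)$, and $x_0(w)\to\infty$ as $w\to 0$. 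Your handling of the intermediate range $[t_1,x_0(w)]$ via ``continuity of $\delta$ together with part (3)'' gives only $x-\delta(x)\le x\le x_0(w)$, which does not yield the stated bound unless you tie $x_0(w)$ back to $t_1(w,c)$.

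The paper supplies exactly this missing link. It introduces $e(w)$ through the posterior-odds equation $O(e(w);w,c)=\dfrac{w}{1-w}\,\dfrac{g(e(w);c)}{\phi(e(w))}=1$, uses (\ref{eq-lem2}) to show $O(x;w,c)\ge 2$ (hence $\alpha(x)\ge 2/3$) once $x>e(w)+a_4$ for an absolute constant $a_4$, and then observes that $e(w)\le t_1(w,c)$: at $x=e(w)$ one has $\alpha=1/2$, so $\Pr(\mu>0\mid x)=\tfrac12\Pr_h(\mu>0)\le\tfrac12$ and thus $\delta(e(w))\le 0$. This pins $x_0(w)\le t_1(w,c)+a_4$, and together with $a_1>M+c$ from the continuous-part estimate gives the form $t_1\vee t_2+c+c_0$. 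Without the step $e(w)\le t_1$ your argument produces a bound of the shape $x_0(w)+c+c_0$, which is not the claim.
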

\begin{remark}
{\rm In the case of double exponential prior with $b=1$ and $c>0$, the threshold levels $t_1$ and $t_2$, and the weights and location parameter are related by
\begin{align*}
&\frac{1}{w}+\beta(t_1;c)=e^c\frac{\Phi(t_1-1-c)}{\phi(t_1-1)}+e^{-c}\frac{\Phi(c-t_1-1)-\Phi(-t_1-1)}{\phi(t_1+1)},\\
&\frac{1}{w}+\beta(-t_2;c)=e^{-c}\frac{\Phi(c+t_2-1)}{\phi(t_2-1)},
\end{align*}
where $\beta(t;c)=g(t;c)/\phi(t)-1.$ See Figure \ref{fig:t}.
}
\end{remark}

\begin{figure}[h]
\centering
\includegraphics[height=6cm,width=6cm]{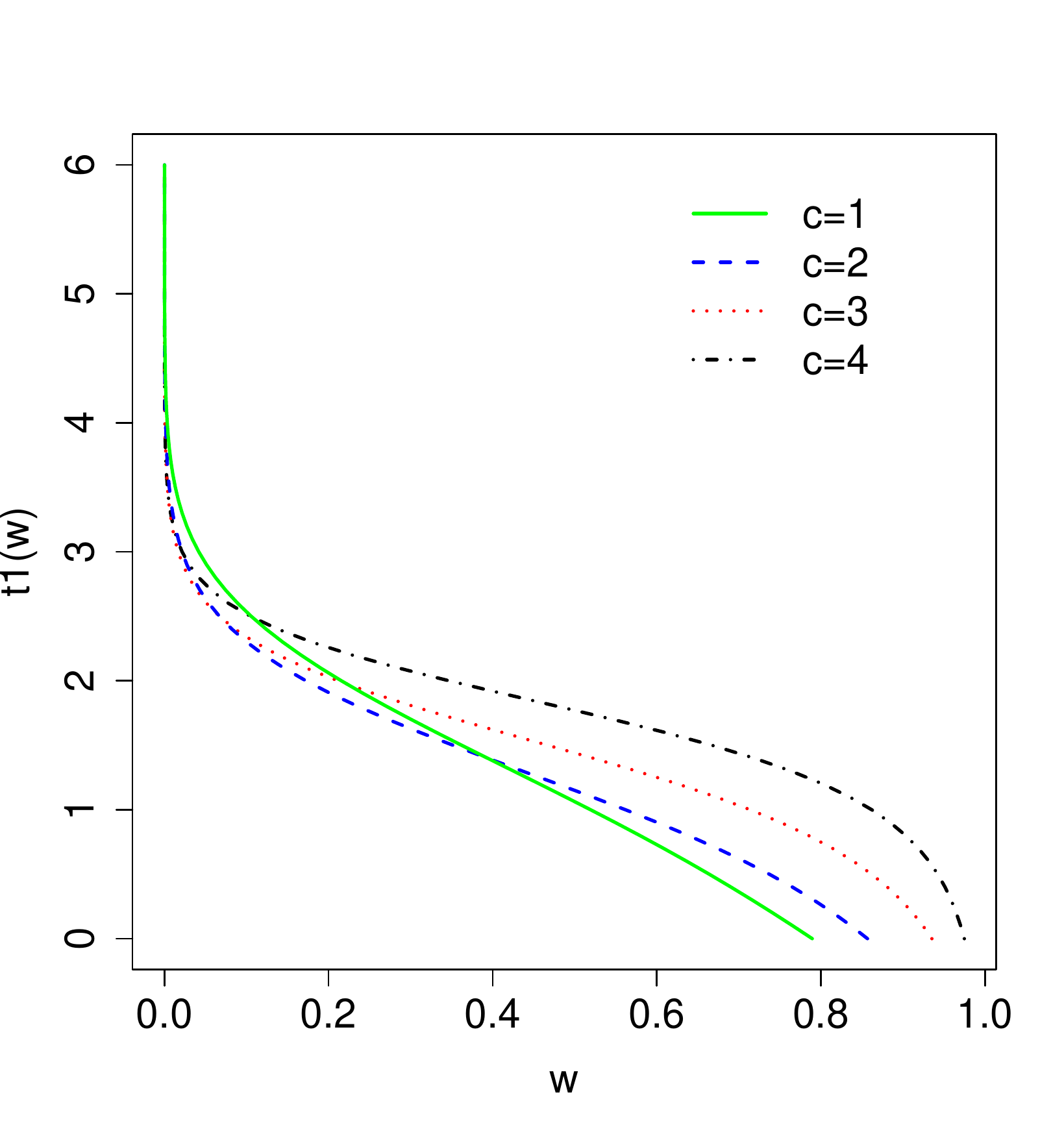}
\includegraphics[height=6cm,width=6cm]{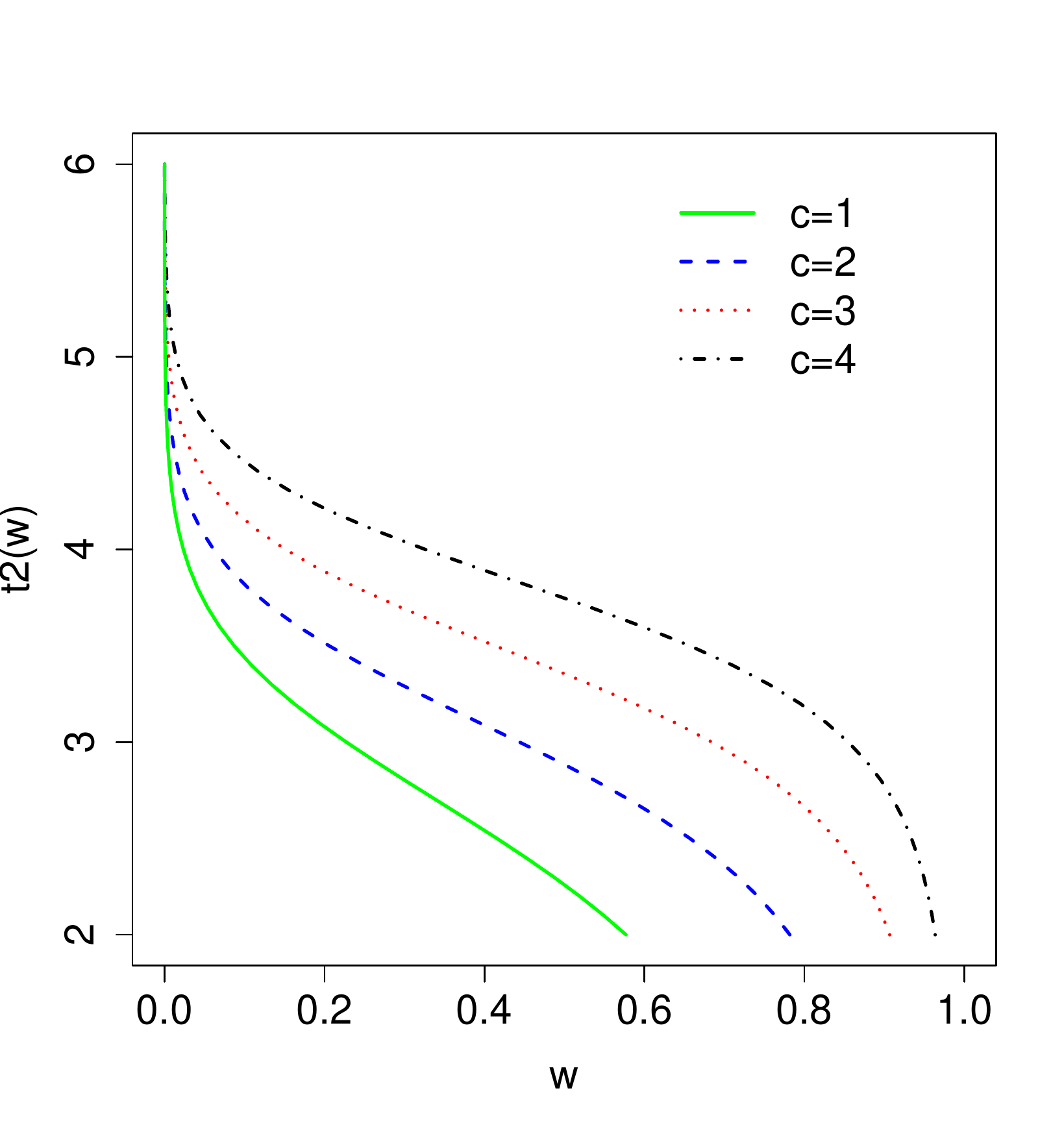}
\caption{The threshold levels $t_1(w;c)$ and $t_2(w;c)$ as functions of non-zero prior mass $w$ for the double exponential density with $b=1$ and $c=1,2,3,4$.}\label{fig:t}
\end{figure}

The results in Proposition \ref{prop1} are applicable to the double
exponential prior with location shift. However, Condition
(\ref{eq-lem2}) requires the tails of $\gamma$ to be exponential or heavier and thus rules out the Gaussian prior. In Section
\ref{closed-form}, we provide the closed-form representations for
$\delta(x;w,b,c)$ when $\gamma$ is double exponential or normal.
Based on the explicit expressions, we obtain the following results
for double exponential and Gaussian priors which reflect their
different tail behaviors.
\begin{lemma}\label{lemma:double}
When $\gamma$ is double exponential, the posterior median
$\delta(x;w,b,c)$ has the following properties:
\\(1) $\delta(c;w,b,c)-c\rightarrow 0$ as $c\rightarrow +\infty$.
\\(2) $\delta(x;w,b,c)-(x-b)\rightarrow 0$ as $x-c\rightarrow
+\infty$ and $x\rightarrow +\infty.$
\\(3) $\delta(x;w,b,c)-(x+b)\rightarrow 0$ as $x-c\rightarrow
-\infty$ and $x\rightarrow -\infty.$
\end{lemma}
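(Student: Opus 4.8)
The plan is to work from the implicit equation characterizing the posterior median. Because the posterior is the two‑point mixture $(1-\alpha(x))\delta_0+\alpha(x)h(\cdot\mid x,b,c)$, for any $x$ at which $\delta(x;w,b,c)>0$ the median is the unique solution of
\begin{align*}
\int_{\delta}^{\infty}\phi(x-\mu)\gamma(\mu;b,c)\,d\mu=\frac{(1-w)\phi(x)}{2w}+\frac{g(x;b,c)}{2},
\end{align*}
which is the analogue of (\ref{median1}) with the lower limit $0$ replaced by $\delta$; the mirror identity, with the integral taken over $(-\infty,\delta]$, governs the regime $\delta<0$. For the double exponential prior this left‑hand integral evaluates in closed form by splitting at $\mu=c$ and completing the square, producing expressions in $\Phi$ — this is exactly the closed‑form representation recorded in Section \ref{closed-form} — and all three claims will follow by feeding the appropriate Mills‑ratio asymptotics ($1-\Phi(t)\sim\phi(t)/t$ as $t\to+\infty$) into it.

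For part (2) I would first observe that, with $x-c\to+\infty$ and $x\to+\infty$, the marginal tail is driven by the exponential term, $g(x;b,c)\sim\tfrac{b}{2}e^{b^2/2-b(x-c)}$, which dominates $\phi(x)\sim e^{-x^2/2}$; hence the right‑hand side is asymptotically $g(x;b,c)/2$. Using Proposition \ref{prop1}(3) (after the rescaling indicated just before Proposition \ref{prop1}) to confine $\delta$, and checking that the continuous mass concentrates near $x-b\gg c$, the left integral reduces, up to exponentially negligible contributions from $\mu<c$, to $\tfrac{b}{2}e^{b^2/2-b(x-c)}\Phi\big((x-b)-\delta\big)$. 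Equating the two sides collapses the transcendental equation to $\Phi\big((x-b)-\delta\big)=\tfrac12+o(1)$, and inverting $\Phi$ near $1/2$ yields $\delta-(x-b)\to0$. Part (3) is the mirror image under the left branch of $\gamma$ and follows verbatim with $x+b$ replacing $x-b$.

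For part (1) I would instead exploit symmetry. When $x=c$ the continuous density $h(\mu\mid c,b,c)\propto\phi(c-\mu)\gamma(\mu;b,c)$ is symmetric about $\mu=c$, so $\int_{c}^{\infty}\phi(c-\mu)\gamma(\mu;b,c)\,d\mu=\tfrac12 g(c;b,c)$, and $\delta=c$ solves the median equation up to the single residual term $(1-w)\phi(c)/(2w)$. Since $g(c;b,c)=g(0;b,0)$ is a positive constant independent of $c$ while $\phi(c)\to0$, a first‑order expansion of the left integral about $\delta=c$ — whose derivative there equals $-\phi(0)\gamma(c;b,c)=-b/(2\sqrt{2\pi})$, a fixed nonzero constant — gives $c-\delta$ of order $(1-w)e^{-c^2/2}/(wb)\to0$, whence $\delta(c;w,b,c)-c\to0$.

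The main obstacle is making these reductions rigorous rather than merely leading‑order. I must (i) verify that $\delta$ actually lands in the region that legitimizes a single exponential branch of the Laplace density ($\delta>c$ in (2), $\delta<c$ in (3), $\delta>0$ near $c$ in (1)), which I would secure by combining the monotonicity of Proposition \ref{prop1}(1) with the crude envelope of Proposition \ref{prop1}(3); and (ii) quantify the $o(1)$ errors — the discarded $\phi(x)$ term, the subdominant branch of $g$, and the mass of $h$ below $c$ — with explicit Mills‑ratio estimates, so that the inversion of $\Phi$ near $1/2$, where $\Phi^{-1}$ is locally Lipschitz, transfers the vanishing of these errors into the stated limits.
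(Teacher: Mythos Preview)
Your proposal is correct and follows essentially the same route as the paper. The paper's proof also relies on the closed-form representations in Section~\ref{closed-form}: for (2) and (3) it plugs into the explicit Case~1/Case~4 formula $\delta=x-b+\Phi^{-1}\bigl(1-\tfrac{m(x;w,b,c)\Phi(x-c-b)}{2wg_+(x;b,c)}\bigr)$ and shows the argument of $\Phi^{-1}$ tends to $1/2$ using exactly the dominance $\phi(x),\,g_+(-x;b,-c)=o\bigl(e^{-b(x-c)+b^2/2}\bigr)$ you identify; your implicit-equation formulation $\Phi((x-b)-\delta)=\tfrac12+o(1)$ is the inverse of the same identity. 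For (1) the paper substitutes $x=c$ into the Case~2 formula and uses $g(c;b,c)=2g_+(c;b,c)$ constant in $c$ together with $m(c;w,b,c)\to wg(c;b,c)$, whereas you arrive at the same conclusion via symmetry of $h(\cdot\mid c,b,c)$ and a first-order expansion; the two arguments are equivalent, yours being slightly more conceptual in that it does not require first identifying which closed-form case applies.
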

Property (1) shows that there is no shrinkage effect for the posterior median when $x=c$;
Properties (2)-(3) suggest that the posterior median
becomes a shrinkage rule as $|x|\rightarrow +\infty$. In other words, the effect of the atom at zero and the impact of $c$ are both negligible as $|x|\rightarrow+\infty.$
\begin{lemma}\label{lemma:normal}
When $\gamma$ is normal, we have
$$\delta(x;w,b,c)-\frac{
x/b^2+c}{1/b^2+1}\rightarrow 0,\quad \text{as}\quad |x|\rightarrow +\infty.$$
\end{lemma}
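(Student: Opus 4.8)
The plan is to exploit the normal--normal conjugacy that makes the slab component of the posterior explicit, and then show that the atom at zero becomes negligible as $|x|\to+\infty$. Under the normal prior $\gamma(\mu;b,c)=\frac{b}{\sqrt{2\pi}}\exp\{-b^2(\mu-c)^2/2\}$ the observation model $X=\mu+\epsilon$, $\epsilon\sim N(0,1)$, is conjugate, so the slab posterior $h(\mu|x,b,c)$ is itself normal, namely $N(m,v)$ with mean $m=\frac{x/b^2+c}{1/b^2+1}$ and variance $v=\frac{1}{1+b^2}$, while the slab marginal $g(x;b,c)$ is the $N(c,\,1+1/b^2)$ density. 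The target of the lemma is precisely this conditional mean $m$, so it suffices to show that the median $\delta(x;w,b,c)$ of the mixture $(1-\alpha(x))\delta_0+\alpha(x)N(m,v)$ converges to $m$.

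First I would establish that $\alpha(x)\to 1$. Since $1-\alpha(x)=(1-w)\phi(x)/\{(1-w)\phi(x)+w g(x;b,c)\}$, it is enough to show $\phi(x)/g(x;b,c)\to 0$. Comparing the two Gaussians, the leading exponent is
\begin{align*}
-\frac{x^2}{2}+\frac{(x-c)^2}{2(1+1/b^2)}\sim -\frac{x^2}{2}\cdot\frac{1/b^2}{1+1/b^2}\longrightarrow -\infty,
\end{align*}
because the slab marginal has variance $1+1/b^2>1$ and hence strictly heavier tails than $\phi$. This yields $\phi(x)/g(x;b,c)\to 0$ and therefore $\alpha(x)\to 1$ as $|x|\to+\infty$.

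Next I would solve the median equation directly. For $x\to+\infty$ we have $m\to+\infty$, and the median eventually lies in the region $\delta>0$, where the posterior CDF at $\delta$ equals $(1-\alpha(x))+\alpha(x)\Phi\big((\delta-m)/\sqrt{v}\big)$; setting it to $1/2$ gives the closed form
\begin{align*}
\delta(x;w,b,c)=m+\sqrt{v}\,\Phi^{-1}\!\Big(1-\tfrac{1}{2\alpha(x)}\Big).
\end{align*}
As $\alpha(x)\to 1$ the argument tends to $1/2$, so the correction $\sqrt{v}\,\Phi^{-1}(1-\tfrac{1}{2\alpha(x)})\to\sqrt{v}\,\Phi^{-1}(1/2)=0$, whence $\delta(x;w,b,c)-m\to 0$. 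The case $x\to-\infty$ is symmetric: then $m\to-\infty$, the median lies in $\delta<0$, the CDF equation reads $\alpha(x)\Phi\big((\delta-m)/\sqrt{v}\big)=1/2$, and the same passage to the limit gives $\delta-m\to 0$.

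The only real bookkeeping, and the step I expect to be the mild obstacle, is handling the thresholding region: for moderate $x$ the median is pinned at $0$ by Proposition \ref{prop1}, so I must argue that once $|x|$ is large the median has left the thresholding interval and sits in the appropriate one-sided branch before applying the closed form. This holds because $t_1,t_2$ are fixed for fixed $(w,b,c)$ while $m$ grows linearly in $x$ and the correction term stays bounded, so the sign-determining inequalities of Proposition \ref{prop1} are eventually satisfied and $\delta$ has the same sign as $m$. Since the explicit expression for $\delta(x;w,b,c)$ is already recorded in Section \ref{closed-form}, the cleanest write-up would substitute that expression and pass to the limit, with the tail estimate $\alpha(x)\to 1$ doing the essential work.
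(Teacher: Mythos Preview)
Your proposal is correct and follows essentially the same route as the paper: the paper's proof simply invokes the closed-form expressions from Section \ref{closed-form} together with the tail fact $\phi_{0,\sigma^2}(x)/\phi_{c,1/b^2+\sigma^2}(x)\to 0$ (equivalent to your $\alpha(x)\to 1$), and then reads off that the $\Phi^{-1}$ correction term vanishes. Your formula $\delta=m+\sqrt{v}\,\Phi^{-1}\!\big(1-\tfrac{1}{2\alpha(x)}\big)$ for the positive branch and $\delta=m+\sqrt{v}\,\Phi^{-1}\!\big(\tfrac{1}{2\alpha(x)}\big)$ for the negative branch coincide exactly with the paper's Case~1 and Case~3 expressions after rewriting $\alpha=wg/m$, so the write-up you sketch at the end is precisely the paper's argument.
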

We note that $(x/b^2+c)(1/b^2+1)$ is the posterior mean when $w=1$.
Intuitively, when $c$ is close to the center of the nonzero
components, $\delta(x;w,b,c)$ enjoys the property by shrinking $x$
toward $c$, which may lead to further risk reduction as compared to
the thresholding rules considered in JS (2004).

We would like to point out that the posterior median
resulting from the prior with location-shift density component
defines a new class of thresholding rules i.e., $\delta(x;w,b,c)$.
By (2) of Proposition \ref{prop1}, there exist two positive numbers $t_1$ and $t_2$ such
that $\delta(x;w,b,c)=0$ if and only if $-t_2\leq x\leq t_1$. Also
$\delta(x;w,b,c)$ is strictly increasing for $x>t_1$ and $x<-t_2$.
Thus, the inverse function $\delta^{-1}(t;w,b,c)$ is defined for
any $t\neq 0$. Define the penalty function,
\begin{align}
\mathcal{P}(\theta;w,b,c)=\begin{cases}
\int^{\theta}_{0}(\delta^{-1}(t;w,b,c)-t)dt \quad
& \text{if} \quad  \theta \neq 0, \\
0 \quad & \text{if} \quad  \theta=0.
\end{cases}
\end{align}

\begin{figure}[h]
\centering
\includegraphics[height=6cm,width=6cm]{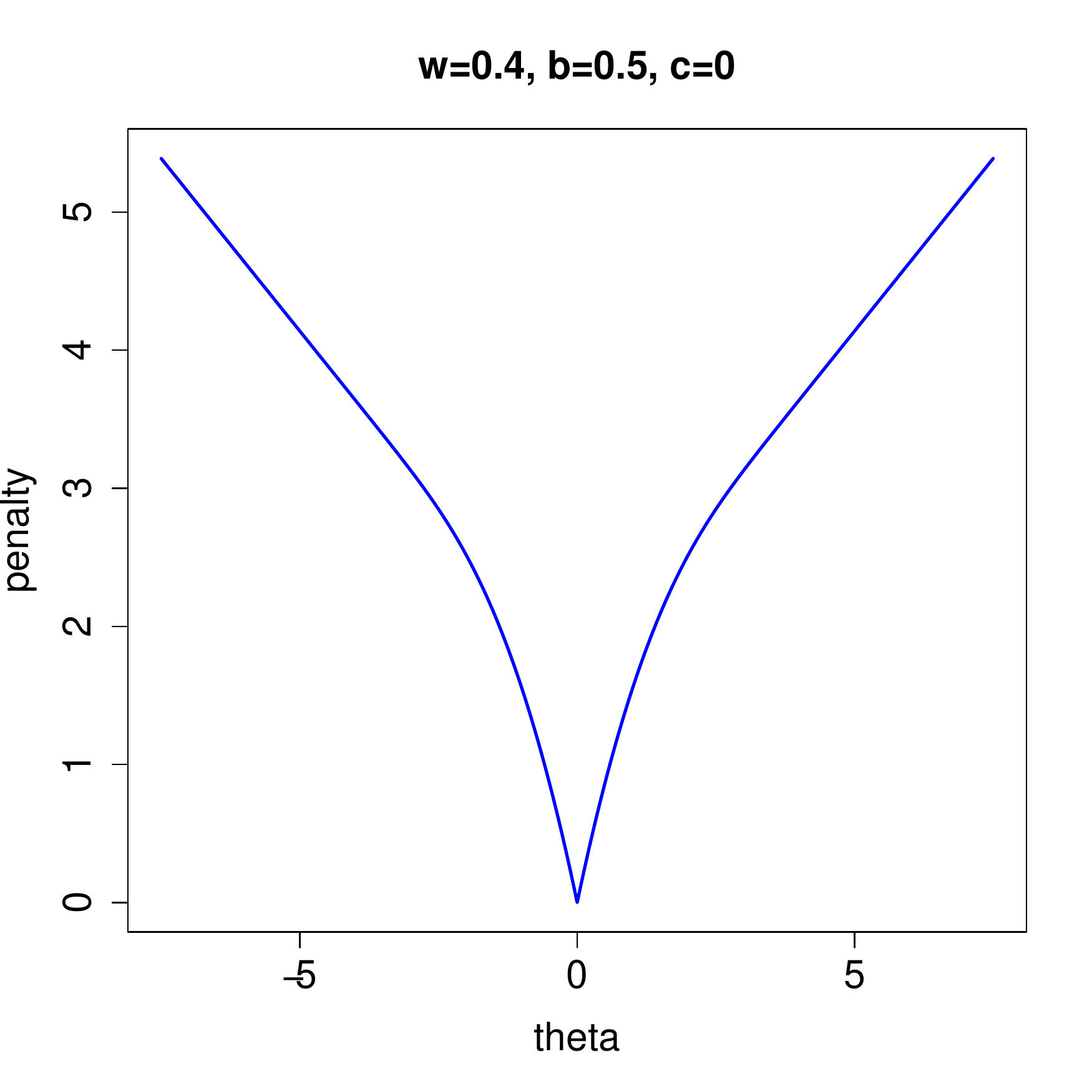}
\includegraphics[height=6cm,width=6cm]{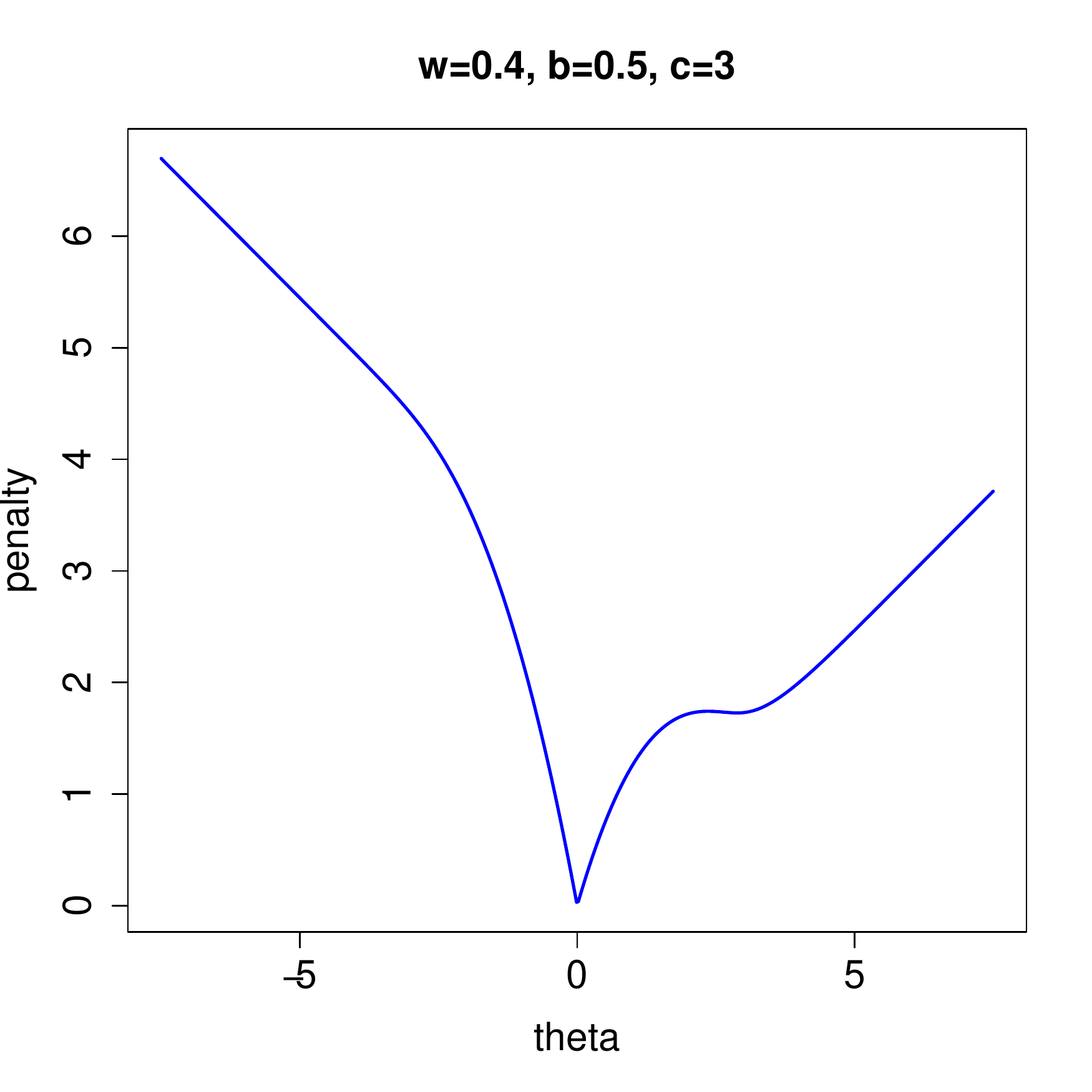}
\caption{Penalty function for $w=0.4$, $b=0.5$ and
$c=0,3$, where the prior density component is double
exponential.}\label{fig:pen}
\end{figure}

Consider the optimization problem
\begin{align}\label{pen}
\hat{\theta}=\hat{\theta}(x;w,b,c):=\arg\min_{\theta}\frac{1}{2}(x-\theta)^2+\mathcal{P}(\theta;w,b,c).
\end{align}
In the appendix, we prove that the solution to (\ref{pen}) is $\delta(x;w,b,c)$.
\begin{lemma}\label{lemma:con}
$\hat{\theta}(x;w,b,c)=\delta(x;w,b,c)$.
\end{lemma}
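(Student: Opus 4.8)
The plan is to treat \eqref{pen} as a one–dimensional convex optimization and to characterize its minimizer through a first–order (subgradient) condition, showing that the unique stationary point coincides with $\delta(x;w,b,c)$. Throughout I fix $(w,b,c)$ and abbreviate $\delta(\cdot)=\delta(\cdot;w,b,c)$, $\mathcal{P}(\cdot)=\mathcal{P}(\cdot;w,b,c)$, and write $Q(\theta)=\frac12(x-\theta)^2+\mathcal{P}(\theta)$ for the objective in \eqref{pen}.

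First I would assemble the structural facts about $\delta$. Part (2) of Proposition \ref{prop1} gives that $\delta$ vanishes exactly on $[-t_2,t_1]$, and the discussion following Lemma \ref{lemma:normal} records that $\delta$ is strictly increasing on $\{x>t_1\}$ and on $\{x<-t_2\}$, while part (1) gives that $\delta$ is nondecreasing on $\mathbb{R}$. Hence $\delta^{-1}$ is a well-defined nondecreasing function on $\mathbb{R}\setminus\{0\}$ whose one-sided limits at the origin are $\delta^{-1}(0^+)=t_1$ and $\delta^{-1}(0^-)=-t_2$. Since $\delta^{-1}(t)-t$ is continuous away from $0$ and has finite one-sided limits there, the defining integral of $\mathcal{P}$ is finite, and by the fundamental theorem of calculus $\mathcal{P}$ is differentiable for $\theta\neq 0$ with $\mathcal{P}'(\theta)=\delta^{-1}(\theta)-\theta$.

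Next I would establish convexity and coercivity of $Q$. For $\theta\neq 0$ one computes $Q'(\theta)=(\theta-x)+\mathcal{P}'(\theta)=\delta^{-1}(\theta)-x$, which is nondecreasing in $\theta$ because $\delta^{-1}$ is. The only possible failure of convexity is at the kink $\theta=0$, but the one-sided derivatives there satisfy $Q'(0^-)=-t_2-x\le t_1-x=Q'(0^+)$ (using $-t_2\le 0\le t_1$), so $Q$ has a monotone subgradient and is therefore convex, with $\partial Q(0)=[-t_2-x,\,t_1-x]$. Moreover $\delta^{-1}(\theta)\to\pm\infty$ as $\theta\to\pm\infty$, so $Q'(\theta)\to\pm\infty$ and $Q$ is coercive; hence $Q$ attains a unique global minimizer $\hat\theta$, characterized by $0\in\partial Q(\hat\theta)$.

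Finally I would verify that $\theta=\delta(x)$ meets this optimality condition in each regime. If $-t_2\le x\le t_1$ then $\delta(x)=0$ and $0\in\partial Q(0)=[-t_2-x,\,t_1-x]$ precisely because $-t_2\le x\le t_1$. If $x>t_1$ then $\delta(x)>0$ and $Q'(\delta(x))=\delta^{-1}(\delta(x))-x=x-x=0$; the case $x<-t_2$ is identical with signs reversed. By uniqueness this forces $\hat\theta(x;w,b,c)=\delta(x;w,b,c)$. The step I expect to require the most care is the treatment of the non-smooth point $\theta=0$: one must justify the subdifferential computation there by identifying $\delta^{-1}(0^+)=t_1$ and $\delta^{-1}(0^-)=-t_2$ and confirming the upward jump $-t_2\le t_1$, which is exactly what makes the thresholding interval of the penalized estimator coincide with that of the posterior median.
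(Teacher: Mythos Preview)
Your proof is correct and follows essentially the same approach as the paper: both compute $Q'(\theta)=\delta^{-1}(\theta)-x$ for $\theta\neq 0$, identify the one-sided limits $\delta^{-1}(0^+)=t_1$ and $\delta^{-1}(0^-)=-t_2$, and then verify the first-order condition in each of the three regimes. Your treatment is somewhat more explicit than the paper's, framing the argument via convexity and the subdifferential $\partial Q(0)=[-t_2-x,\,t_1-x]$, whereas the paper argues directly from the sign of the derivative near the origin; but the underlying computation and logic are the same.
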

Figure \ref{fig:pen} plots the penalty function
$\mathcal{P}(\theta;w,b,c)$, where $\delta(x;w,b,c)$ is the
posterior median associated with the double exponential prior with
$w=0.4$, $b=0.5$ and $c=0,3$. Compared to commonly used penalties, the penalty function here is nonstandard in the sense that it is asymmetric about zero, and is non-monotonic over $[0,+\infty)$.
It is of interest to study the penalized regression problem based on the new penalty function
$\mathcal{P}(\theta;w,b,c)$, and employ the empirical Bayes method
to select the tuning parameters $(w,b,c)$. We leave this topic to future research.

\begin{remark}
{\rm We remark that the relationship between penalty function and its solution in location model as described in (\ref{pen}) holds for commonly used penalty functions such as
Lasso, SCAD \citep{fan2001variable} and MCP \citep{zhang2010nearly}. }
\end{remark}

\begin{remark}
{\rm Besides the posterior median, a general class of Bayes
thresholding rule which combines the soft and hard thresholding
rules can be obtained by minimizing a mixture loss combining the
$l_p$ loss (for $p>0$) and the $l_0$ loss for the posterior
distribution. See more details in \cite{raykar2011empirical}. }
\end{remark}

\subsection{Finite mixture priors}\label{sec:mix}
A natural extension to pursue here is to replace the density component $\gamma$ by a finite mixture distribution, which can be used to model the cluster structure of the nonzero means
[see \citet{muralidharan2010empirical}]. Specifically, one can model $f_s$ in \eqref{eq-mix1} as a finite mixture distribution and
consider
the prior of the form
$$f(\mu,\theta)=w_0\delta_0(\mu)+\sum^{d}_{j=1}w_j\gamma(\mu;b_j,c_j),$$
with $w_j\geq 0$ and $\sum_{j=0}^{d}w_j=1$, and
$\theta=(w_0,w_1,b_1,c_1,\dots,w_{d},b_{d},c_{d})$. Let
$$L(d,\theta)=\sum_{i=1}^{p}\log\left\{w_0\phi(X_i)+\sum^{d}_{j=1}w_jg(X_i;b_j,c_j)\right\}$$
be the log-marginal likelihood. In this case,
the MMLE is defined as
\begin{align}\label{eq-mix2}
\hat{\theta}:=(\hat{w}_0,\hat{w}_1,\hat{b}_1,\hat{c}_1,\dots,\hat{w}_{d},\hat{b}_{d},\hat{c}_{d})=\arg\max_{\theta} L(d;\theta),
\end{align}
subject to the constraints that $0\leq w_j\leq 1$,
$\sum^{d}_{j=0}w_j=1$, $b_j\geq 0$, and $-\max_{1\leq i\leq
p}|X_i|\leq c_j\leq \max_{1\leq i\leq p}|X_i|$ for $1\leq j\leq d$.
As before, the solution to (\ref{eq-mix2}) can be
obtained using the familiar EM algorithm.
A sparse estimator for $\mu_i$ is given by the posterior median
$\delta(X_i,\hat{\theta})$, which is again a thresholding rule and
has the multi-direction shrinkage property in the sense that
it pulls $X_j$ toward one of the data driven locations $\hat{c}_j$
when $X_j$ is away from zero, see Figure \ref{fig:delta-normal2}.
\begin{figure}[h]
\centering
\includegraphics[height=6cm,width=6cm]{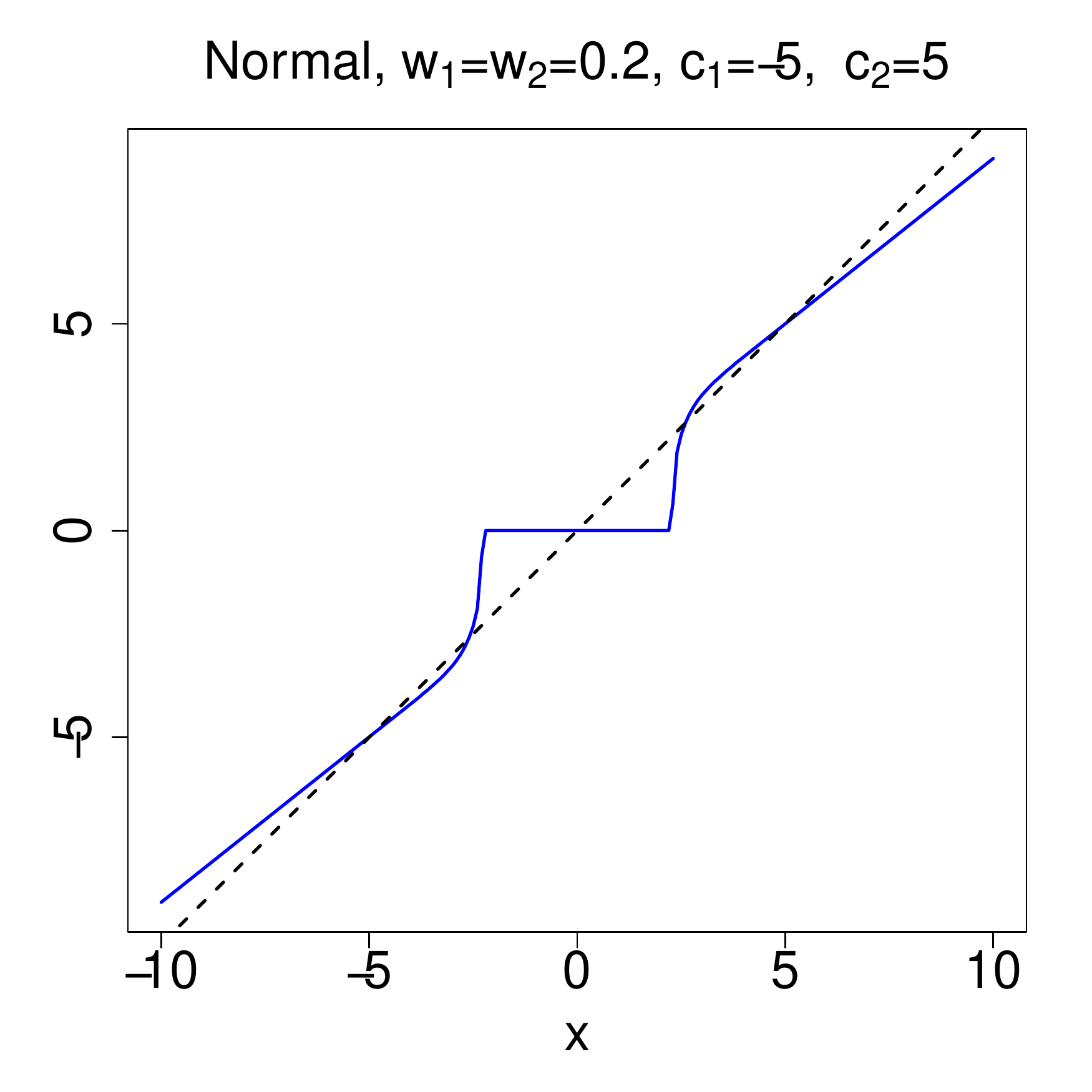}
\includegraphics[height=6cm,width=6cm]{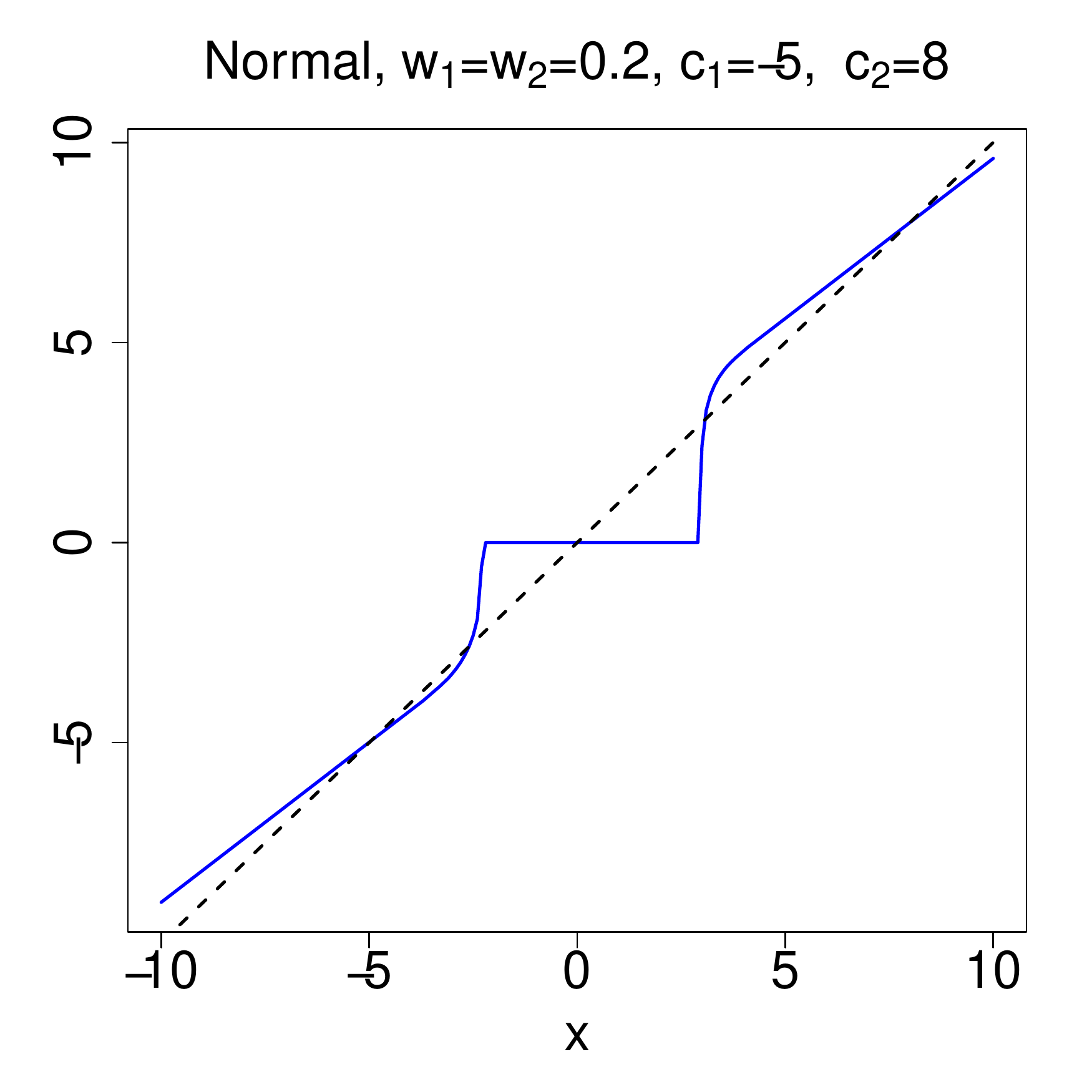}
\caption{Posterior median function for $d=2$, where the prior density component is normal mixture with the
location parameters $c_1$ and $c_2$.}\label{fig:delta-normal2}
\end{figure}

In practice, the number of mixture components is often unknown. In the sparse regime, $d$ is typically chosen as a relatively small number
to model the cluster structure of the nonzero entries. For example,
with $d=2$ and the constraint that $c_1<0<c_2$, the two density
components are designed to model the negative and positive signals
separately. Alternatively one can choose the number of clusters using the Bayesian information criterion [see e.g.
\citet{fraley2002model}]. Specially, the choice of $\hat{d}$ for $d$
maximizes
\begin{equation}\label{bic}
L(d;\hat{\theta})-3\log(p)d/2,
\end{equation}
over $1\leq d\leq M_0$, where $M_0$ is a pre-chosen upper bound. \citet{leroux1992consistent} proved that model selection based on a comparison of
BIC values does not underestimate the number of components; \citet{keribin1998consistent} and \citet{gassiat2013consistent} showed that BIC is consistent for selecting the number of
components.


\subsection{The posterior mean and SURE}\label{sec:mean}
We have so far focused on the posterior median which is a
thresholding rule. In this subsection, we turn to the posterior
mean which is no longer a thresholding rule but enjoys the same
multi-direction shrinkage property as the posterior median does. We
shall follow the setup in Section \ref{sec:mix}. Write
$g_j(x)=g(x;b_j,c_j)$ for $0\leq j\leq d$ with $g_0(x)=\phi(x)$. Let
$m(x)=\sum^{d}_{j=0}w_jg_j(x)$ and
$\rho_j(x)=w_jg_j(x)/m(x)$ for $0\leq j\leq d$. By Tweedie's formula, the posterior mean can be written as
\begin{align*}
\zeta(x,\theta)=x+\nabla \log m(x),
\end{align*}
where $\nabla=\partial/\partial x$.


Below we briefly discuss Stein's unbiased risk estimator (SURE; \citet{stein1981estimation}) for the posterior mean.
A function is said to be almost
differentiable if it can be represented by well-defined integral of
its almost-everywhere derivative. The following result was obtained
by \citet{george1986minimax} based on Stein's lemma.
\begin{theorem}\label{thm1}
Suppose $g_j$ and $\nabla g_j$ are both almost differentiable. If
\begin{align}
& \E |\nabla^2 g_j(X_i)/g_j(X_i)|<\infty,\quad \E (\nabla \log
g_j(X_i))^2<\infty, \label{eq2}
\end{align}
for $0\leq j\leq d$ and $1\leq i\leq p$. Then the squared error risk can be
expressed as
\begin{align*}
R(\theta) :=  \E\sum^{p}_{i=1}(\mu_i-\zeta(X_i,\theta))^2=p-\E
\sum^{p}_{i=1}D(X_i),
\end{align*}
where $D(X_i)=\sum^{d}_{j=0}\rho_j(X_i)D_j(X_i)-\sum_{0\leq j<k\leq
d}\rho_j(X_i)\rho_k(X_i)(\zeta_j(X_i)-\zeta_k(X_i))^2$
with $D_j(X_i)=(\nabla \log g_j(X_i))^2-2\nabla^2 g_j(X_i)/g_j(X_i)$
and $\zeta_j(X_i)=X_i+\nabla \log m_j(X_i)$.
\end{theorem}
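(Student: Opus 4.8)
The plan is to reduce to a single coordinate and then apply Stein's lemma. Because the $X_i$ are independent with fixed means $\mu_i$ and $\zeta(\cdot,\theta)$ acts on each coordinate only through the scalar marginal $m$, the risk decouples as $R(\theta)=\sum_{i=1}^p\E(\mu_i-\zeta(X_i,\theta))^2$, so it suffices to prove $\E(\mu-\zeta(X,\theta))^2=1-\E D(X)$ for a single $X\sim N(\mu,1)$. Writing $\zeta(X,\theta)=X+h(X)$ with $h=\nabla\log m=m'/m$ and setting $\epsilon:=X-\mu$, I would expand $(\mu-\zeta)^2=(\epsilon+h(X))^2=\epsilon^2+2\epsilon h(X)+h(X)^2$. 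Taking expectations gives $\E\epsilon^2=1$, while Stein's identity $\E[\epsilon h(X)]=\E[h'(X)]$ converts the cross term into a derivative, so that $\E(\mu-\zeta)^2=1+\E[2h'(X)+h(X)^2]$. The remaining task is the algebraic identification $2h'+h^2=-D$.

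The second step is pure algebra. Since $h=m'/m$ we have $h'=m''/m-(m'/m)^2$, hence $2h'+h^2=2m''/m-(m'/m)^2$. Decomposing $m=\sum_{j=0}^d w_jg_j$ and introducing the posterior weights $\rho_j=w_jg_j/m$ (so $\sum_j\rho_j=1$) together with the component scores $s_j:=\nabla\log g_j=g_j'/g_j$, differentiation yields $m'/m=\sum_j\rho_js_j$ and, using $g_j''/g_j=s_j'+s_j^2$, also $m''/m=\sum_j\rho_j(s_j'+s_j^2)$. Recalling $\zeta_j=X+s_j$ (so $\zeta_j-\zeta_k=s_j-s_k$) and $D_j=s_j^2-2g_j''/g_j=-s_j^2-2s_j'$, the claim reduces to the weighted-variance identity $\sum_j\rho_js_j^2-(\sum_j\rho_js_j)^2=\sum_{0\le j<k\le d}\rho_j\rho_k(s_j-s_k)^2$, which is valid because $\sum_j\rho_j=1$. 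Matching the pieces then gives $2m''/m-(m'/m)^2=-D(X)$ exactly, and summing over $i$ completes the formula.

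I expect the main obstacle to lie not in the algebra but in the rigorous justification of Stein's identity, which is precisely the role of the almost-differentiability hypotheses and Conditions (\ref{eq2}). Using $\E[\epsilon h(X)]=\E[h'(X)]$ requires $h$ to be absolutely continuous with $\E|h'(X)|<\infty$, the boundary term $h(x)\phi(x-\mu)$ to vanish as $x\to\pm\infty$, and $\E[h(X)^2]<\infty$ before expectation and differentiation may be interchanged. I would establish these from the component-wise bounds: since $\rho_j\in[0,1]$ with $\sum_j\rho_j=1$, Jensen's inequality gives $h^2=(\sum_j\rho_js_j)^2\le\sum_j\rho_js_j^2\le\sum_js_j^2$ and $|m''/m|\le\sum_j|g_j''/g_j|$, so that $\E[h(X)^2]$ and $\E|h'(X)|=\E|m''/m-h^2|$ are finite once $\E(\nabla\log g_j(X))^2<\infty$ and $\E|\nabla^2 g_j(X)/g_j(X)|<\infty$ hold for every $j$. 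The decay of the boundary terms follows from the same integrability together with the almost-differentiability of $g_j$ and $\nabla g_j$; this is exactly the Stein-type lemma established by \citet{george1986minimax}, which I would invoke to close the argument.
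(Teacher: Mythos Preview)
Your proposal is correct and aligns with the paper's approach: the paper does not give its own proof but simply attributes the result to \citet{george1986minimax} ``based on Stein's lemma,'' and your argument is precisely the detailed working-out of that reference---Stein's identity for the cross term, followed by the mixture decomposition $2m''/m-(m'/m)^2=-D$ via the weighted-variance identity. You have also correctly read $\zeta_j=X+\nabla\log g_j$ (the paper's ``$\nabla\log m_j$'' is a typographical slip).
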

Clearly, $\hat{R}(\theta) = p-
\sum^{p}_{i=1}D(X_i)$ is an unbiased estimator of the risk $R(\theta)$, which we shall refer to as SURE henceforth.
Note that $\sum^{p}_{i=1}D(X_i)$ is an unbiased estimator of the amount
of risk reduction offered by the posterior mean over the MLE
$\mathbf{X}$. When the prior is a normal mixture, the posterior mean
has the form of
\begin{align*}
\zeta(X_i,\theta)=\sum^{d}_{j=0}\rho_j(X_i)\zeta_j(X_i,\theta),\quad
\zeta_j(X_i,\theta)=\frac{X_i/b_j^2+c_j}{1/b_j^2+1},
\end{align*}
and
\begin{align*}
D_j(X_i,\theta)=\frac{2}{1/b_j^2+1}-\frac{(X_i-c_j)^2}{(1/b_j^2+1)^2},
\end{align*}
where $c_0=0$ and $b_0=\infty.$ Recall that $\rho_j(X_i)$ is the posterior probability
that $X_i$ is from the $j$th component of the mixture model. When $\rho_j(X_j)\gg \rho_k(X_i)$ for $k\neq j$, $\zeta_j(X_i,\theta)$ dominates in $\zeta(X_i,\theta)$ and thus $X_i$ is shrunk toward $c_j$.

As a consequence of Theorem \ref{thm1}, we obtain an explicit expression for $D(X_i,\theta)$.
\begin{corollary}
When the prior follows a normal mixture distribution, the unbiased estimator for the risk reduction is given by
\begin{align*}
D(X_i,\theta)=&\sum^{d}_{j=0}\rho_j(X_i)\left\{\frac{2}{1/b_j^2+1}-\frac{(X_i-c_j)^2}{(1/b_j^2+1)^2}\right\}
\\&-\sum_{0\leq j<k\leq
d}\rho_j(X_i)\rho_k(X_i)\left(\frac{X_i/b_j^2+c_j}{1/b_j^2+1}-\frac{X_i/b_k^2+c_k}{1/b_k^2+1}\right)^2.
\end{align*}
\end{corollary}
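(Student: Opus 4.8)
The plan is to specialize the general risk-reduction formula of Theorem~\ref{thm1} to the Gaussian case, where every computation collapses because each marginal mixture component is again Gaussian. The key observation is that $\gamma(\cdot;b_j,c_j)$ is the $N(c_j,1/b_j^2)$ density and $\phi$ is the $N(0,1)$ density, so their convolution $g_j$ is the $N(c_j,\,1+1/b_j^2)$ density. Writing $\sigma_j^2=1+1/b_j^2$ (with the convention $\sigma_0^2=1$, $c_0=0$ arising from $b_0=\infty$ and $g_0=\phi$), we have $\log g_j(x)=\mathrm{const}-(x-c_j)^2/(2\sigma_j^2)$, a quadratic in $x$, which makes all the log-derivatives appearing in Theorem~\ref{thm1} affine or constant and hence trivial to evaluate.

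First I would compute the two ingredients that feed into the theorem. Differentiating, $\nabla\log g_j(x)=-(x-c_j)/\sigma_j^2$, so the posterior-mean component is $\zeta_j(x)=x+\nabla\log g_j(x)=\bigl((\sigma_j^2-1)x+c_j\bigr)/\sigma_j^2=(x/b_j^2+c_j)/(1/b_j^2+1)$, matching the stated form. For $D_j=(\nabla\log g_j)^2-2\nabla^2 g_j/g_j$, I would invoke the elementary identity $\nabla^2 g_j/g_j=\nabla^2\log g_j+(\nabla\log g_j)^2$ to rewrite $D_j=-(\nabla\log g_j)^2-2\nabla^2\log g_j$. Since $\nabla^2\log g_j=-1/\sigma_j^2$ is constant, this yields $D_j(x)=2/\sigma_j^2-(x-c_j)^2/\sigma_j^4=2/(1/b_j^2+1)-(x-c_j)^2/(1/b_j^2+1)^2$.

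Finally I would substitute these closed forms for $\zeta_j$ and $D_j$ into the representation $D(X_i)=\sum_{j=0}^d\rho_j(X_i)D_j(X_i)-\sum_{0\le j<k\le d}\rho_j(X_i)\rho_k(X_i)(\zeta_j(X_i)-\zeta_k(X_i))^2$ of Theorem~\ref{thm1}, which reproduces the claimed identity verbatim. There is no genuine obstacle here: the only hypothesis to verify is the integrability condition \eqref{eq2}, and this is immediate because $g_j$ is Gaussian, so $\nabla\log g_j$ is linear and $\nabla^2 g_j/g_j$ is quadratic in $X_i$, both possessing finite Gaussian moments. To the extent that any care is needed, it is purely bookkeeping the $j=0$ term correctly through the $b_0=\infty$, $c_0=0$ convention so that the formula degenerates to the usual $D_0(x)=2-x^2$ for the unshrunk coordinate.
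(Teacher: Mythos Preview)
Your proposal is correct and follows essentially the same route as the paper: compute $\zeta_j$ and $D_j$ explicitly for the Gaussian case (the paper records these formulas just before the corollary) and substitute them into the general expression for $D(X_i)$ from Theorem~\ref{thm1}. Your extra step of rewriting $D_j=(\nabla\log g_j)^2-2\nabla^2 g_j/g_j$ as $-(\nabla\log g_j)^2-2\nabla^2\log g_j$ via the identity $\nabla^2 g_j/g_j=\nabla^2\log g_j+(\nabla\log g_j)^2$ is a clean way to reach $D_j(x)=2/\sigma_j^2-(x-c_j)^2/\sigma_j^4$, and your check of the integrability hypothesis \eqref{eq2} is a nice addition the paper leaves implicit.
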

The first term in $D(X_i,\theta)$ measures the goodness of fit of the mixture model to the data, while the second term penalizes the pairwise distance between any two posterior means (with respect to the prior $\gamma_j$) weighted by the corresponding posterior probabilities $\rho_j$ and $\rho_k$.
In fact, maximizing the objective function $\sum_{i=1}^{p}D(X_i,\theta)$ results in an estimate for the hyperparameters $\theta$, i.e.,
\begin{align}\label{eq-stein}
\hat{\theta}=\arg\max_{\theta}\sum_{i=1}^{p}D(X_i,\theta).
\end{align}
In our simulations, we use the constrained version of the quasi-Newton BFGS (Broyden, Fletcher, Goldfarb and Shanno) method with multiple initial points to solve (\ref{eq-stein}).

To study the properties of $\hat{\theta}$, we shall focus on the case of two component mixture, i.e., $d=1$. Let $w=1-w_0$, $b=b_1$ and $c=c_1.$ To simplify the arguments, we set $b=1$, and
write $g(x;c)=g(x,1,c)$ and $m(x;\theta)=m(x;w,1,c)=m(x;w,c)$. 
Given $\theta=(w,c)$, let $\zeta(X_i;\theta)=X_i+\nabla \log
m(X_i;\theta)$. We state our main result below. For $a_1,a_2,a_3>0,$ denote by $\Theta:=\Theta(a_1,a_2,a_3)=\{(w,c): w\in [1/(a_1p^{a_2}),1]$ and $|c|\leq
a_3\log(p)\}$.

\begin{theorem}\label{thm-main}
Suppose $\gamma_0$ is unimodal and
\begin{equation}\label{thm-c1}
\sup_{u}\left|\nabla^j \log\gamma_0(u)\right|\leq \Lambda \quad
{a.e.},
\end{equation}
for $j=1,2$. Moreover, assume that
\begin{align}\label{eq-ass}
|\nabla^2 \log\gamma_0(u)-\nabla^2 \log\gamma_0(u')|\leq C
|u-u'|\quad {a.e.},
\end{align}
for some constant $C>0.$ Then we have uniformly for $(\mu_1,\dots,\mu_p)\in\mathbb{R}^p$,
\begin{align*}
\max_{(w,c)\in\Theta}p^{-1}|\hat{R}(w,c)-\E\hat{R}(w,c)|=O_p\left(\frac{(\log(p))^{3/2}}{\sqrt{p}}\right).
\end{align*}
The same conclusion holds when $\gamma_0$ is
double exponential.
\end{theorem}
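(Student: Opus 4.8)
The plan is to recast the claim as a uniform concentration statement for a centered empirical process indexed by $\Theta$, and then to combine a sharp sup-norm bound on the summands with Hoeffding's inequality and a discretization of $\Theta$. By Theorem \ref{thm1} we have $\hat R(w,c)=p-\sum_{i=1}^p D(X_i;w,c)$, so $p^{-1}(\hat R-\E\hat R)=-Z_p$ and it is equivalent to show
\[
\max_{(w,c)\in\Theta}\Big|Z_p(w,c)\Big| = O_p\!\left(\frac{(\log p)^{3/2}}{\sqrt p}\right),\qquad Z_p(w,c):=\frac1p\sum_{i=1}^p\big(D(X_i;w,c)-\E D(X_i;w,c)\big).
\]
Since the $X_i$ are independent, $Z_p(\cdot)$ is a centered process on the two-dimensional set $\Theta$. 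The crucial structural observation, and the reason the conclusion holds \emph{uniformly} over $(\mu_1,\dots,\mu_p)\in\R^p$, is that under (\ref{thm-c1}) the map $x\mapsto D(x;w,c)$ is \emph{bounded}; the quadratic growth one would naively expect is suppressed, so the summands $D(X_i;w,c)$ lie in a fixed bounded range irrespective of $\mu_i$.

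First I would record the smoothness and tail facts for $h:=\phi*\gamma_0$ that follow from (\ref{thm-c1}). Writing the posterior as $\propto\phi(x-\mu)\gamma_0(\mu)$, integration by parts yields $(\log h)'(x)=\E_{\mathrm{post}}[(\log\gamma_0)'(\mu)]$ and $(\log h)''(x)=\E_{\mathrm{post}}[(\log\gamma_0)''(\mu)]+\mathrm{Var}_{\mathrm{post}}[(\log\gamma_0)'(\mu)]$, whence $\|(\log h)'\|_\infty\le\Lambda$ and $\|(\log h)''\|_\infty\le\Lambda+\Lambda^2$. In the two-component case $\zeta_0\equiv0$, $\zeta_1(x)=x+(\log h)'(x-c)$, $D_0(x)=2-x^2$, and $D_1(x)=((\log h)'(x-c))^2-2\,h''/h\,(x-c)$ is bounded by a constant depending only on $\Lambda$. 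Condition (\ref{thm-c1}) also gives $\gamma_0(u)\ge\gamma_0(0)e^{-\Lambda|u|}$, hence the tail lower bound $h(y)\ge c_\star e^{-\Lambda|y|}$ for some $c_\star>0$, which I will use to control the spike probability $\rho_0$. For the double-exponential prior these same four facts are read off the closed-form expression for $h$, which is why that case is handled separately but identically.

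The main step, and the principal obstacle, is the uniform sup-norm bound
\[
\sup_{x\in\R}\ \sup_{(w,c)\in\Theta}\ |D(x;w,c)|\ \le\ K\log p
\]
for a constant $K=K(\Lambda,a_1,a_2,a_3)$. Since $D=\rho_0 D_0+\rho_1 D_1-\rho_0\rho_1(\zeta_0-\zeta_1)^2$ with $\rho_1 D_1$ bounded, while $|\rho_0 D_0|$ and $\rho_0\rho_1(\zeta_0-\zeta_1)^2$ are each at most a constant times $\rho_0(x)(1+|x|)^2$ (using $\|(\log h)'\|_\infty\le\Lambda$ and $\rho_0\rho_1\le\rho_0$), it suffices to prove $\sup_x\sup_\Theta\rho_0(x)(1+|x|)^2=O(\log p)$. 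I would split the range of $x$ at $R:=C\sqrt{\log p}$ for $C=C(\Lambda,a_2,a_3)$ large. On $|x|\le R$ bound $\rho_0\le1$, giving $\rho_0(1+|x|)^2\le(1+R)^2=O(\log p)$. On $|x|>R$ use $\rho_0\le(1-w)\phi(x)/(w\,h(x-c))$ together with $w\ge1/(a_1p^{a_2})$, $|c|\le a_3\log p$ and $h(y)\ge c_\star e^{-\Lambda|y|}$ to obtain
\[
\rho_0(x)(1+|x|)^2\ \le\ \frac{a_1}{c_\star}\,p^{a_2}\,\phi(x)\,e^{\Lambda|x-c|}(1+|x|)^2 ;
\]
for $|x|>R$ the Gaussian exponent $-x^2/2$ dominates $\Lambda|x-c|+a_2\log p+2\log(1+|x|)$, so this piece is $O(1)$ (in fact $o(1)$). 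Hence $\sup_x\rho_0(1+|x|)^2=O(\log p)$. Checking that the single threshold $R$ of order $\sqrt{\log p}$ works simultaneously over the extreme regimes $w\downarrow1/(a_1p^{a_2})$ and $|c|\uparrow a_3\log p$ is the delicate part, and it is precisely here that the definition of $\Theta$ and the exponential-type tail of $\gamma_0$ granted by (\ref{thm-c1}) enter.

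Finally I would assemble the concentration by discretization. Because $|D(x;w,c)|\le K\log p$ for every $x$, the centered variables $D(X_i;w,c)-\E D(X_i;w,c)$ are independent and lie in an interval of length at most $2K\log p$ \emph{irrespective of $\mu_i$}; Hoeffding's inequality therefore gives, for each fixed $(w,c)$ and all $t>0$, $\mathbb{P}(|Z_p(w,c)|>t)\le 2\exp\{-pt^2/(2(K\log p)^2)\}$ with a constant free of $(\mu_1,\dots,\mu_p)$. A routine differentiation using the same tail estimates shows $\sup_{x,\Theta}(|\partial_w D|+|\partial_c D|)\le\mathrm{poly}(p)$, so a net $\mathcal N\subset\Theta$ of spacing $p^{-\kappa}$, with $\kappa$ large enough that $\log|\mathcal N|=O(\log p)$ and the oscillation of $Z_p$ between neighboring points is $o(p^{-1/2})$, is available. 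A union bound over $\mathcal N$ then yields $\max_{\mathcal N}|Z_p|=O_p\big(K\log p\cdot\sqrt{\log|\mathcal N|}/\sqrt p\big)=O_p((\log p)^{3/2}/\sqrt p)$, and adding the negligible off-net oscillation gives the claim. The rate factorizes transparently: one $\log p$ from the sup-norm bound on $D$ and one $\sqrt{\log p}$ from the union bound over the net. The double-exponential case is identical once the facts in the second paragraph are obtained from its closed-form $h$.
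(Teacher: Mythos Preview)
Your plan is correct and follows the same three-step skeleton as the paper: (i) a uniform-in-$x$ bound of order $\log p$ on the centered summands, valid for every $\mu_i$; (ii) Hoeffding's inequality at each fixed $(w,c)$; (iii) discretization of $\Theta$ with a polynomial-size net and a union bound. The paper, however, works with the equivalent Stein representation $\hat R=\sum_i(\zeta(X_i)-X_i)^2+2\sum_i\nabla\zeta(X_i)-p$ and proves two separate lemmas giving $|\zeta(x)-x|\lesssim 1+\sqrt{|c|+\log(1/w)}$ and $|\nabla\zeta(x)|\lesssim 1+|c|+\log(1/w)$; your route through the George decomposition $D=\rho_0D_0+\rho_1D_1-\rho_0\rho_1(\zeta_0-\zeta_1)^2$ together with the posterior-moment identities $(\log h)'=\E_{\mathrm{post}}[(\log\gamma_0)']$ and $(\log h)''=\E_{\mathrm{post}}[(\log\gamma_0)'']+\mathrm{Var}_{\mathrm{post}}[(\log\gamma_0)']$ is a tidier way to reach the same key estimate, namely that $\rho_0(x)(1+|x|)^2$ (equivalently $|x|\phi(x)/m(x)$) is $O(\log p)$ uniformly over $\Theta$. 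One point to tighten: the paper spends most of its effort on the Lipschitz analysis in $(w,c)$, bounding $(m(\cdot;\theta)-m(\cdot;\theta'))/m(\cdot;\theta')$ and its first two $x$-derivatives, and assumption (\ref{eq-ass}) enters precisely there to control $\partial_c$ of $\nabla^2 m/m$; your ``routine differentiation'' to get $\mathrm{poly}(p)$ bounds on $\partial_c D$ implicitly requires a third-derivative bound on $h$, which is exactly what (\ref{eq-ass}) furnishes, so you should make that dependence explicit lest the hypothesis appear unused.
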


Let $(\hat{w},\hat{c})=\arg\min_{(w,c)\in \Theta}\hat{R}(w,c)$ and $(\tilde{w},\tilde{c})=\arg\min_{(w,c)\in\Theta}R(w,c)$ with $R(w,c)=\E[\hat{R}(w,c)]$. As a consequence of Theorem \ref{thm-main}, we have
\begin{align*}
\hat{R}(\hat{w},\hat{c}) -  R(\tilde{w},\tilde{c})=& \hat{R}(\hat{w},\hat{c}) - \hat{R}(\tilde{w},\tilde{c}) + \hat{R}(\tilde{w},\tilde{c}) -  R(\tilde{w},\tilde{c})\leq  \hat{R}(\tilde{w},\tilde{c}) -  R(\tilde{w},\tilde{c})
\\ \leq&  \sup_{(w,c)\in\Theta}|\hat{R}(w,c) -  R(w,c)|=O_p\left(\frac{(\log(p))^{3/2}}{\sqrt{p}}\right).
\end{align*}

\begin{remark}
{\rm
A similar result as in Theorem \ref{thm-main} can be obtained for the Gaussian prior, whose proof involves the use of Gaussian concentration inequality for lipschitz functions. An additional assumption on the $\ell_2$ norm
of the mean vector is needed in this case. In our simulations, SURE based on the Gaussian prior performs as well as
the one based on the double exponential prior.
}
\end{remark}

\begin{remark}
{\rm
Consider the $\ell_q$ ball
$$B_{q}(\eta)=\left\{(\mu_1,\dots,\mu_p): p^{-1}\sum_{i=1}^{p}|\mu_i|^q\leq \eta^q\right\},$$
with small radius $\eta.$ The minimax risk under the squared loss is given by
$r_{q,2}=\eta^{2}$ for $q=2$ and $r_{q,2}=\eta^q(2\log \eta^{-q})^{(2-q)/2}$ for $0<q<2$. When $q=2$ and $(\log(p))^{3/2}/\sqrt{p}<\eta^2$, the SURE-based estimator attains the minimax risk. However,
when $(\log(p))^{3/2}/\sqrt{p}$ is of larger order compared to $\eta^2$, the error term dominates. In this case, we may use MMLE to tune $(w,c)$. When $c=0$ and $w$ is estimated by the MMLE, JS (2004) showed that the posterior median and the posterior mean are both minimax optimal for $q\in (1,2]$.
Therefore, one may combine SURE and empirical Bayes in a way similar to \citet{DonohoSURE}, depending on the sparsity of the signals.
}
\end{remark}





\section{Heteroscedastic models}\label{sec:hete}


In this section, we extend our
results to the heteroscedastic case (i.e., the unequal variance case). To this end, consider the
model,
\begin{align*}
X_i=\mu_i+\epsilon_i,\quad \epsilon_i\sim^{i.i.d} N(0,\sigma_i^2),
\end{align*}
for $1\leq i\leq p.$ As before, we impose the mixture prior distribution on $\mu_i$ i.e.,
$f(\mu)=(1-w)\delta_0(\mu)+w\gamma(\mu;b,c),$ where the nonzero
component of the prior, $\gamma$, belongs to a location-scale
family. Recall that $g(x;b,c)$ denotes the convolution between
$\phi(\cdot)$ and $\gamma(\cdot;b,c)$. Direct calculation shows that
$\int_{-\infty}^{+\infty}(1/\sigma_i)\phi((x-\mu)/\sigma_i)\gamma(\mu;b,c)d\mu=(1/\sigma_i)g(x/\sigma_i,b\sigma_i,c/\sigma_i).$
The MMLE $(\hat{w},\hat{b},\hat{c})$ is then defined as,
\begin{align}\label{margin}
(\hat{w},\hat{b},\hat{c})=\arg\max
\sum^{n}_{i=1}\log\{(1-w)\phi(Y_i)+wg(Y_i;b\sigma_i,c/\sigma_i)\},\quad
Y_i:=X_i/\sigma_i,
\end{align}
subject to the constraints that $b>0$ and $0\leq w\leq 1$.


We propose an alternative method below that takes into account the order information in the variances, which is useful in estimating the means [see \citet{xie2012sure}].
From (\ref{margin}), we see that $b_i:=b\sigma_i$ is a monotonic increasing function of $\sigma_i$. 
In other words, we have $b_i\geq b_j$ if $\sigma_i\geq \sigma_j\geq
0$. This observation suggests us to consider
the optimization problem,
\begin{align}\label{margin2}
(\hat{w},\hat{b}_1,\dots,\hat{b}_p,\hat{c})=\arg\max
\sum^{n}_{i=1}\log\{(1-w)\phi(Y_i)+wg(Y_i;b_i,c/\sigma_i)\},
\end{align}
subject to the ordering constraint
\begin{align}\label{mon}
b_i\geq b_j>0 \quad \text{if} \quad \sigma_i\geq \sigma_j.
\end{align}
Here we impose a monotone constraint on $\{b_i\}$ according to the
ordering of the variances.
We shall call the resulting estimator semi-parametric MMLE.
As seen in Section \ref{sec:sim}, the performance of the normal
density component and double exponential density component are
generally close in the homogeneous case. Therefore, we shall focus on the case of normal
prior, and develop an efficient algorithm to solve (\ref{margin2}).
Our algorithm is a modification of the EM algorithm which invokes
the PAV algorithm in its M-step. The details are summarized in Algorithm 1 below.

\begin{algorithm}[!h]\label{alg11}\small
\caption{}
0. Input the initial values $(w^{(0)},c^{(0)},b_1^{(0)},\dots,b_p^{(0)})$.\\
1. \textbf{E-step:} Given $(w,c,b_1,\dots,b_p)$, let
  $$Q_{1i}=\frac{(1-w)\phi(Y_i)}{(1-w)\phi(Y_i)+wg(Y_i;\tau_i^{-1/2},c/\sigma_i)} \quad \text{and} \quad
  Q_{2i}=1-Q_{1i},$$
  where $\tau_i=1/b_i^2$ for $1\leq i\leq p.$\\
2. \textbf{M-step:} For fixed $c$, solve the optimization problem
\begin{align}\label{opt-mon}
(\hat{\tau}_1,\dots,\hat{\tau}_p)=\arg\min\sum^{p}_{i=1}Q_{2i}\left\{\log(1+\tau_i)+\frac{(Y_i-c/\sigma_i)^2}{1+\tau_i}\right\}\quad
\text{subject to}\quad 0\leq \tau_i\leq \tau_j \quad \text{if} \quad
\sigma_i\geq \sigma_j,
\end{align}
For fixed $(\tau_1,\dots,\tau_p)$, let
\begin{align}\label{opt-cw}
\hat{c}=\frac{\sum_{i=1}^{p}Q_{2i}Y_i/\{\sigma_i(1+\tau_i)\}}{\sum_{i=1}^{p}Q_{2i}/\{\sigma_i^2(1+\tau_i)\}}\quad
\text{and}\quad
\hat{w}=\frac{1}{p}\sum^{p}_{i=1}Q_{2i}.
\end{align}
Iterate between (\ref{opt-mon}) and (\ref{opt-cw}) until
convergence.\\
3. Repeat the above E-step and M-step until the algorithm converges.
\end{algorithm}

Define
  \begin{align*}
  l(w,\tau_1,\dots,\tau_p,c)=&\sum^{p}_{i=1}Q_{1i}\left\{\log(1-w)-\log(Q_{1i})-\frac{Y_i^2}{2}\right\}
  \\&+\sum^{p}_{i=1}Q_{2i}\left\{\log(w)-\log(Q_{2i})-\frac{1}{2}\log(1+\tau_i)-\frac{(Y_i-c/\sigma_i)^2}{2+2\tau_i}\right\}.
  \end{align*}
Consider the optimization problem,
\begin{align}\label{opt-main}
\max_{w,\tau_1,\dots,\tau_p,c} l(w,\tau_1,\dots,\tau_p,c)  \quad
\text{subject to}\quad 0\leq  \tau_i\leq \tau_j \quad \text{if} \quad
\sigma_i\geq \sigma_j.
\end{align}
For fixed $c$, maximizing $l(w,\tau_1,\dots,\tau_p,c)$ with respect
to $(\tau_1,\dots,\tau_p)$ is equivalent to solving (\ref{opt-mon}).
On the other hand, for fixed $(\tau_1,\dots,\tau_p)$, the maximizers
of $l(w,\tau_1,\dots,\tau_p,c)$ with respect to $w$ and $c$ are
given in (\ref{opt-cw}). Therefore, the iteration between
(\ref{opt-mon}) and (\ref{opt-cw}) is essentially a coordinate
descent algorithm for solving (\ref{opt-main}).

The order constraint optimization problem (\ref{opt-mon}) can be
solved effectively using the PAV algorithm for isotonic regression.
Notice that
$$(Y_i-c/\sigma_i)^2-1=\arg\min_{\tau_i}\{\log(1+\tau_i)+(Y_i-c/\sigma_i)^2/(1+\tau_i)\}.$$
Consider the weighted isotonic regression,
\begin{align}
(\tilde{\tau}_1,\dots,\tilde{\tau}_p)=\arg\min\sum^{p}_{i=1}Q_{2i}\left\{(Y_i-c/\sigma_i)^2-1-\tau_i\right\}^2
\quad \text{subject to}\quad 0\leq \tau_i\leq \tau_j \quad \text{if}
\quad \sigma_i\geq \sigma_j.
\end{align}
Let $\hat{\tau}_i=\max\{\tilde{\tau}_i,0\}$ for $1\leq i\leq p.$ By
Chapter 1 of \citet{robertsonorder},
we have the following result.

\begin{proposition}\label{prop-iso}
The solution to (\ref{opt-mon}) is
$(\hat{\tau}_1,\dots,\hat{\tau}_p)'$.
\end{proposition}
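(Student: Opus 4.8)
The plan is to reduce the non-quadratic, order-restricted problem (\ref{opt-mon}) to a weighted \emph{least-squares} isotonic regression, for which the PAV output is exactly the minimizer, and then to dispose of the lower bound $\tau_i\ge0$ by truncation. Writing $a_i=(Y_i-c/\sigma_i)^2$ and $\psi_i(\tau)=\log(1+\tau)+a_i/(1+\tau)$, the first step is the elementary computation
\begin{align*}
\psi_i'(\tau)=\frac{1}{1+\tau}-\frac{a_i}{(1+\tau)^2}=\frac{\tau-m_i}{(1+\tau)^2},\qquad m_i:=a_i-1,
\end{align*}
which exhibits the two features that drive everything: (i) the unique stationary point of $\psi_i$ on $(-1,\infty)$ is $\tau=m_i=(Y_i-c/\sigma_i)^2-1$, and since $\psi_i(\tau)\to+\infty$ as $\tau\downarrow-1$ and as $\tau\to\infty$, this stationary point is a genuine global minimum (the pointwise minimizer quoted just before the proposition); and (ii) the derivative factors as $\psi_i'(\tau)=(\tau-m_i)\,K(\tau)$ with the \emph{common} positive weight function $K(\tau)=(1+\tau)^{-2}$ that does not depend on $i$.

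Granting this structure, I would invoke the general theory of isotonic regression from Chapter~1 of \citet{robertsonorder}. The order cone $\mathcal{C}=\{\tau:\tau_i\le\tau_j\text{ if }\sigma_i\ge\sigma_j\}$ is closed and convex, and the separable criterion $\sum_iQ_{2i}\psi_i(\tau_i)$ belongs to the family of losses whose summands share a common pointwise-minimizer pattern $\{m_i\}$ and a common positive derivative-weight $K$. For such a family the order-restricted minimizer is invariant across the family and equals the weighted $L_2$ isotonic regression $\tilde\tau$ of the targets $m_i$ with weights $Q_{2i}$. The mechanism is that on any level block $B$ of $\tilde\tau$ the optimal constant $v$ satisfies $\sum_{i\in B}Q_{2i}\psi_i'(v)=K(v)\sum_{i\in B}Q_{2i}(v-m_i)=0$, hence $v=\bar m_B$ (the weighted mean of $\{m_i\}_{i\in B}$) exactly as in the quadratic case; because the block values and their ordering coincide with those of the quadratic problem, the pooling decisions, and therefore the whole solution, agree. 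This yields $\arg\min_{\tau\in\mathcal{C}}\sum_iQ_{2i}\psi_i(\tau_i)=\tilde\tau$.

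Finally I would reinstate the constraint $\tau_i\ge0$. For a block $B$ forced to a common value $v\ge0$, minimizing $\sum_{i\in B}Q_{2i}\psi_i(v)$ subject to $v\ge0$ gives $v=\max\{\bar m_B,0\}$ for \emph{both} the $\psi$-criterion and the quadratic one, since each summand is increasing to the right of its minimizer, so clamping the common unconstrained block optimum $\bar m_B$ at the boundary $0$ is optimal for either loss. Because truncating a monotone vector at $0$ leaves it monotone (hence feasible), the two constrained solutions again coincide and both equal $\max\{\tilde\tau,0\}$. This is exactly $\hat\tau_i=\max\{\tilde\tau_i,0\}$, proving the claim.

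The step I expect to be the crux is the invariance in the middle paragraph: confirming that the non-quadratic criterion is minimized by the \emph{same} vector as the quadratic one. The delicate part is not the per-block stationarity (immediate from $K>0$), but checking that the variable factor $K(\tilde\tau_i)$, although constant within each block, does not disturb the global order-restricted optimality conditions \emph{across} blocks. This is precisely the content of the Chapter~1 results of \citet{robertsonorder} and rests on the monotonicity of the block means $\bar m_B$ together with an Abel summation over the partial-sum inequalities $\sum_{i\in U}Q_{2i}(m_i-\tilde\tau_i)\ge0$ for upper sets $U$. I would therefore lean on that reference for the cross-block argument rather than reprove it, reserving the explicit work for the derivative computation and the truncation step.
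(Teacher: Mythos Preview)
Your proposal is correct and follows the same approach as the paper, which simply cites Chapter~1 of \citet{robertsonorder} without further detail. You have in fact supplied the key computation the paper omits---the factorization $\psi_i'(\tau)=(1+\tau)^{-2}(\tau-m_i)$ with a common positive weight $K(\tau)=(1+\tau)^{-2}$---which is precisely the hypothesis under which the Robertson--Wright--Dykstra invariance theorem applies, and your truncation argument for the nonnegativity constraint is the standard one.
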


\begin{remark}
{\rm Notice that $c/\sigma_i$ is a monotonic increasing function of
$\sigma_i$ if $c<0$ while it is monotonic decreasing when $c>0.$
However, as the sign of $c$ is generally unknown, it seems less
convenient to use the monotonic constraint on location parameters. }
\end{remark}

To end this subsection, we remark that the method can also be
extended to the mixture models described in Section \ref{sec:mix}.
In particular, one can consider the following MMLE,
\begin{align*}
(\hat{w}_0,\hat{w}_k,\hat{b}_{ki},\hat{c}_k)_{k=1,2,\dots,d}=\arg\max\sum_{i=1}^{n}\log\left\{(1-w_0)\phi(Y_i)+\sum_{k=1}^{d}w_{k}g(Y_i;b_{ki},c_k/\sigma_i)\right\},
\end{align*}
subject to the ordering constraint
\begin{align}\label{mon}
b_{ki}\geq b_{kj}>0 \quad \text{if} \quad \sigma_i\geq \sigma_j,
\end{align}
and $\sum_{k=0}^{d}w_k=1$ for $w_k\geq 0.$ The EM + PAV algorithm
can again be employed to solve the optimization problem. The details
of the algorithm are presented in Section \ref{sec:empav}.

\section{Numerical studies}\label{sec:sim}
\subsection{Two component mixture priors}\label{sec:sparse2}
We conduct simulation studies to compare and contrast the method in
Section \ref{sec:sparse1} with JS (2004) as well as the general
maximum likelihood empirical Bayes (denoted by GMLEB and S-GMLEB) in
\citet{jiang2009general}, shape constrained rule (SCR) in \citet{koenker2014convex} and the nonparametric empirical Bayes method (NEB) in
\citet{brown2009nonparametric}. We consider two prior density
components namely the double exponential and normal densities. Following the
well-established design of JS (2004), we generate a single
observation $\mathbf{X}\sim N(\mu_0,I_p)$ with $p=1000.$ Here
$\mu_0$ contains $k=5,50$ or $500$ nonzero entries with the same
value $v=3,4,5$ or $7$.

The simulation results are summarized in Table \ref{tb1}. Because
the non-null observations are being shrunk toward the data-driven
location, the proposed method outperforms JS (2004) and the nonparametric competitors in all cases as the nonzero entries are all equal.
The posterior median has slightly higher squared errors comparing to the posterior mean.
However, it produces an exact sparse solution, which is desirable if the goal
is to recover the support of signals or do feature selection. We also note that the two density components perform similarly despite their different tail behaviors.

Table \ref{tb2} reports the MMLE for $w$ as well as the false positive numbers
(FP) and false negative numbers (FN) for the posterior median. The FP for our method
is consistently lower than that of JS (2004). As the underlying model is indeed a two-component normal mixture, $\hat{w}$ in our method provides
a reasonable estimation of the nonzero proportion when the signal
strength is relatively strong or the signal is not too sparse.
However, when the location parameter $c$ is set to be zero in JS
(2004), $\hat{w}$ provides a less meaningful estimation of the
nonzero proportion. Furthermore, Table \ref{tb11} summarizes the
average of total $\ell_1$ loss for the proposed method, JS (2004)'s
approach as well as the posterior mean and posterior median based on
\citet{kiefer1956consistency}'s nonparametric maximum likelihood
estimator (NPMLE). We implement Kiefer and Wolfowitz's procedure
using the R package \texttt{REBayes}; see \citet{koenker2016package}. It
is clear that the proposed method outperforms other approaches in
this case. Although the Bayes rule (posterior mean) based on NPMLE
has superior performance in terms of $l_2$ loss, its $l_1$ loss is
considerably higher which is likely due to the non-sparseness of its
solution.

In Table \ref{tb3}, we further report some simulation results
following the setting in Table 4 of \citet{jiang2009general}, where
$p=1000$ and $\mu_j\sim^{i.i.d} N(\tilde{\mu},\sigma^2)$. For such
design, James-Stein estimator is the best performer. It is
interesting to see that our
method performs as well as the James-Stein estimator when normal density is employed.
Note that in this setup, the performance of the posterior median in JS (2004) considerably worsens
and the improvement by including a location parameter is
significant.

We also note that the posterior mean based on SURE performs competitively with the empirical Bayes counterpart.
Overall, our method has reasonably good finite sample performance at
the expense of low computational overhead compared to nonparametric empirical Bayes and having the advantage of no tuning as
compared to the nonparametric approaches.

\begin{table}[h]\footnotesize
\caption{\small Average of total squared error of estimation of
various methods on a mixed signal of length 1000. The numbers for
GMLEB, S-GMLEB, SCR and NEB are adapted from \citet{jiang2009general},
\citet{koenker2014convex}, and \citet{brown2009nonparametric}
respectively. The results for SCR and NEB are based on 1000 and 50
simulation runs, while the results for other methods are based on
100 simulation runs. Boldface entries denote the best
performer.}\label{tb1}
\begin{center}
\begin{tabular}{l rrrr rrrr rrrr}\toprule
&\multicolumn{4}{c}{$k=5$}&\multicolumn{4}{c}{$k=50$}&\multicolumn{4}{c}{$k=500$}
\\\cmidrule(r){2-5}\cmidrule(r){6-9}\cmidrule(r){10-13} $v$ &
 3 & 4 & 5 & 7&  3 & 4 & 5 & 7 &  3 & 4 & 5 & 7  \\\midrule
L-Exp (median) & 34  & 26 & 16  & 4  & 178 & 117 & 53 & 7 & 551 & 341 & 141 & \textbf{9}\\
L-Exp (mean)  &  \textbf{32}  & \textbf{25} & \textbf{15} & 4 & \textbf{148} & \textbf{97} & 46 & 8 & 445 & \textbf{277} & \textbf{119} & 13 \\
L-Normal (median) & 35 & 28 & 17 & \textbf{3} & 184 & 123 & 53 & \textbf{5} & 584 & 366 & 160 & 14 \\
L-Normal (mean) & 34 & 27 & \textbf{15} & \textbf{3} & 155 & 102 & \textbf{45} & \textbf{5} & \textbf{443} & 279 & 124 & 12\\

L-Exp-S (mean)   & 35 & 28 & 16 & 5 & 153 & 102 & 46 & 7 & 447 & 283 & 129 & 19 \\
L-Normal-S (mean)& 35 & 28 & \textbf{15} & 5 & 153 & 102 & 46 & 7 & 444 & 280 & 126 & 16 \\


Exp  & 36  & 30  & 18  & 9 & 211 & 151 & 101 & 72 & 852 & 870 & 780 & 656 \\
GMLEB  & 39 & 34 & 23 & 11  & 157 & 105 & 58 & 14 & 459 & 285 & 139 & 18\\
S-GMLEB & \textbf{32} & 28 & 17 & 6  & 150 & 99 & 54 & 10  & 454 & 282 & 136 & 15\\
SCR & 37 & 34 & 21 & 11 & 173 & 121 & 63 & 16 & 488 & 310 & 145 &
22\\
 NEB  & 53 & 49 & 42 & 27 &
179 & 136 & 81 & 40 & 484 & 302 & 158& 48
\\ \midrule
\end{tabular}
\\Note: L-Exp/L-Normal (L-Exp-S/L-Normal-S) denote the proposed empirical Bayes (Stein's) method, where the density component of the prior is double exponential or normal with location shift.
\end{center}
\end{table}

\begin{table}[h]\footnotesize
\caption{\small MMLE for $w$, and the
false positive numbers (FP) and false negative numbers (FN) for the posterior medians of
the proposed method and JS (2004)'s method.}\label{tb2}
\begin{center}
\begin{tabular}{ll rrrr rrrr rrrr}\toprule
&
&\multicolumn{4}{c}{$k/p=0.005$}&\multicolumn{4}{c}{$k/p=0.05$}&\multicolumn{4}{c}{$k/p=0.5$}
\\\cmidrule(r){3-6}\cmidrule(r){7-10}\cmidrule(r){11-14}  $v$ &
& 3 & 4 & 5 & 7&  3 & 4 & 5 & 7 &  3 & 4 & 5 & 7  \\\midrule
L-Exp & $\hat{w}$   & 0.086 & 0.022 & 0.010 &0.005 & 0.05 & 0.05 & 0.05 &0.05 & 0.51 & 0.50 & 0.50 & 0.50\\
& FP & 15.8 & 2.0 & 0.7 & 0.1 & 6.5  & 3.2 & 0.9 & 0.0 & 36.2 & 12.2 & 3.3 & 0.1\\
& FN & 2.7  & 0.9 & 0.3 & 0.0 & 14.5 & 4.6 & 1.3 & 0.0 & 30.6 & 10.6 & 2.8 & 0.1\\
L-Normal & $\hat{w}$   & 0.051 & 0.017 & 0.008 &0.006 & 0.08 & 0.05 & 0.05 &0.05 & 0.50 & 0.50 & 0.50 & 0.50\\
                       & FP & 1.2 & 0.9 & 0.5 & 0.1 & 8.1  & 3.6 & 1.0 & 0.1 & 35.1 & 12.0 & 3.1 & 0.2\\
                       & FN & 3.2  & 1.2 & 0.3 & 0.0 & 14.5 & 4.7 & 1.2 & 0.0 & 31.6 & 11.3 & 3.4 & 0.1\\


Exp & $\hat{w}$ & 0.137 & 0.056 & 0.029 & 0.014 & 0.23 & 0.18 & 0.14 & 0.10 & 1.00 & 1.00 & 0.89 & 0.74\\
& FP & 33.2 & 11.0 & 0.9 & 0.5 & 15.3 & 10.8 & 7.4 & 3.8 & 500 & 500 & 310.0 & 97.3\\
& FN & 3.2  & 1.2  & 0.3 & 0.0 & 14.5 & 3.5  & 0.6 & 0.0 & 0.0 & 0.0 & 0.0 & 0.0\\
\midrule
\end{tabular}
\\Note: L-Exp/L-Normal denote the proposed empirical Bayes method, where the density component of the prior is double exponential or normal with location shift.
\end{center}
\end{table}

\begin{table}[h]\footnotesize
\caption{\small Average of total $\ell_1$ loss of estimation of various
methods on a mixed signal of length 1000. The results are based on
100 simulation runs. Boldface entries denote the best
performer.}\label{tb11}
\begin{center}
\begin{tabular}{l rrrr rrrr rrrr}\toprule
&\multicolumn{4}{c}{$k=5$}&\multicolumn{4}{c}{$k=50$}&\multicolumn{4}{c}{$k=500$}
\\\cmidrule(r){2-5}\cmidrule(r){6-9}\cmidrule(r){10-13} $v$ &
 3 & 4 & 5 & 7&  3 & 4 & 5 & 7 &  3 & 4 & 5 & 7  \\\midrule
L-Exp (median) & \textbf{13}  & \textbf{9} & \textbf{6} & \textbf{3}  & \textbf{69} & \textbf{38} & \textbf{18} & \textbf{8} &  229 & 119 & 56 & 27\\
L-Exp (mean)   & 38 &  22 & 11 & 3 & 107 & 57 & 25 & 9 & 327 & 173 & 83 & 38 \\
L-Normal (median) & 21 & 10 & \textbf{6} & \textbf{3} & 72 & 39 & \textbf{18} & \textbf{8}  & \textbf{225} &\textbf{116} & \textbf{52} & \textbf{ 23 }\\
L-Normal (mean)   & 39 & 23 & 12 & 4 & 115 & 58 & 25 & 8 & 310 & 157 & 68 & 24 \\

L-Exp-S (mean)   & 32   & 19 & 8 & 4 & 113 & 60 & 26 & 9 & 329 & 177 & 90 & 44\\
L-Normal-S (mean)& 33 &  19 & 8 & 4 & 114 & 59 & 25 & 8 & 312 & 158 & 70 & 24 \\


Exp  & 14     & 11 & 8 & 6 & 96 & 73 & 58 & 49 & 708 & 720 & 620 & 501 \\
NPMLE (median)& 66 & 65 & 64 & 64  & 125 & 93 & 79 & 71 & 274 & 164 & 99 & 72\\
NPMLE (mean)  & 52 & 47 & 41 & 37  & 134 & 88 & 59 & 43 & 329 & 181
& 95 & 51
\\ \midrule
\end{tabular}
\\Note: L-Exp/L-Normal (L-Exp-S/L-Normal-S) denote the proposed empirical Bayes (Stein's) method, where the density component of the prior is double exponential or normal with location shift.
\end{center}
\end{table}

\begin{table}[h]\footnotesize
\caption{\small Average of total squared error of estimation of
various methods on a mixed signal of length 1000. The numbers for
James-Stein, GMLEB, and S-GMLEB are adapted from \citet{jiang2009general}. Boldface entries denote the best two performers.}\label{tb3}
\begin{center}
\begin{tabular}{l rrrr rrrr rr}\toprule
&\multicolumn{4}{c}{$\sigma^2=0.1$}&\multicolumn{3}{c}{$\sigma^2=2$}&\multicolumn{3}{c}{$\sigma^2=40$}
\\\cmidrule(r){2-5}\cmidrule(r){6-8}\cmidrule(r){9-11} $\tilde{\mu}$ &
 3 & 4 & 5 & 7&  3  & 5 & 7 &  3 &  5 & 7  \\\midrule
L-Exp (median) & 94 & 94 & 93 & \textbf{92} & 722 & 704 & 704 & 989 & 1007 & 1014 \\
L-Exp (mean) &  93 & 93 & 93 & 93 & 692  & 689 & 689 & 986 & 990 & 994\\
L-Normal (median) & 94 & 95 & 94 & 93 & 667 & 666  & 666 & 978  & \textbf{977} & 977 \\
L-Normal (mean) &  94 & 94 & 93 & 93 & 666 &\textbf{ 665} & 666 & 977 & \textbf{977} & 977 \\

L-Exp-S (mean)   & \textbf{92}  & \textbf{92} & \textbf{92} & \textbf{92} & 685 & 684 & 684 & 982 & 982 & 982 \\
L-Normal-S (mean) & 94 & 93 & 93 & 93 & \textbf{665} & \textbf{664} & \textbf{664} & \textbf{974} & \textbf{974} & \textbf{974} \\


Exp   & 1086 & 1066 & 1044 & 1022 & 1020 & 1037 & 1022 & 990 & 994 & 999\\
GMLEB  & 94 & 94 & 95 & 95 & 675 & 678 & 673 & 1001 & 1015 & 1009   \\
S-GMLEB  & 97 & 98  & 99 & 98 & 678 & 681 & 675 & 1002 & 1015 & 1009 \\
James-Stein & \textbf{92} & \textbf{92} & \textbf{92} & 93
& \textbf{665} & 670 & \textbf{665} & \textbf{970} &
982 & \textbf{975}
\\ \midrule
\end{tabular}
\\Note: L-Exp/L-Normal (L-Exp-S/L-Normal-S) denote the proposed empirical Bayes (Stein's) method, where the density component of the prior is double exponential or normal with location shift.
\end{center}
\end{table}

\subsection{Finite mixture priors}
To evaluate the performance of the method proposed in Section
\ref{sec:mix}, we modify the setting in JS (2004) by considering the
models with $\mu_i=v$ for $1\leq i\leq k$ and $\mu_i=-v$ for
$k+1\leq i\leq 2k,$ where $v=3,4,5,7$ and $k=5,50,250.$ To conserve
space, we only present the results with normal density components.
As seen from Table \ref{tb-mix}, when $m\geq 2$, the posterior mean
and median based on the finite mixture models perform as well as their NPMLE counterparts.
For $k=50$ and $k=250$,
we see a significant improvement by including additional mixing
component(s). The total square errors are not sensitive to the choice
of $m$ as long as $m\geq 2.$ Table \ref{tb-mix2} summarizes the
false positive/negative numbers (FP/FN) for the posterior median.
The mixture models with $m\geq 2$ greatly reduce the FP numbers for
$k=50,250.$ However, the over-fitted models may deliver higher
false positive numbers for dense and weak signals as compared to the
correctly specified model. To select the number of components, we
implement the BIC criterion described in (\ref{bic}) with the upper
bound $M_0=6.$ It is seen that the BIC criterion generally selects the true number of clusters and
the corresponding estimators perform reasonably well when the signals are not too weak or sparse.

\begin{table}[h]\footnotesize
\caption{\small Average of total squared error of estimation of
various methods on a mixed signal of length 1000. The results are
based on 100 simulation runs. }\label{tb-mix}
\begin{center}
\begin{tabular}{ll rrrr rrrr rrrr}\toprule
&
&\multicolumn{4}{c}{$2k=10$}&\multicolumn{4}{c}{$2k=100$}&\multicolumn{4}{c}{$2k=500$}
\\\cmidrule(r){3-6}\cmidrule(r){7-10}\cmidrule(r){11-14}  & $m$ &
 3 & 4 & 5 & 7&  3 & 4 & 5 & 7 &  3 & 4 & 5 & 7  \\\midrule
L-Normal (mean)& 1 & 61 & 51 & 32 & 18 & 320 & 264 & 198 & 150 & 821 & 821 & 748 & 663\\
L-Normal (median)  & 1 & 65 & 53 & 28 & 16 & 334 & 240 & 168 & 133 & 821 & 779 & 693 & 618 \\
L-Normal (mean)& 2 & 62 & 53 & 33 & 18 & 300 & 203 & 94  & 10  & 628 & 391 & 168 & 14\\
L-Normal (median)  & 2 & 65 & 54 & 28 & 14 & 370 & 246 & 114 & 12  & 803 & 505 & 213 & 17 \\
L-Normal (mean)& 3 & 63 & 53 & 33 & 19 & 301 & 204 & 97  & 14  & 630 & 394 & 172 & 19 \\
L-Normal (median)  & 3 & 65 & 53 & 28 & 15 & 370 & 244 & 112 & 13  & 792 & 499 & 209 & 18\\
L-Normal (mean)& 4 & 63 & 53 & 34 & 19 & 301 & 204 & 97 & 14   & 631 & 395 & 173 & 19\\
L-Normal (median)  & 4 & 65 & 53 & 28 & 15 & 371 & 244 & 11 & 14   & 790 & 498 & 211 & 19\\
L-Normal (mean)& 5 & 63 & 53 & 34 & 19 & 301 & 205 & 97 & 15   & 631 & 396 & 173 & 20\\
L-Normal (median)  & 5 & 65 & 53 & 28 & 15 & 371 & 244 & 111 & 14  & 793 & 498 & 210 & 19\\
L-Normal (mean)  & BIC & 61 & 51 & 32 & 18 & 318 & 205 & 94 & 10 & 628 & 391 & 168 & 14 \\
L-Normal (median)& BIC & 65 & 53 & 28 & 16 & 338 & 245 & 114 & 12 & 803 & 505 & 213 & 17 \\
Exp (median) & NA & 64 & 52 & 28 & 16 & 335 & 251 & 180 & 140 & 860 & 875 & 786 & 659 \\
NPMLE (mean)   & NA& 63 & 53 & 30 & 10  & 302 & 206 & 99 & 16 & 633 & 397 & 174 & 22\\
NPMLE (median)     & NA& 74 & 63 & 39 & 17  & 383 & 255 & 120 & 27&
830 & 516 & 221 & 34 \\ \midrule
\end{tabular}
\end{center}
\end{table}

\begin{table}[h]\footnotesize
\caption{\small The false positive numbers (FP) and false negative
numbers (FN) for the posterior median based on the finite mixture
models.}\label{tb-mix2}
\begin{center}
\begin{tabular}{ll rrrr rrrr rrrr}\toprule
&
&\multicolumn{4}{c}{$2k/p=0.01$}&\multicolumn{4}{c}{$2k/p=0.10$}&\multicolumn{4}{c}{$2k/p=0.5$}
\\\cmidrule(r){3-6}\cmidrule(r){7-10}\cmidrule(r){11-14} & $m$
& 3 & 4 & 5 & 7&  3 & 4 & 5 & 7 &  3 & 4 & 5 & 7  \\\midrule
 FP & 1& 1.9 & 1.6 & 1.3 & 0.8 & 28.5 & 20.5 & 13.7 & 7.1 & 495.6 & 307.6 & 153.2 & 61.5 \\
 FN & 1& 6.0 & 2.3 & 0.4 & 0.0 & 20.6 & 4.5  & 0.5  & 0.0 & 0.0  & 0.0 & 0.0 & 0.0\\
 FP & 2& 2.3 & 2.1 & 1.6 & 0.9 & 15.0 & 7.0 & 2.3 & 0.2   & 45.8 & 16.7 & 4.6 & 0.2 \\
 FN & 2& 5.9 & 2.1 & 0.3 & 0.0 & 28.7 & 9.3 & 2.5 & 0.0   & 48.2 & 16.1 & 4.1 & 0.1\\
 FP & 3& 2.9 & 2.7 & 2.0 & 1.1 & 16.1 & 7.8 & 2.9 & 0.5   & 98.5 & 21.6 & 5.8 & 0.6\\
 FN & 3& 5.8 & 2.0 & 0.3 & 0.0 & 28.0 & 8.8 & 2.2 & 0.0   & 31.2 & 13.6 & 3.6 & 0.0\\
 FP & 4& 4.0 & 3.7 & 2.7 & 1.4 & 16.7 & 8.0 & 3.0 & 0.6   & 203.9 & 23.0 & 6.0 & 0.7\\
 FN & 4& 5.4 & 1.8 & 0.2 & 0.0 & 27.6 & 8.7 & 2.1 & 0.0   & 23.8 & 13.0 & 3.4 & 0.0\\
 FP & 5& 7.7 & 6.8 & 4.5 & 2.2 & 17.2 & 8.1 & 3.0 & 0.6   & 280.6 & 23.8 & 6.1 & 0.7 \\
 FN & 5& 4.9 & 1.7 & 0.2 & 0.0 & 27.4 & 8.6 & 2.1 & 0.0   & 19.5 & 12.8 & 3.4 & 0.0 \\
 FP & BIC & 1.9 & 1.6 & 1.3 & 0.8 & 27.5 & 7.7 & 2.3 & 0.2& 45.8 & 16.6 & 4.6 & 0.2\\
 FN & BIC & 6.0 & 2.3 & 0.4 & 0.0 & 21.4 & 9.1 & 2.5 & 0.1& 48.2 & 16.1 & 4.1 & 0.1\\
 FP (Exp) & NA & 2.7 & 1.7 & 1.4 & 0.8 & 51.9 & 28.9 & 18.0 & 8.9 & 500.0 & 500.0 & 317.5 & 98.6\\
 FN (Exp) & NA & 5.8 & 2.1 & 0.3 & 0.0 & 14.3 & 3.3 & 0.4 & 0.0 & 0.0 & 0.0 & 0.0 & 0.0\\
\midrule
\end{tabular}
\end{center}
\end{table}

\subsection{Heteroscedastic models}
In this subsection, we present some numerical results to demonstrate
the finite sample performance of the semi-parametric MMLE for
heteroscedastic models. To this end, we generate a single
observation $\mathbf{X}\sim N(\mu_0,\Sigma)$, where
$\mu_0=(\mu_1,\dots,\mu_p)$ and
$\Sigma=\text{diag}(\sigma_1^2,\dots,\sigma_p^2).$ Consider the
following models, where $v=3,5,4,7$, and $K=5,50,500$.
\begin{enumerate}
  \item[(A)] $\sigma_i\sim^{i.i.d} \text{Unif}(1,1.5)$. Sort $\{\sigma_i\}$ so that $\sigma_1\leq \sigma_2\leq \cdots\leq \sigma_p$. Let $\mu_j=v$ for $1\leq j\leq K$ and zero otherwise.
  \item[(B)] $\sigma_i\sim^{i.i.d} \text{Unif}(1,1.5)$. Sort $\{\sigma_i\}$ so that $\sigma_1\geq \sigma_2\geq \cdots\geq \sigma_p$. Let $\mu_j=v$ for $1\leq j\leq K$ and zero otherwise.
  \item[(C)] $\sigma_i\sim^{i.i.d} \text{Unif}(1,1.5)$. Sort $\{\sigma_i\}$ so that $\sigma_1\leq \sigma_2\leq \cdots\leq \sigma_p$. Let $\mu_j\sim N(v,1)$ for $1\leq j\leq K$ and zero otherwise.
  \item[(D)] $\sigma_i\sim^{i.i.d} \text{Unif}(1,1.5)$. Sort $\{\sigma_i\}$ so that $\sigma_1\geq \sigma_2\geq \cdots\geq \sigma_p$. Let $\mu_j\sim N(v,1)$ for $1\leq j\leq K$ and zero otherwise.
  \item[(E)] $\sigma_i\sim^{i.i.d} \text{Unif}(1,1.5)$. Sort $\{\sigma_i\}$ so that $\sigma_1\leq \sigma_2\leq \cdots\leq \sigma_p$. Let $\mu_j=v$ for $\lfloor(p-K)/2\rfloor \leq j\leq \lfloor(p+K)/2\rfloor-1$ and zero otherwise.
  \item[(F)] $\sigma_i\sim^{i.i.d} \text{Unif}(1,1.5)$. Sort $\{\sigma_i\}$ so that $\sigma_1\leq \sigma_2\leq \cdots\leq \sigma_p$. Let $\mu_j\sim N(v,1)$ for $\lfloor(p-K)/2\rfloor \leq j\leq \lfloor(p+K)/2\rfloor-1$ and zero otherwise.
  \item[(G)] $\sigma_i\sim^{i.i.d} \text{Unif}(1,1.5)$. Sort $\{\sigma_i\}$ so that $\sigma_1\geq \sigma_2\geq \cdots\geq \sigma_p$. Let $\mu_j=v$ for $\lfloor(p-K)/2\rfloor \leq j\leq \lfloor(p+K)/2\rfloor-1$ and zero otherwise.
  \item[(H)] $\sigma_i\sim^{i.i.d} \text{Unif}(1,1.5)$. Sort $\{\sigma_i\}$ so that $\sigma_1\geq \sigma_2\geq \cdots\geq \sigma_p$. Let $\mu_j\sim N(v,1)$ for $\lfloor(p-K)/2\rfloor \leq j\leq \lfloor(p+K)/2\rfloor-1$ and zero otherwise.
  \item[(I)] $\sigma_i\sim^{i.i.d} \text{Unif}(1,1.01)$. Let $\mu_j=v$ for $1\leq j\leq K$ and zero otherwise.
  \item[(J)] $\sigma_i\sim^{i.i.d} \text{Unif}(1,1.01)$.  Let $\mu_j\sim N(v,1)$ for $1\leq j\leq K$ and zero otherwise.
\end{enumerate}

\begin{table}[h]\footnotesize
\caption{\small Average of total squared error of estimation of
various methods on a mixed signal of length 1000. The results are based on
100 simulation runs.}\label{tb-het1}
\begin{center}
\begin{tabular}{ll rrrr rrrr rrrr}\toprule
& &\multicolumn{4}{c}{$k=5$}&\multicolumn{4}{c}{$k=50$}&\multicolumn{4}{c}{$k=500$}
\\\cmidrule(r){3-6}\cmidrule(r){7-10}\cmidrule(r){11-14} & &
 3 & 4 & 5 & 7&  3 & 4 & 5 & 7 &  3 & 4 & 5 & 7  \\\midrule
 (A) & L-Normal (median) & 33  & 28 & 21 & 10 & 188 & 166 & 116 & 30 & 1027 & 901 & 619 & 164\\
 & L-Normal (mean) & 33 & 29 & 21 & 9 & 174 & 148 & 98 & 26 & 788 & 690 & 480 & 130 \\
 & Semi  (median)& 12 & 11 & 11  & 11 & 56 &  57 & 57  & 58 & 947  & 836 & 598 & 169 \\
  & Semi (mean)  & 12 & 12 & 12  & 12 & 56 &  57 & 57  & 58 & 773  & 689 & 484 & 134 \\
 (B) & L-Normal (median)    & 49  & 76 & 83 & 39 & 375 & 428 & 364 & 123 & 1020 & 908 & 613 & 192\\
  & L-Normal (mean)& 48  & 72 & 75 & 34 & 303 & 334 & 276 & 99  & 772  & 668 & 460 & 135 \\
 & Semi (median)            & 48  & 71 & 69 & 39 & 356 & 409 & 334 & 109 & 678 & 580 & 445 & 209\\
   & Semi (mean)   & 49  & 69 & 66 & 37 & 295 & 345 & 289 & 99  & 542 & 505  & 412 & 220\\
 (C) & L-Normal (median)   & 26 & 27 & 24 & 12 & 170 & 154 & 128 & 59 & 949 & 965 & 833 & 492\\
  & L-Normal (mean)& 27 & 27 & 23 & 11 & 164 & 154 & 121 & 58 & 883 & 884 & 766 & 465\\
 & Semi (median)           & 11 & 11 & 11 & 10 & 56  & 57 & 58 & 58 & 875 & 889 & 787 & 497  \\
   & Semi (mean)   & 12 & 12 & 11 & 11 & 56  & 57 & 58 & 58 & 855 & 864 & 761 & 479  \\
 (D) & L-Normal (median)   & 50 & 72 & 74 & 51 & 359 & 407 & 358 & 187 & 1117 & 1122 & 964 & 581\\
  & L-Normal (mean)& 49 & 68 & 70 & 43 & 289 & 316 & 282 & 150 & 900 & 906 & 796 & 518 \\
 & Semi (median)           & 47 & 65 & 68 & 47 & 340 & 383 & 332 & 171 & 1075 & 1023 & 877 & 603\\
  & Semi (mean)    & 49 & 63 & 65 & 44 & 287 & 323 & 276 & 147 & 773 & 808 & 762 & 592\\
  (E) & L-Normal (median)    & 46 & 58 & 41 & 21 & 286 & 275 & 196 & 54 & 980 & 859 & 590 & 174 \\
   & L-Normal (mean)& 45 & 53 & 39 & 17 & 236 & 225 & 160 & 44 & 748 & 653 & 452 & 132\\
 & Semi (median)            & 43 & 51 & 37 & 20 & 251 & 230 & 179 & 126 & 895 & 799 & 587 & 200\\
   & Semi (mean)    & 43 & 50 & 41 & 25 & 223 & 233 & 205 & 152 & 677 & 630 & 473 & 181\\
  (F) & L-Normal (median)    & 43 & 48 & 42 & 22 & 255 & 260 & 208 & 97 & 1006 & 1025 & 886 & 527\\
   & L-Normal (mean)& 42 & 45 & 40 & 21 & 214 & 216 & 177 & 86 & 868 & 880 & 773 & 489\\
 & Semi (median)         & 39 & 43 & 38 & 24 & 223 & 220 & 186 & 127 & 1033 & 1000 & 864 & 555\\
   & Semi (mean) & 40 & 44 & 40 & 27 & 204 & 219 & 203 & 155 & 838 & 860 & 778 & 531 \\
  (G) & L-Normal (median)    & 46 & 53 & 40 & 16 & 286 & 276 & 193 & 53 & 979 & 858 & 591 & 175 \\
   & L-Normal (mean)& 45 & 49 & 36 & 15 & 236 & 224 & 158 & 44 & 747 & 653 & 452 & 133 \\
 & Semi (median)            & 42 & 47 & 34 & 20 & 250 & 229 & 178 & 126 & 891 & 798 & 587 & 200\\
   & Semi (mean)    & 43 & 46 & 38 & 24 & 223 & 233 & 204 & 152 & 676 & 629 & 473 & 182\\
  (H) & L-Normal (median)    & 41 & 47 & 41 & 22 & 254 & 253 & 209 & 96 & 1004 & 1010 & 887 & 530\\
   & L-Normal (mean)& 40 & 44 & 38 & 20 & 214 & 215 & 175 & 88 & 868  & 878 & 772 & 490\\
 & Semi (median)         & 37 & 42 & 37 & 22 & 224 & 221 & 183 & 129 & 1032 & 989 & 868 & 557\\
  & Semi (mean)  & 39 & 42 & 39 & 26 & 205 & 218 & 202 & 156 & 839 & 859 & 778 & 533\\
  (I) & L-Normal (median)   & 35 & 26 & 17 & 4 & 183 & 124 & 54 & 5 & 588 & 357 & 152 & 12\\
   & L-Normal (mean)& 33 & 26 & 16 & 4 & 152 & 102 & 45 & 5 & 448 & 276 & 116 & 9\\
 & Semi (median)            & 35 & 26 & 17 & 7 & 179 & 124 & 56 & 8 & 584 & 355 & 153 & 16 \\
   & Semi (mean)    & 35 & 30 & 21 & 9 & 160 & 109 & 49 & 8 & 452 & 281 & 121 & 13\\
  (J) & L-Normal (median)   & 29 & 26 & 21 & 7 & 166 & 144 & 97 & 43 & 670 & 598 & 476 & 296 \\
   & L-Normal (mean)& 27 & 24 & 20 & 7 & 141 & 122 & 86 & 40 & 593 & 537 & 433 & 288\\
 & Semi (median)            & 27 & 24 & 20 & 9 & 165 & 145 & 100 & 44 & 681 & 603 & 480 & 301\\
   & Semi (mean)    & 29 & 27 & 22 & 11& 145 & 126 & 88  & 42 & 598 & 542 & 438 & 293\\
\midrule
\end{tabular}
\end{center}
\end{table}

We compare the performance of the posterior mean and median delivered by the
MMLE in (\ref{margin}) and
(\ref{margin2}). The simulation results are reported in Table
\ref{tb-het1}. In models (A)-(D), the semi-parametric estimator generally
outperforms the estimator which dose not take into account the order
structure. We observe improvement regardless of the direction of the
order. In models (E)-(H) where the signals correspond to moderate
variances, the semiparametric approach delivers better results in
most cases when $v=3,4,5$. In models (I)-(J) which contain no order
information, the semiparametric procedure is very comparable with
the parametric procedure without using the order structure. Overall,
the performance of the semi-parametric approach is quite robust and
its computational cost is moderate due to the efficiency of the PAV algorithm.

\subsection{Application to wavelet approximation}
We apply the method in Section \ref{sec:hete} to wavelet coefficient
estimation. Suppose we have observations
$$X_i=h(t_i)+\epsilon_i$$
of a function $h(\cdot)$ at $N=2^J$ regularly spaced points $t_i$
with $\epsilon_i\sim N(0,\sigma^2_i)$, where $N$ and $J$ are
positive integers. Let $d_{jk}$ be the elements of the discrete
wavelet transformation (DWT) of the sequence $h(t_i)$. Similarly write
$d_{jk}^*$ the DWT of the observed data $X_i$. At the
$j$th level, we set up a model:
\begin{align}\label{w1}
d_{jk}^*=d_{jk}+\tilde{\sigma}_{jk}\varepsilon_{jk}, \quad
k=1,2,\dots,N_j,
\end{align}
where $\varepsilon_{jk}\sim N(0,1)$. At level $j$, we estimate
$d_{jk}$ by the posterior median
\begin{align*}
\hat{d}_{jk}=\delta^H(d_{jk}^*;\hat{w},\hat{b}_1,\dots,\hat{b}_p,\hat{c}),
\end{align*}
and the posterior mean,
\begin{align*}
\check{d}_{jk}=\zeta^H(d_{jk}^*;\hat{w},\hat{b}_1,\dots,\hat{b}_p,\hat{c}),
\end{align*}
where $(\hat{w},\hat{b}_1,\dots,\hat{b}_p,\hat{c})$ is the solution
to (\ref{margin2}) based on $\{d_{jk}^*\}_{k=1}^{N_j}$. In practice,
the noise $\tilde{\sigma}_{jk}$ are unknown and need to be replaced
by estimate $\hat{\sigma}_{jk}$. Finally, we
apply the inverse DWT to $\hat{d}_{jk}$ (or $\check{d}_{jk}$) to get the wavelet
approximation for $X_i$.


As an illustration, we employ the proposed method to process the
wavelet transform of a two-dimensional image. We consider the image
of Ingrid Daubechies contained in the \texttt{waveslim} package in R.
After loading the image, we reverse its sign, in order to obtain an
image that comes out in positive rather than negative when using the
image with the option \texttt{col=gray(1:100/100)} in R. We then
construct a noisy image by adding heteroscedastic normal noise to
each pixel. In particular, the standard deviation of the noise we
add to the $(i,j)$th pixels is $(i+j)/a_0$ for $a_0=10,15,20.$
Following \citet{silverman2005ebayesthresh}, we construct the
two-dimensional wavelet transform using the routine \texttt{dwt.2d}
and the Daubechies \texttt{d6} wavelet. As pointed out in \citet{silverman2005ebayesthresh},
it may be appropriate to use dictionaries
other than the standard two-dimensional wavelet transform. Here we
mainly use this example to illustrate how our method can be used in
a broader context. To estimate the standard deviation of the noise,
we partition the wavelet coefficients at the finest scale into
$m\times m$ blocks over space, and use median-absolute deviation to
estimate the standard deviation of noise at each of the $m^2$
blocks. In our analysis, we set $m=8$ and $16$, which deliver very
similar results. At each level, the wavelet coefficients in the same
block are assumed to have the same standard deviation. Figure
\ref{fig:dau} shows the original and noisy images. We apply the
method in Section \ref{sec:hete}, \citet{johnstone2005empirical}'s
procedure with the double exponential density component and the
NPMLE method (implemented in the R package \texttt{REBayes}) to the
wavelet coefficients at each level, and then invert the transform
using the R function \texttt{idwt.2d} to find the final estimate. To
implement \citet{johnstone2005empirical}'s approach, we let
$d_{ij}=\hat{\sigma}_{ij}\delta(d_{ij}/\hat{\sigma}_{ij};\hat{w},\hat{b})$
with $\hat{\sigma}_{ij}$ being the above blockwise estimate of the
standard deviation. Here $\delta(\cdot;\hat{w},\hat{b})$ denotes the
posterior median, and $(\hat{w},\hat{b})$ are the MMLEs with the location parameter being zero.

To quantify the performance of different methods, we consider
$\text{MSE}=\sum_{i,j=1}^{256}(h(t_{ij})-\hat{h}(t_{ij}))^2$, where
$h(t_{ij})$ and $\hat{h}(t_{ij})$ denote the $(i,j)$th pixel values
for the original image and the reconstructed image respectively.
Table \ref{tb-dau} summarizes the ratios of the MSE of the proposed
method and NPMLE procedure to that of \citet{johnstone2005empirical}. Both the semiparametric estimator and the NPMLE based
estimators provide an improvement over \citet{johnstone2005empirical}.
Our semiparametric approach is slightly better than the NPMLE in a few cases, and the posterior mean delivers better results as compared to the posterior median.

\begin{figure}[!h]
\centering
\includegraphics[height=6cm,width=5cm]{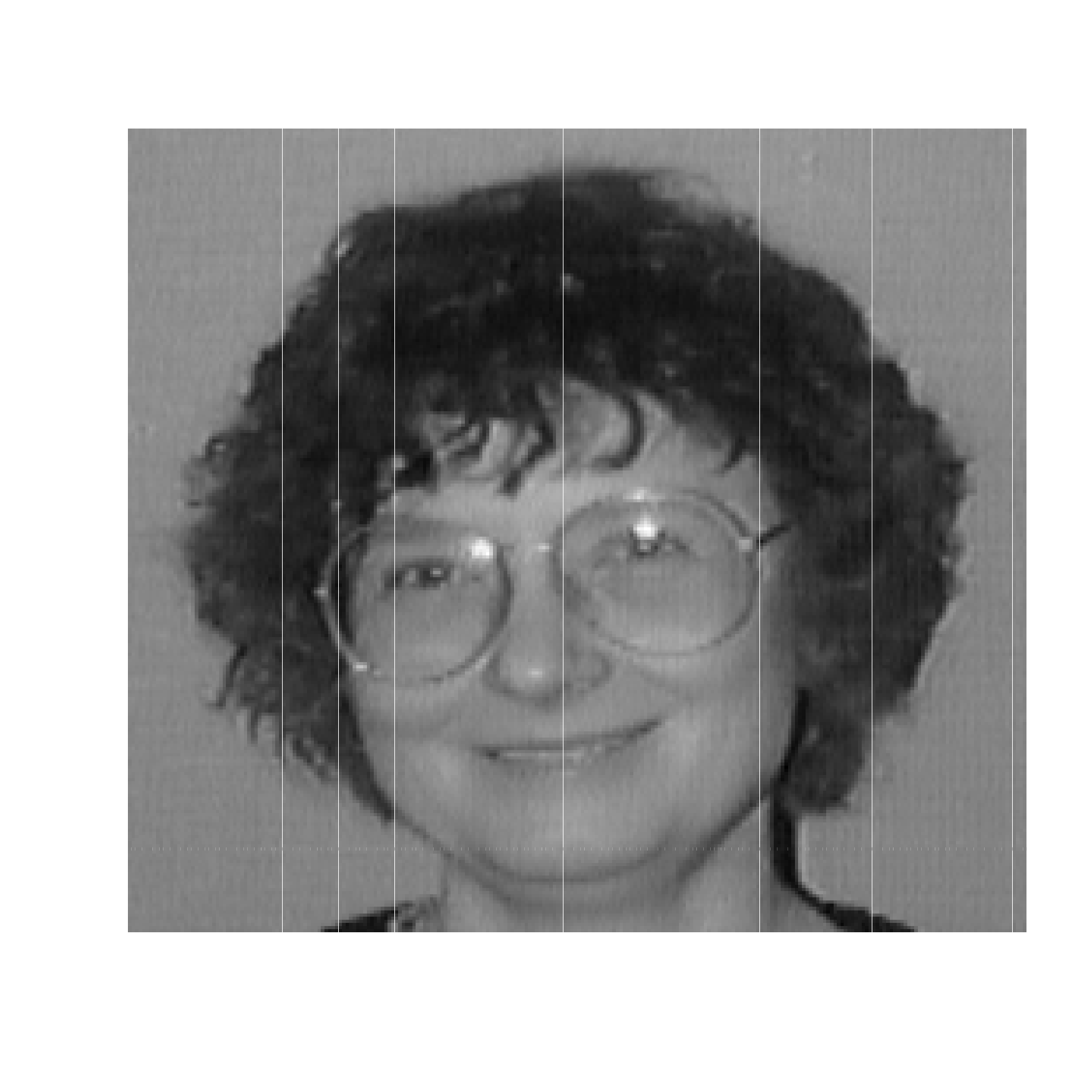}
\includegraphics[height=6cm,width=5cm]{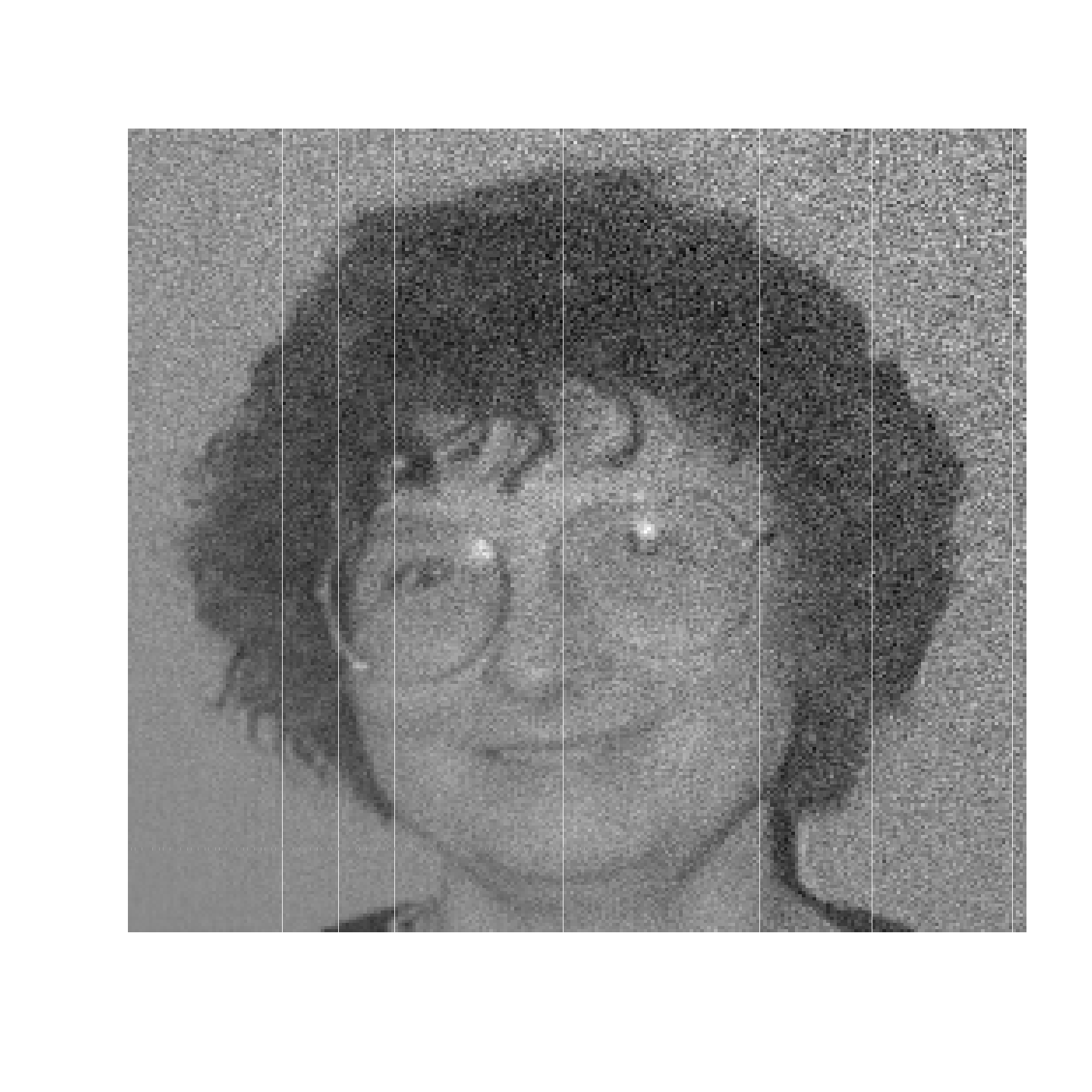}
\includegraphics[height=6cm,width=5cm]{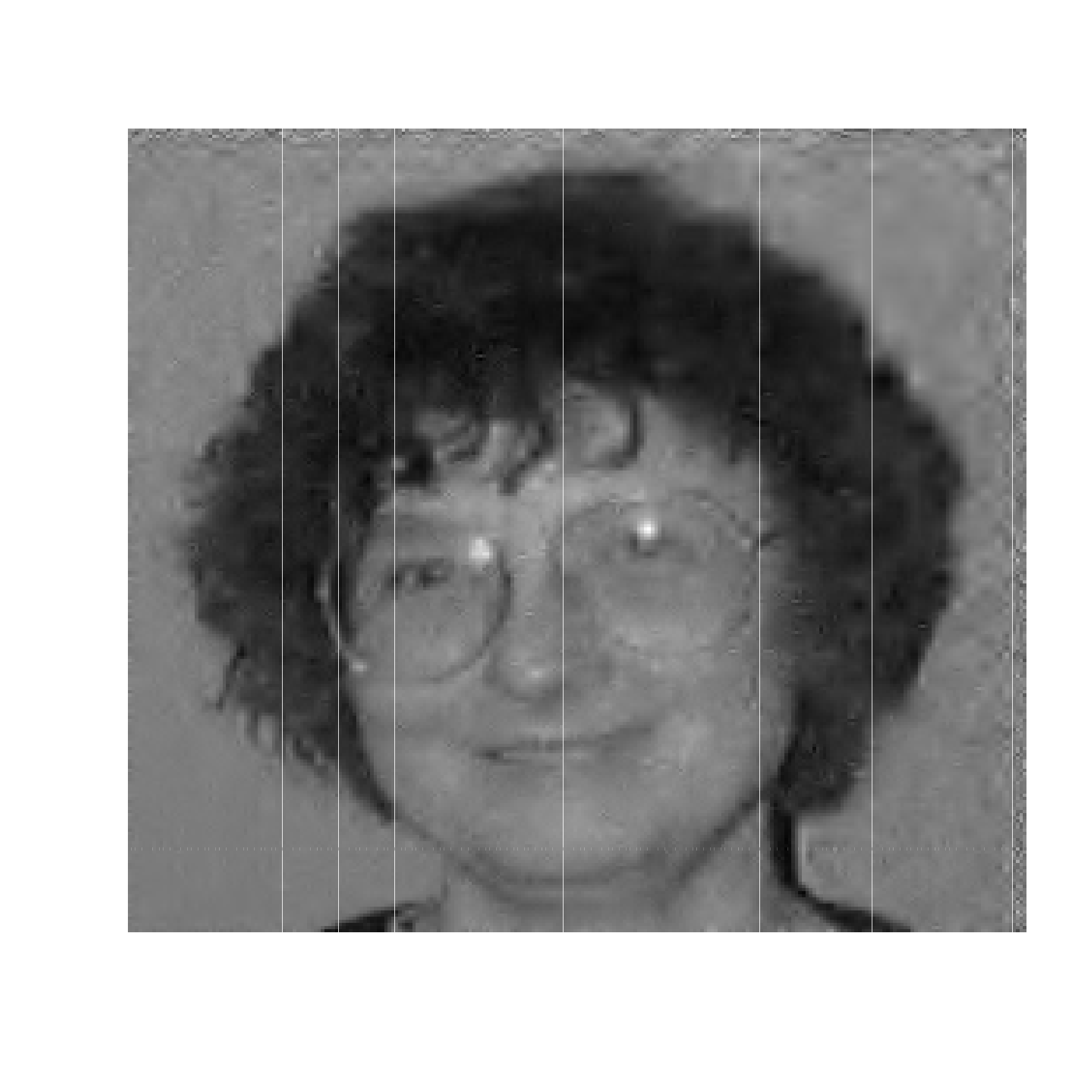}
\caption{Original image (left), noisy image (middle) and reconstructed image based on the posterior mean from the proposed method (right) of Ingrid
Daubechies, where $a_0=15$.}\label{fig:dau}
\end{figure}


\begin{table}[!h]
\caption{\small Ratio of the MSE
of the proposed method and NPMLE to that of \citet{johnstone2005empirical}.}\label{tb-dau}
\begin{center}
\begin{tabular}{rrrrr}
\hline
& &\multicolumn{3}{c}{$a_0$}\\ \cmidrule(r){3-5}
$m$ & method & 10 & 15 & 20 \\
\hline
$8$  & Semi  (mean)  & 0.820 & 0.840 & 0.850  \\
$8$  & Semi  (median)& 0.864 & 0.909 & 0.943  \\
$8$  & NPMLE (mean)  & 0.843 & 0.863 & 0.865  \\
$8$  & NPMLE (median)& 0.909 & 0.937 & 0.939  \\
$16$  & Semi  (mean) & 0.814 & 0.845 & 0.857  \\
$16$ & Semi  (median)& 0.856 & 0.911 & 0.951  \\
$16$  & NPMLE (mean) & 0.825 & 0.859 & 0.861  \\
$16$  & NPMLE(median)& 0.895 & 0.931 & 0.941  \\
\hline
\end{tabular}
\end{center}
\end{table}

\section{Appendix}
\subsection{Closed-form representations for posterior median}\label{closed-form}
\textbf{Double exponential}: We provide the closed-form representation
for $\delta(x;w,b,c)$ when the prior density component is double
exponential with location shift. We derive the result under the normal model $X|\mu\sim N(\mu,\sigma^2)$. Let
$g_+(x;b,c)=(1/\sigma)\int^{+\infty}_{c}\phi((x-\mu)/\sigma)\gamma(\mu;b,c)d\mu=(b/2)\exp(-bx+b^2\sigma^2/2+cb)\Phi((x-c)/\sigma-b\sigma)$
and $\tilde{g}_+(x;b,c)=(b/2)\exp(-bx+b^2\sigma^2/2+cb)\Phi(x/\sigma-b\sigma)$. Here we suppress the dependence on $\sigma$. Note that,
\begin{align*}
& \frac{1}{\sigma}\int^{+\infty}_{a}\phi\left(\frac{x-\mu}{\sigma}\right)\gamma(\mu;b,c)d\mu=(b/2)\exp(-bx+b^2\sigma^2/2+cb)\Phi\left(\frac{x-a}{\sigma}-b\sigma\right),
\end{align*}
for $a>c$, and
\begin{align*}
\frac{1}{\sigma}\int^{c}_{a}\phi\left(\frac{x-\mu}{\sigma}\right)\gamma(\mu;b,c)d\mu=(b/2)\exp(bx+b^2\sigma^2/2-cb)\left\{\Phi\left(\frac{c-x}{\sigma}-b\sigma\right)-\Phi\left(\frac{a-x}{\sigma}-b\sigma\right)\right\},
\end{align*}
for $a\leq c$. Then we
have $g(x;b,c)=g_+(x;b,c)+g_+(-x;b,-c),$ where $g$ denotes the convolution between $\phi_{\mu,\sigma^2}(\cdot)$ and $\gamma(\cdot;b,c)$.
Recall that $m(x;w,b,c)=(1-w)\phi_{0,\sigma^2}(x)+wg(x;b,c).$ Assuming $x>0$,
straightforward but tedious calculation shows that:
\\Case 1: if $c>0$ and $2wg_+(x;b,c)\geq m(x;w,b,c)$,
$$\delta(x;w,b,c)=x-b\sigma^2+\sigma\Phi^{-1}\left(1-\frac{m(x;w,b,c)\Phi((x-c)/\sigma-b\sigma)}{2wg_+(x;b,c)}\right).$$
Case 2: if $c>0$ and
$$1-\frac{2wg_+(-x,b,-c)\{\Phi((c-x)/\sigma-b\sigma)-\Phi(-x/\sigma-b\sigma)\}}{m(x;w,b,c)\Phi((c-x)/\sigma-b\sigma)}\leq \frac{2wg_+(x;b,c)}{m(x;w,b,c)}<1,$$
then
$$\delta(x;w,b,c)=x+b\sigma^2+\sigma\Phi^{-1}\left(\frac{\Phi((c-x)/\sigma-b\sigma)}{g_+(-x,b,-c)}\left\{g(x;b,c)-\frac{m(x;w,b,c)}{2w}\right\}\right).$$
Case 3: if $c>0$ and
$$\frac{2wg_+(x;b,c)}{m(x;w,b,c)}< 1-\frac{2wg_+(-x,b,-c)\{\Phi((c-x)/\sigma-b\sigma)-\Phi(-x/\sigma-b\sigma)\}}{m(x;w,b,c)\Phi((c-x)/\sigma-b\sigma)},$$
then $\delta(x;w,b,c)=0.$
\\Case 4: if $c\leq 0$ and $$\frac{2w\tilde{g}_+(x;b,c)}{m(x;w,b,c)}\geq 1,$$
then
$$\delta(x;w,b,c)=x-b\sigma^2+\sigma\Phi^{-1}\left(1-\frac{m(x;w,b,c)\Phi((x-c)/\sigma-b\sigma)}{2wg_+(x;b,c)}\right).$$
\\Case 5: if $c\leq 0$ and
$$1-\frac{2(1-w)\phi_{0,\sigma^2}(x)}{m(x;w,b,c)}\leq \frac{2w\tilde{g}_+(x;b,c)}{m(x;w,b,c)}< 1,$$
then $\delta(x;w,b,c)=0.$
\\Case 6: if $c\leq 0$ and
$$\frac{2w\tilde{g}_+(x;b,c)}{m(x;w,b,c)}<1-\frac{2(1-w)\phi_{0,\sigma^2}(x)}{m(x;w,b,c)},$$
then
\begin{align*}
\delta(x;w,b,c)=&x-b\sigma^2+\sigma\Phi^{-1}\bigg(\Phi(b\sigma-x/\sigma)
\\&-\left(\frac{m(x;w,b,c)}{2w}-\frac{w\tilde{g}_+(x;b,0)+(1-w)\phi_{0,\sigma^2}(x)}{w}\right)\frac{\Phi((x-c)/\sigma-b\sigma)}{g_+(x;b,c)}\bigg).
\end{align*}
Finally for $x<0$, we define
$\delta(x;w,b,c)=-\delta(-x;w,b,-c)$, i.e.,
$$\delta(x;w,b,c)=\text{sign}(x)\delta(|x|;w,b,\text{sign}(x)c),\quad x\in\mathbb{R}.$$

\begin{proof}[Proof of Lemma \ref{lemma:double}]
We prove the results when the noise level is $\sigma^2.$ Write
$\tau=1/(b^2\sigma^2)$. To show (1), first note that
$g(c,b,c)=2g_+(c,b,c)=b\exp(b^2\sigma^2/2)$ which is independent of
$c$, and $m(c;w,b,c)\rightarrow wg(c,b,c)$ as $c\rightarrow+\infty$.
Thus we have
$$\Phi^{-1}\left(\frac{\Phi(-b\sigma)}{g_+(c,b,c)}\left\{g(c;b,c)-\frac{m(c;w,b,c)}{2w}\right\}\right)\rightarrow -b\sigma.$$
By the closed-formed representation in Case 2, it is straightforward
to verify that $\delta(c;w,b,c)-c\rightarrow 0$ as $c\rightarrow
+\infty$.

Next we prove (2). As $x-c\rightarrow +\infty$ and
$x\rightarrow+\infty$, we have
$$\frac{\phi_{0,\sigma^2}(x)}{\exp(-bx+b^2\sigma^2/2+cb)}\rightarrow 0,\quad \frac{g_{+}(-x;b,-c)}{\exp(-bx+b^2\sigma^2/2+cb)}\rightarrow 0.$$
It thus implies that
\begin{align*}
&\frac{m(x;w,b,c)\Phi((x-c)/\sigma-b\sigma)}{2wg_+(x;b,c)}
=\frac{(1-w)\phi_{0,\sigma^2}(x)+wg_+(x;b,c)+wg_{+}(-x;b,-c)}{wb\exp(-bx+b^2\sigma^2/2+cb)}\rightarrow
1/2.
\end{align*}
When $c>0$, by Case 1, we have
$\delta(x;w,b,c)-(x-b\sigma^2)\rightarrow 0.$ When $c<0$, we have
$g_{+}(x;b,c)/\tilde{g}_+(x;b,c)\rightarrow 1.$ By Case 4, we have
$\delta(x;w,b,c)-(x-b\sigma^2)\rightarrow 0.$

Finally, (3) follows from similar argument and the fact that
$\delta(x;w,b,c)=-\delta(-x;w,b,-c)$ for $x<0.$
\end{proof}

\textbf{Normal}: Next we provide the closed-form representation
for $\delta(x;w,b,c)$ when the prior density component is normal
with location shift. The prior distribution for $\mu$ is
$(1-w)\delta_0+wN(c,1/b^2)$. Let $\tau=1/(b^2\sigma^2)$. Direct
calculation shows that
\begin{align*}
&h(x;a,b,c)=\int^{+\infty}_{a}\phi_{\mu,\sigma^2}(x)\gamma(\mu;b,c)d\mu=\phi_{c,1/b^2+\sigma^2}(x)\left\{1-\Phi\left(\frac{a-(\tau x+c)/(\tau+1)}{\sqrt{\sigma^2\tau/(1+\tau)}}\right)\right\},\\
&m(x;w,b,c)=(1-w)\phi_{0,\sigma^2}(x)+w\phi_{c,1/b^2+\sigma^2}(x).
\end{align*}
We have the following three cases:
\\Case 1: If $2wh(x;0,b,c)\geq m(x;w,b,c)$, then
$$\delta(x;w,b,c)=\frac{\tau x+c}{\tau+1}+\sigma\sqrt{\frac{\tau}{1+\tau}}\Phi^{-1}\left(\frac{w\phi_{c,1/b^2+\sigma^2}(x)-(1-w)\phi_{0,\sigma^2}(x)}{2w\phi_{c,1/b^2+\sigma^2}(x)}\right).$$
\\Case 2: If $m(x;w,b,c)-2(1-w)\phi_{0,\sigma^2}(x)\leq 2wh(x;0,b,c)\leq m(x;w,b,c)$, then $\delta(x;w,b,c)=0$.
\\Case 3: If $2wh(x;0,b,c)\leq m(x;w,b,c)-2(1-w)\phi_{0,\sigma^2}(x),$ then
$$\delta(x;w,b,c)=\frac{\tau x+c}{\tau+1}+\sigma\sqrt{\frac{\tau}{1+\tau}}\Phi^{-1}\left(\frac{w\phi_{c,1/b^2+\sigma^2}(x)+(1-w)\phi_{0,\sigma^2}(x)}{2w\phi_{c,1/b^2+\sigma^2}(x)}\right).$$

\begin{proof}[Proof of Lemma \ref{lemma:normal}]
Using the explicit expression for $\delta(x;w,b,c)$ and the fact
that $\phi_{0,\sigma^2}(x)/\phi_{c,1/b^2+\sigma^2}(x)\rightarrow 0$
as $|x|\rightarrow +\infty$, we have
\begin{align*}
\delta(x;w,b,c)-\frac{\tau x+c}{\tau+1}\rightarrow 0,
\end{align*}
as $|x|\rightarrow +\infty$.
\end{proof}

\subsection{Properties of the posterior median}
We present the following lemma which will be useful in the proof of
Proposition \ref{prop1}.
\begin{lemma}\label{lemma1}
For any $c\geq 0$, $g(x;c)/\phi(x)$ is monotonic increasing for $x>c$.
\end{lemma}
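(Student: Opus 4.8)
The plan is to show that the logarithmic derivative of $g(x;c)/\phi(x)$ coincides with a posterior mean, and then to argue that this mean is positive once $x>c\ge 0$. Writing $\gamma(\mu;c)=\gamma_0(\mu-c)$ (recall $b=1$) and using $\phi(x-\mu)/\phi(x)=e^{\mu x-\mu^2/2}$, I would first record the representation
\[
\frac{g(x;c)}{\phi(x)}=\int_{-\infty}^{+\infty}e^{\mu x-\mu^2/2}\gamma_0(\mu-c)\,d\mu .
\]
Differentiating under the integral sign---legitimate because the Gaussian factor $e^{-\mu^2/2}$ dominates any growth of $|\mu|\,e^{\mu x}\gamma_0(\mu-c)$ uniformly for $x$ in a neighbourhood of a fixed point---yields
\[
\frac{d}{dx}\frac{g(x;c)}{\phi(x)}=\frac{1}{\phi(x)}\int_{-\infty}^{+\infty}\mu\,\phi(x-\mu)\gamma_0(\mu-c)\,d\mu,
\]
so the sign of the derivative is exactly the sign of the unnormalised posterior mean $\int\mu\,\phi(x-\mu)\gamma_0(\mu-c)\,d\mu$. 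Equivalently, by Tweedie's formula, $\frac{d}{dx}\log\{g(x;c)/\phi(x)\}=\mathbb{E}[\mu\mid X=x]$, the posterior mean under the prior $\gamma_0(\cdot-c)$. It therefore suffices to prove that this posterior mean is strictly positive whenever $x>c\ge 0$.

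Second, I would make the change of variables $\nu=\mu-c$ and set $y=x-c>0$, which turns the target integral into
\[
\int_{-\infty}^{+\infty}(\nu+c)\,\phi(y-\nu)\gamma_0(\nu)\,d\nu=\int_{-\infty}^{+\infty}\nu\,\phi(y-\nu)\gamma_0(\nu)\,d\nu+c\int_{-\infty}^{+\infty}\phi(y-\nu)\gamma_0(\nu)\,d\nu .
\]
The second summand equals $c\,g(y;0)$, which is nonnegative because $c\ge 0$ and a convolution of two densities is strictly positive, so it remains only to control the first summand.

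Third, I would exploit the symmetry $\gamma_0(\nu)=\gamma_0(-\nu)$ (which holds for both the Laplace and Gaussian shapes the lemma targets) to fold the first integral onto the positive half-line,
\[
\int_{-\infty}^{+\infty}\nu\,\phi(y-\nu)\gamma_0(\nu)\,d\nu=\int_{0}^{+\infty}\nu\,\gamma_0(\nu)\bigl\{\phi(y-\nu)-\phi(y+\nu)\bigr\}\,d\nu .
\]
For $y>0$ and $\nu>0$ one has $\phi(y-\nu)/\phi(y+\nu)=e^{2y\nu}>1$, so the bracket is strictly positive; since $\nu\gamma_0(\nu)\ge 0$, the integrand is nonnegative and strictly positive on a set of positive measure. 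Hence both summands are nonnegative, the derivative is strictly positive, and $g(x;c)/\phi(x)$ is strictly increasing for $x>c$. The delicate points I anticipate are the justification of differentiating under the integral and, more importantly, the reliance on the symmetry of $\gamma_0$: a non-symmetric shape would break the folding step, so I would flag the symmetry of $\gamma_0$ explicitly as the operative hypothesis.
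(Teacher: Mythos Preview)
Your proof is correct and rests on the same two ingredients as the paper's---the symmetry of $\gamma_0$ and the elementary inequality $\phi(y-\nu)>\phi(y+\nu)$ for $y,\nu>0$---but your organisation is a bit different and arguably cleaner. The paper folds first, writing $g(x;c)/\phi(x)=\int_c^\infty h(x,\mu;c)\gamma_0(\mu-c)\,d\mu$ with $h(x,\mu;c)=\{\phi(x-\mu)+\phi(x+\mu-2c)\}/\phi(x)$, then differentiates $h$ in $x$ and checks positivity by a case split on $\mu\le 2c$ versus $\mu>2c$. You instead differentiate first, recognise the result as (a positive multiple of) the posterior mean $\mathbb{E}[\mu\mid X=x]$ via Tweedie, shift to centre the prior at $0$, and only then fold; the shift lets you peel off the nonnegative term $c\,g(y;0)$ and reduces the remaining integral to one over $[0,\infty)$ with no case analysis. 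Both routes are short, but yours makes the probabilistic content (positivity of the posterior mean when $x>c$) explicit, while the paper's stays purely computational. Your caveat that the argument hinges on symmetry of $\gamma_0$ is apt and applies equally to the paper's proof, which uses it implicitly in the very first line.
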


\begin{proof}[Proof of Lemma \ref{lemma1}]
Let $h(x,\mu;c)=\{\phi(x-\mu)+\phi(x+\mu-2c)\}/\phi(x)$. Then we
have
$g(x;c)/\phi(x)=\int^{+\infty}_{c}h(x,\mu;c)\gamma_0(\mu-c)d\mu$.
For $x>c$ and any $\mu$, we have
\begin{align*}
\frac{\partial h(x,\mu;c)}{\partial
x}=\mu\exp\{x\mu-\mu^2/2\}+(2c-\mu)\exp\{(2c-\mu)(2x+\mu-2c)/2\}.
\end{align*}
When $\mu>2c$, we have $\mu>(\mu-2c)\exp\{2(\mu-c)(c-x)\}$ which implies that
$\frac{\partial h(x,\mu;c)}{\partial
x}>0$. When $\mu\leq 2c$, it is clear that $\frac{\partial h(x,\mu;c)}{\partial
x}>0$. Therefore $g(x;c)/\phi(x)$ is monotonic increasing for $x>c\geq 0.$
\end{proof}



\begin{proof}[Proof of Proposition \ref{prop1}]
Without loss of generality, we assume that $c>0.$ Claim (1) follows
from the argument in the proof of Lemma 2 in JS (2004).

Under Condition (\ref{eq-prop1}), it is straightforward to verify that $\delta(0;w,c)=0$. By the monotonicity of $\delta$, there exist $t_1,t_2\geq 0$ such that
\begin{align*}
\delta(x;w,c)\begin{cases}
  <0, & \mbox{if }~~x<-t_2,\\
  =0, & \mbox{if }~~-t_2\leq x\leq t_1,\\
  >0, & \mbox{otherwise}.
\end{cases}
\end{align*}
Because
$P(\mu>0|X=x)=w\int_{0}^{+\infty}\phi(x-\mu)\gamma(\mu,c)d\mu/\{(1-w)\phi(x)+wg(x;c)\}$
and
$P(\mu<0|X=x)=w\int_{-\infty}^{0}\phi(-x+\mu)\gamma(\mu,c)d\mu/\{(1-w)\phi(x)+wg(x;c)\}$,
$t_1$ and $t_2$ must satisfy (\ref{median1}) and (\ref{median2}).

To show (3), first assume that $c>0$. We note that $\gamma(\mu;c)$ is symmetric about $c$ and
is unimodal. For $x>c>0$, we have $\gamma(x-v;c)\geq \gamma(x+v;c)$
for any $v\geq 0$. Thus we get
\begin{align*}
\gamma(x-v;c)\phi(v)/g(x;c)\geq
\gamma(x+v;c)\phi(v)/g(x;c).
\end{align*}
Integrating over $v\geq 0$, we obtain
\begin{align*}
P(\mu\leq x|X=x,\mu\neq 0)\geq P(\mu> x|X=x,\mu\neq 0).
\end{align*}
Because $P(\mu>x|X=x)=P(\mu>x|X=x,\mu\neq 0)P(\mu\neq 0|X=x)\leq
P(\mu>x|X=x,\mu\neq 0)\leq 0.5,$ we know that $\delta(x;w,c)\leq x.$
Similar argument shows that $\delta(x;w,c)\geq x$ for $x<0.$ Next we
consider the region where $x\leq c$. Using the fact that
$\phi(x-c-\mu)\leq \phi(x-c+\mu)$ for $x<c$ and any $\mu>0$, we deduce
that
\begin{align*}
P(\mu> c|X=x,\mu\neq
0)=&\int^{+\infty}_{c}\phi(x-\mu)\gamma_0(\mu-c)/g(x;c)d\mu=\int^{+\infty}_{0}\phi(x-c-\mu)\gamma_0(\mu)/g(x;c)d\mu
\\ \leq& \int^{+\infty}_{0}\phi(x-c+\mu)\gamma_0(\mu)/g(x;c)d\mu=\int_{-\infty}^{c}\phi(x-\mu)\gamma(\mu;c)/g(x;c)d\mu
\\=& P(\mu\leq c|X=x,\mu\neq 0),
\end{align*}
which implies that $P(\mu\geq c|X=x)\leq P(\mu> c|X=x,\mu\neq 0)\leq
0.5$ and thus $\delta(x;c)\leq c$. Therefore for $c>0$,
$|\delta(x;w,c)|\leq |x|\vee c$. Claim (3) follows by noticing that
$\delta(x;w,c)=-\delta(-x;w-c).$

Finally we prove (4). The proof is presented in four steps below.
\\\emph{Step 1}: Our arguments in Steps 1-3 are basically modifications of those in JS
(2004). We present the details for completeness. Assume that $c>0.$
Following the proof of Lemma 2 in JS (2004), we aim to find a
constant $a$ such that for large enough $x$,
\begin{align}\label{prop1-eq1}
P(\mu>x-a|X=x)=P(\mu>x-a|X=x,\mu\neq 0)P(\mu\neq 0|X=x)>1/2.
\end{align}
Let $B=\sup_{|u|\leq M}\gamma_0(u)e^{\Lambda
u}/\{\gamma_0(M)e^{\Lambda M}\}$. Under (\ref{eq-lem2}),
$\gamma_0(u)e^{\Lambda u}$ is increasing for $u\leq 0$ or $u\geq M$. Thus for any $a_1>M+c$,
we have
\begin{align*}
\text{Odd}(\mu>a_1|X=x,\mu\neq 0):=&\frac{P(\mu>a_1|X=x,\mu\neq
0)}{1-P((\mu>a_1|X=x,\mu\neq
0))}=\frac{\int_{a_1}^{+\infty}\gamma_0(\mu-c)\phi(x-\mu)d\mu}{\int_{-\infty}^{a_1}\gamma_0(\mu-c)\phi(x-\mu)d\mu}
\\ \geq & \frac{\int_{a_1}^{+\infty}e^{-\Lambda\mu}\phi(x-\mu)d\mu}{B\int_{-\infty}^{a_1}e^{-\Lambda\mu}\phi(x-\mu)d\mu}.
\end{align*}
Because $\int_{-\infty}^{+\infty}e^{-\Lambda \mu}\phi(\mu)d\mu<\infty,$
there exists a large enough $a_2>0$ such that
$\int_{-a_2}^{+\infty}e^{-\Lambda\mu}\phi(\mu)d\mu>3B\int_{-\infty}^{-a_2}e^{-\Lambda\mu}\phi(\mu)d\mu.$
Thus for $x>a_1+a_2+M$, we have
\begin{align*}
\text{Odd}(\mu>x-a_1|X=x,\mu\neq 0)\geq
\frac{\int_{x-a_1}^{+\infty}e^{-\Lambda\mu}\phi(x-\mu)d\mu}{B\int_{-\infty}^{x-a_1}e^{-\Lambda\mu}\phi(x-\mu)d\mu}
=\frac{\int_{-a_1}^{+\infty}e^{-\Lambda\mu}\phi(\mu)d\mu}{B\int_{-\infty}^{-a_1}e^{-\Lambda\mu}\phi(\mu)d\mu}>3,
\end{align*}
which implies that $P(\mu>x-a_1|X=x,\mu\neq 0)>3/4.$
\\\emph{Step 2}: The posterior odds $\text{Odd}(\mu\neq 0|X=x)$ is equal to
\begin{align*}
O(x;w,c):=\text{Odd}(\mu\neq 0|X=x)=\frac{P(\mu\neq
0|X=x)}{1-P(\mu\neq 0|X=x)}=\frac{w}{1-w}\frac{g(x;c)}{\phi(x)}.
\end{align*}
Let $w_c=\{\phi(c)/g(c;c)\}/[1+\{\phi(c)/g(c;c)\}]$ so that
$O(c;w_c,c)=1.$ For fixed $w_c$, by Lemma \ref{lemma1}, $O(x;w_c,c)$
is an increasing function from $1$ to $+\infty$ when $x\geq c.$ For
$w<w_c$, there exists a $e(w)>c$ such that $O(e(w);w,c)=1.$ Also
note that
\begin{align*}
O(x;w,c)=O(x_0;w,c)\exp\left\{\int^{x}_{x_0}(\log(g(\mu;c))'-\log(\phi(\mu))')d\mu\right\}.
\end{align*}
\\\emph{Step 3}: Let $\epsilon=(\rho-\Lambda)/2$. Note that $g(x;c)=\int^{+\infty}_{-\infty}\gamma(\mu-c)\phi(x-\mu)d\mu
=\int^{+\infty}_{-\infty}\gamma(\mu)\phi(x-c-\mu)d\mu.$ For large
enough $a_3>M+c$, we have for $|\mu|\geq a_3$,
\begin{align*}
(\log g(\mu;c))'\geq -\Lambda-\epsilon,\quad (\log \phi(\mu))'\leq
-\rho,
\end{align*}
where we have used (31) in JS (2004). Choose $w_3$ so that
$O(a_3;w_3,c)=1.$ For $w<w_3$, $e(w)>a_3.$ For $x>e(w)+a_4$ with
$a_4=2(\rho-\Lambda)^{-1}\log(2)$, we have
\begin{align*}
O(x;w,c)\geq O(e(w);w,c)\exp\{(\rho-\Lambda)a_4/2\}\geq 2.
\end{align*}
If $w\geq w_c$, then $O(x;w,c)\geq O(x;w_c,c)\geq 2$ provided that
$x>c+a_4.$ In either cases, it follows that $P(\mu\neq 0|X=x)\geq
2/3.$ Therefore when $x>\max\{c+a_4,e(w)+a_4,a_1+a_2+M\},$
(\ref{prop1-eq1}) holds with $a=a_1$. If $0\leq
x<\max\{c+a_4,e(w)+a_4,a_1+a_2+M\},$ we have $0\leq
\delta(x;w,c)<x\vee c$ by Claim (3). We also note that $e(w)\leq
t_1$. Simple algebra shows that $O(e(w),w,c)=1$ implies
$$\frac{w\int_{0}^{+\infty}\gamma_0(e(w)-\mu)\phi(\mu)d\mu}{(1-w)\phi(e(w))+g(e(w);c)}\leq 0.5.$$
Thus we have $\delta(e(w);w,c)\leq \delta(t_1;w,c)=0$ which suggests that $e(w)\leq t_1$ as $\delta(\cdot;w,c)$ is a monotonic increasing function.
Combining the arguments we get
$$-c\leq x-\delta(x;w,c)\leq t_1+c+c_0,$$
for some constant $c_0$.\\
\emph{Step 4}: For $c>0$ and $x<0,$ we want to find a positive
constant $a$ such that
\begin{align}\label{step5-1}
P(\mu>x+a|X=x)<1/2.
\end{align}
It thus implies that $0\leq \delta(x;w,c)-x\leq a$. First note that for $x<-a$, (\ref{step5-1}) is equivalent to
\begin{align}\label{step5-2}
(1-w)\phi(x)+w\int^{+\infty}_{x+a}\phi(x-\mu)\gamma_0(\mu-c)d\mu\leq \frac{1}{2}m(x;w,c).
\end{align}
Rearranging (\ref{step5-2}), we have
\begin{align}\label{step5-3}
\frac{(1-w)\phi(x)}{wg(x;c)}+\frac{2\int^{+\infty}_{x+a}\phi(x-\mu)\gamma_0(\mu-c)d\mu}{g(x;c)}\leq 1.
\end{align}
Using the fact that $g(x;c)\geq c_0 \gamma(x-c)$ [see (28) of JS (2004)], for any $\epsilon>0,$ there exists $x<-c$ such that,
\begin{align*}
\frac{(1-w)\phi(x)}{wg(x;c)}\leq \frac{(1-w)\phi(x)}{c_0w\gamma(x-c)}\leq \epsilon.
\end{align*}
The second term on the LHS in (\ref{step5-3}) is a monotonic increasing function of
\begin{align}\label{step5-4}
\frac{\int^{+\infty}_{x+a}\phi(x-\mu)\gamma_0(\mu-c)d\mu}{\int^{x+a}_{-\infty}\phi(x-\mu)\gamma_0(\mu-c)d\mu}.
\end{align}
When $x< -a-M$, (\ref{step5-4}) can be bounded by
\begin{align*}
&\frac{\int^{+\infty}_{-M+c}\phi(x-\mu)\gamma_0(\mu-c)d\mu+\int^{-M+c}_{x+a}\phi(x-\mu)\gamma_0(\mu-c)d\mu}{\int^{x+a}_{-\infty}\phi(x-\mu)\gamma_0(\mu-c)d\mu}
\\ \leq &\frac{\phi(x+M-c)+\int^{-M+c}_{x+a}\phi(x-\mu)\gamma_0(\mu-c)d\mu}{\int^{x+a}_{-\infty}\phi(x-\mu)\gamma_0(\mu-c)d\mu}
\\ = &\frac{\phi(x+M-c)+\int^{-M}_{x_0+a}\phi(x_0-\mu)\gamma_0(\mu)d\mu}{\int^{x_0+a}_{-\infty}\phi(x_0-\mu)\gamma_0(\mu)d\mu}
\\ = &\frac{\phi(x+M-c)+\int^{y_0-a}_{M}\phi(\mu-y_0)\gamma_0(\mu)d\mu}{\int^{+\infty}_{y_0-a}\phi(\mu-y_0)\gamma_0(\mu)d\mu},
\end{align*}
where $x_0=x-c$ and $y_0=-x_0=c-x$. For $M<\mu\leq y_0-a$, $\gamma_0(\mu)e^{\Lambda\mu}\leq \gamma_0(y_0-a)e^{\Lambda(y_0-a)}$.
For $\mu>y_0-a$, $\gamma_0(\mu)e^{\Lambda\mu}\geq \gamma_0(y_0-a)e^{\Lambda(y_0-a)}$. Thus we have
\begin{align*}
&\frac{\phi(x+M-c)+\int^{y_0-a}_{M}\phi(\mu-y_0)\gamma_0(\mu)d\mu}{\int^{+\infty}_{y_0-a}\phi(\mu-y_0)\gamma_0(\mu)d\mu}
\\ \leq &\frac{\phi(x+M-c)e^{-\Lambda(y_0-a)}/\gamma_0(y_0-a)+\int^{y_0-a}_{M}\phi(\mu-y_0)e^{-\Lambda\mu}d\mu}{\int^{+\infty}_{y_0-a}\phi(\mu-y_0)e^{-\Lambda\mu}d\mu}
\\ \leq &\frac{\phi(x+M-c)e^{-\Lambda(y_0-a)}/\gamma_0(y_0-a)+(\Phi(\Lambda-a)-\Phi(\Lambda+M-y_0))e^{-y_0\Lambda+\Lambda^2/2}}{(1-\Phi(\Lambda-a))e^{-y_0\Lambda+\Lambda^2/2}}
\\ \leq &\frac{\phi(x+M-c)e^{\Lambda a-\Lambda^2/2}/\gamma_0(y_0-a)+\{\Phi(\Lambda-a)-\Phi(\Lambda+M-y_0)\}}{\{1-\Phi(\Lambda-a)\}}.
\end{align*}
Note that as $x\rightarrow -\infty,$ $\phi(x-M)e^{y_0\Lambda-\Lambda^2/2}\rightarrow0$ and $\Phi(\Lambda+M-y_0)\rightarrow 0.$ Also we can make $\Phi(\Lambda-a)$ small by picking a large enough $a$.
Combining the above derivations, there exists a $c_2>0$ such that for $x<-c-c_2$, (\ref{step5-1}) holds and thus $0\leq \delta(x;w,c)-x\leq a$. When $-c-c_2\leq x<-t_2$, $0\leq \delta(x;w,c)-x\leq c+c_2$. When $-t_2\leq x\leq 0,$
$\delta(x;w,c)-x=-x\leq t_2.$ The proof is completed by noticing $\delta(x;w,c)=\text{sign}(x)\delta(|x|;w,\text{sign}(x)c)$.
\end{proof}

\begin{proof}[Proof of Lemma \ref{lemma:con}]
By the definition of $\delta^{-1}$, we have $\lim_{t\rightarrow
0^+}\delta^{-1}(t;w,b,c)=t_1$ and $\lim_{t \rightarrow
0^-}\delta^{-1}(t;w,b,c)=-t_2,$ which implies that
$\lim_{\theta\rightarrow 0^+}\mathcal{P}'(\theta;w,b,c)=t_1$ and
$\lim_{\theta\rightarrow 0^-}\mathcal{P}'(\theta;w,b,c)=-t_2$ with
$\mathcal{P}'=\partial \mathcal{P}/\partial \theta.$ We first argue
that the solution to (\ref{pen}) is a thresholding rule. Note the
first derivative of (\ref{pen}) with respect to $\theta$ is
$l'(\theta,x):=\text{sign}(\theta)\{|\theta|+\text{sign}(\theta)\mathcal{P}'(\theta;w,b,c)\}-x$.
Therefore for $-t_2<x<t_1$, $l'(\theta,x)>0$ for small enough
positive $\theta$, and $l'(\theta,x)<0$ for large enough negative
$\theta$. Hence, $\hat{\theta}(x;w,b,c)=0$ for $-t_2<x<t_1$. For $x>t_1$ or $x<-t_2$,
the unique solution to the equation $l'(\theta,x)=0$ satisfies
\begin{align*}
\theta+\mathcal{P}'(\theta;w,b,c)=\theta+\{\delta^{-1}(\theta;w,b,c)-\theta\}=x,
\end{align*}
which implies that $\hat{\theta}=\delta(x;w,b,c)$.
\end{proof}

\subsection{Proof of Theorem \ref{thm-main}}
By Stein's Lemma, SURE can also be written as
\begin{align*}
\hat{R}(w,c)=\hat{R}(\theta)=\sum_{i=1}^{p}(\zeta(X_i;\theta)-X_i)^2+2\sum_{i=1}^{p}\nabla\zeta(X_i;\theta)-p,
\end{align*}
which is more convenient for our theoretical analysis. Consider
\begin{align*}
\frac{1}{p}\left\{\hat{R}(w,c)-\E\hat{R}(w,c)\right\}=&\frac{1}{p}\sum_{i=1}^{p}\left\{(\zeta(X_i;\theta) - X_i)^2-\E(\zeta(X_i;\theta) - X_i)^2\right\}
\\&+ \frac{2}{p}\sum_{i=1}^p \left(\nabla \zeta(X_i;\theta) - \E \nabla \zeta(X_i;\theta)\right)=\frac{1}{p}\sum_{i=1}^{p}W_i,
\end{align*}
where $W_i=(\zeta(X_i;\theta) - X_i)^2-\E(\zeta(X_i;\theta) -
X_i)^2+2\{\nabla \zeta(X_i;\theta) - \E \nabla \zeta(X_i;\theta)\}.$

We first state the following lemma, which shows the bounded
shrinkage property for the posterior mean. Recall that
$\gamma_0(u)=\gamma(u,1,0)$. Write $a\lesssim b$ if $a\leq Cb$ for
some constant $C$ which is independent of $p.$
\begin{lemma}\label{lemma:mean}
Assume that $\gamma_0$ is unimodal with
\begin{equation}\label{1}
\sup_{u}\left|\nabla \log\gamma_0(u)\right|\leq \Lambda \quad
{a.e.},
\end{equation}
for $\Lambda>0.$ Then we have for any $x\in\mathbb{R}$,
$$|\zeta(x;\theta)-x|\lesssim 1+\sqrt{|c|+\log(1/w)}.$$
\end{lemma}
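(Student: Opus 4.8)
The plan is to start from Tweedie's formula $\zeta(x;\theta)-x=\nabla\log m(x;w,c)$ and split the score of the mixture marginal into a null part and a slab part. Writing $\alpha(x)=wg(x;c)/m(x;w,c)$ for the posterior probability that $\mu\neq 0$, one has
\[
\nabla\log m(x;w,c)=(1-\alpha(x))\frac{\nabla\phi(x)}{\phi(x)}+\alpha(x)\frac{\nabla g(x;c)}{g(x;c)}=-(1-\alpha(x))\,x+\alpha(x)\big(\E_h[\mu\mid x]-x\big),
\]
since $\nabla\phi/\phi=-x$ and $\nabla g/g=\E_h[\mu\mid x]-x$, where $\E_h[\cdot\mid x]$ denotes expectation under the slab posterior $h(\mu\mid x,c)\propto\phi(x-\mu)\gamma_0(\mu-c)$. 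Taking absolute values and using $\alpha(x)\le 1$ gives $|\zeta(x;\theta)-x|\le(1-\alpha(x))|x|+|\E_h[\mu\mid x]-x|$, so it suffices to control the two terms separately.

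For the slab term I would show $|\E_h[\mu\mid x]-x|\le\Lambda$ by a Stein-type integration by parts. Substituting $\nu=\mu-x$, the slab posterior is proportional to $\phi(\nu)\gamma_0(\nu+x-c)$; using $\nu\phi(\nu)=-\phi'(\nu)$ and integrating by parts (the boundary terms vanish because $\phi$ decays and $\gamma_0$ is bounded by its modal value) yields $\E_h[\mu\mid x]-x=\E_h[(\log\gamma_0)'(\mu-c)\mid x]$, which is bounded in modulus by $\Lambda$ thanks to \eqref{1}. Hence this term contributes only an $O(1)$ amount.

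The crux is bounding the null term $(1-\alpha(x))|x|$, for which I would combine two estimates: the trivial $1-\alpha(x)\le 1$, and the ratio bound $1-\alpha(x)\le(1-w)\phi(x)/\{wg(x;c)\}$. For the latter I use $g(x;c)=(\phi*\gamma_0)(x-c)\ge c_1\gamma_0(x-c)\ge c_1\gamma_0(0)e^{-\Lambda|x-c|}$, where the first inequality follows as in (28) of JS (2004) from $\gamma_0(x-c-u)\ge\gamma_0(x-c)e^{-\Lambda|u|}$ with $c_1=\int\phi(u)e^{-\Lambda|u|}du$, and the second from \eqref{1}. This gives
\[
(1-\alpha(x))|x|\le\frac{1}{c_1\gamma_0(0)\sqrt{2\pi}}\,\frac{|x|}{w}\,e^{-x^2/2+\Lambda|x-c|}.
\]
I would then split at a threshold $T\asymp\sqrt{|c|+\log(1/w)}$: on $|x|\le T$ the trivial bound gives $(1-\alpha(x))|x|\le T\lesssim 1+\sqrt{|c|+\log(1/w)}$, while on $|x|>T$ the inequality $|x-c|\le|x|+|c|$ turns the exponent into $-x^2/2+\Lambda|x|+\Lambda|c|$, and since $x^2/2$ dominates $\Lambda|x|+\Lambda|c|+\log(1/w)+\log|x|$ once $T$ is taken to be a large enough multiple of $\sqrt{|c|+\log(1/w)}$ plus a constant exceeding $4\Lambda$, the displayed quantity stays $O(1)$. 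Combining the two terms yields the claimed bound.

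I expect the main obstacle to be the calibration of the split rather than any single estimate. The naive ratio bound by itself is useless in the regime of small $|x|$ and large $c$, where it blows up like $e^{\Lambda|c|}/w$ because $g(x;c)$ is then exponentially small; it is precisely the truncation $1-\alpha(x)\le 1$ that tames this region, and matching it against the Gaussian decay for $|x|>T$ is what produces the $\sqrt{|c|+\log(1/w)}$ rate. Care is also needed to keep all constants ($c_1$, $\gamma_0(0)$, $\Lambda$) independent of $p$, $w$ and $c$, so that they may be absorbed into $\lesssim$.
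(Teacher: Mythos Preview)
Your proposal is correct and follows essentially the same route as the paper. Both arguments (i) bound the slab contribution by $\Lambda$ using $|\nabla\log\gamma_0|\le\Lambda$, (ii) lower-bound $g(x;c)\gtrsim e^{-\Lambda|x-c|}$ to control $(1-w)\phi(x)/\{wg(x;c)\}$, and (iii) split at a threshold $\asymp\sqrt{|c|+\log(1/w)}$, using the trivial bound below the threshold and Gaussian domination above it; your posterior-weight decomposition $(1-\alpha(x))(-x)+\alpha(x)(\E_h[\mu\mid x]-x)$ is just a cleaner repackaging of the paper's direct manipulation of $\nabla m/m$.
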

\begin{proof}[Proof of Lemma \ref{lemma:mean}]
Note that $\partial \phi(x-u)/\partial x=-\partial
\phi(x-u)/\partial u$. Then we have
\begin{align*}
\nabla m(x;\theta)=&-(1-w)x\phi(x)-w\int \gamma(u;c)(\partial \phi(x-u)/\partial u) d u
\\=&-(1-w)x\phi(x)+w\int\phi(x-u) \nabla \gamma(u;c) d u
\\=&-(1-w)x\phi(x)+w\int \phi(x-u) \gamma(u;c)\nabla \log\gamma(u;c) d
u.
\end{align*}
As $|\nabla\log\gamma(u;c)| \leq \Lambda$, it is not hard to see
that
\begin{equation}\label{eq-bound0}
\left|\frac{w\int \phi(x-u) \gamma(u;c)\nabla \log\gamma(u;c) d
u}{m(x;\theta)}\right| \leq \frac{\Lambda wg(x;c)}{m(x;\theta)} \leq
\Lambda.
\end{equation}
In view of the proof of Lemma 1 in JS (2004), there exists $C_1>0$
such that for any $x,u>0$,
$$\gamma_0(x+u)\geq C_1e^{-\Lambda u}\gamma_0(x).$$
Let $x^*=x-c$. We have for $x^*>0$,
\begin{align*}
g(x;c)=& \int \phi(x^*-u)\gamma_0(u)du
 \geq  \int_{0}^{\infty}\phi(u)\gamma_0(x^*+u)du
 \geq C_1\int_{0}^{\infty}\phi(u)\gamma_0(x^*)e^{-\Lambda u}du,
\end{align*}
and for $x^*<0$,
\begin{align*}
g(x;c)=& \int \phi(x^*-u)\gamma_0(u)du
\geq \int_{0}^{\infty} \phi(u)\gamma_0(u-x^*)du
\geq C_1\int_{0}^{\infty}\phi(u)\gamma_0(x^*)e^{-\Lambda u}du.
\end{align*}
Under (\ref{1}), there exists a constant $C_2$ such that $C_2
e^{-\Lambda|x|}\leq \gamma_0(x)$ for any $x$. Together with
(\ref{eq-bound0}), we have
\begin{align}
|\zeta(x;\theta)-x|\leq & \left|\frac{(1-w)x\phi(x)}{m(x;\theta)}\right|+\Lambda \leq \frac{(1-w)|x|}{(1-w)+wC_3e^{x^2/2-\Lambda|x-c|}}+\Lambda \nonumber
\\ \leq & \frac{(1-w)|x|}{(1-w)+C_3e^{x^2/2-\Lambda|x|-\Lambda|c|-\log(1/w)}}+\Lambda \nonumber
\\ \leq & \frac{(1-w)(|x|-\Lambda|+\Lambda)}{(1-w)+C_3e^{(|x|-\Lambda)^2/2-\Lambda|c|-\log(1/w)}}+\Lambda, \label{bound}
\end{align}
where $C_3>0$ is a constant which could be different from line to line. When $(|X|-\Lambda)^2\leq 4\Lambda |c|+4\log(1/w)$, the first term in (\ref{bound}) is bounded by $\Lambda+2\sqrt{\Lambda|c|+\log(1/w)}$.
When $(|X|-\Lambda)^2>4\Lambda |c|+4\log(1/w)$, the first term in (\ref{bound}) is bounded by $(|x|-\Lambda|+\Lambda)/\{C_3e^{(|x|-\Lambda)^2/4}\}\leq C_4$ for some $C_4>0$. Therefore, we have
$|\zeta(x;\theta)-x|\lesssim 1+\sqrt{|c|+\log(1/w)}$.
\end{proof}

\begin{lemma}\label{lemma:mean2}
Suppose the assumptions in Lemma \ref{lemma:mean} hold. Further assume that
\begin{align}\label{c1}
\sup_{u}|\nabla^2 \log\gamma_0(u)|\leq \Lambda',\quad \text{a.e.},
\end{align}
for some $\Lambda'>0$. Then we have for any $x\in\mathbb{R}$,
\begin{align}\label{11}
|\nabla \zeta(x;\theta)|\lesssim & 1+|c|+\log(1/w).
\end{align}
The same conclusion holds when $\gamma$ is double exponential.
\end{lemma}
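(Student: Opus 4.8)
The plan is to differentiate the Tweedie representation $\zeta(x;\theta)=x+\nabla\log m(x;\theta)$ used in Section \ref{sec:mean}, which gives
\[
\nabla\zeta(x;\theta)=1+\nabla^2\log m(x;\theta)=1+\frac{\nabla^2 m(x;\theta)}{m(x;\theta)}-\left(\frac{\nabla m(x;\theta)}{m(x;\theta)}\right)^2 .
\]
The subtracted term equals $(\zeta(x;\theta)-x)^2$, which Lemma \ref{lemma:mean} already controls by $\lesssim 1+|c|+\log(1/w)$. Since $\nabla\zeta(x;\theta)=\mathrm{Var}(\mu\mid X=x)\ge 0$ (so only an upper bound is needed), the whole problem reduces to bounding $\nabla^2 m(x;\theta)/m(x;\theta)$, and it suffices to bound each summand in absolute value.

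Next I would split this ratio. Using $\nabla^2\phi(x)=(x^2-1)\phi(x)$ and integrating by parts twice exactly as in the proof of Lemma \ref{lemma:mean}, one obtains
\[
\frac{\nabla^2 g(x;c)}{g(x;c)}=\int\Big\{\big(\nabla\log\gamma(\mu;c)\big)^2+\nabla^2\log\gamma(\mu;c)\Big\}\,\frac{\phi(x-\mu)\gamma(\mu;c)}{g(x;c)}\,d\mu,
\]
a conditional expectation that is bounded by $\Lambda^2+\Lambda'$ under (\ref{1}) and (\ref{c1}). Writing $\rho_0(x)=(1-w)\phi(x)/m(x;\theta)$ for the posterior null probability, this yields $\nabla^2 m/m=(x^2-1)\rho_0(x)+(wg/m)\,(\nabla^2 g/g)$, and since $wg/m\le 1$ the slab piece is $O(1)$. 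What remains is the term $(x^2-1)\rho_0(x)$.

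The \textbf{main obstacle} is precisely this term: the trivial bound $\rho_0\le 1$ only gives $O(x^2)$, while the bound $\rho_0(x)\le \tfrac{1-w}{w}\phi(x)/g(x;c)$ carries a prefactor $\tfrac1w e^{\Lambda|c|}$ that is polynomially/exponentially large in $p$ on $\Theta$. The resolution is a threshold argument balancing these effects. For $|x|\le T$ with $T^2\asymp 1+|c|+\log(1/w)$ I use $x^2\rho_0(x)\le T^2$. For $|x|>T$ I invoke the lower bound $g(x;c)\gtrsim e^{-\Lambda|x-c|}$ established inside the proof of Lemma \ref{lemma:mean} to get $x^2\rho_0(x)\lesssim \tfrac1w\,x^2 e^{-x^2/2+\Lambda|x-c|}$; completing the square via $-x^2/2+\Lambda|x|=-\tfrac12(|x|-\Lambda)^2+\Lambda^2/2$ and choosing $T$ so that the exponent at $|x|=T$ absorbs $\log(1/w)+\Lambda|c|$, the Gaussian decay dominates both the polynomial factor $x^2$ and the prefactor $e^{\Lambda|c|}/w$, leaving $x^2\rho_0(x)\lesssim 1+|c|+\log(1/w)$ on this region as well. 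Collecting the three pieces gives $|\nabla\zeta(x;\theta)|\lesssim 1+|c|+\log(1/w)$.

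For the double exponential case the only change is the $\nabla^2 g/g$ term, since $\nabla^2\log\gamma_0=0$ a.e.\ but the kink at the origin obstructs a direct appeal to (\ref{c1}). Here I would instead use that $g(\cdot;c)$ is log-concave, being a convolution of two log-concave densities, so $\nabla^2\log g\le 0$ and the slab contribution is bounded by a constant with $\Lambda=1$; the threshold argument for $(x^2-1)\rho_0(x)$ is unchanged.
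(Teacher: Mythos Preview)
Your proof is correct and, for the smooth case, follows the same decomposition and threshold argument as the paper: both write $\nabla\zeta=1+\nabla^2 m/m-(\nabla m/m)^2$, control $(\nabla m/m)^2$ by Lemma~\ref{lemma:mean}, split $\nabla^2 m/m$ into a spike piece $(x^2-1)\rho_0(x)$ and a slab piece, bound the slab piece via the hypotheses on $\nabla\log\gamma_0$ and $\nabla^2\log\gamma_0$, and handle the spike piece by the same threshold balancing used inside the proof of Lemma~\ref{lemma:mean}.

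The only substantive difference is the double-exponential case. The paper computes the distributional second derivative $\nabla^2\log\gamma_0=-2\Lambda\delta_0$ directly, so that the offending term in (\ref{last-term}) becomes $-2\Lambda w\phi(x-c)\gamma_0(0)$, which divided by $m(x;\theta)$ is uniformly bounded. You instead appeal to the log-concavity of $g=\phi*\gamma$ to get $\nabla^2 g/g\le(\nabla g/g)^2\le\Lambda^2$, which only gives an upper bound on the slab contribution; you then close the argument via the posterior-variance identity $\nabla\zeta=\mathrm{Var}(\mu\mid X=x)\ge 0$, so that a lower bound on $\nabla^2 g/g$ is never needed. Your route is a bit more conceptual and would extend verbatim to any log-concave slab $\gamma_0$, while the paper's direct evaluation is shorter and more explicit for the Laplace prior specifically.
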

\begin{proof}
Notice that
\begin{align*}
|\nabla \zeta(x;\theta)| \leq & 1+\left|\frac{\nabla^2m(x;\theta)}{m(x;\theta)}\right|+(\nabla \log m(x;\theta))^2. 
\end{align*}
Consider
\begin{align}
\nabla^2 m(x;\theta)=&-(1-w)\{\phi(x)-x^2\phi(x)\}-w\int \{\partial \phi(x-u)/\partial u\} \gamma(u;c)\nabla \log\gamma(u;c) du \nonumber
\\=&-(1-w)\{\phi(x)-x^2\phi(x)\}+w\int \phi(x-u)\gamma(u;c)(\nabla \log\gamma(u;c) )^2 du \nonumber
\\&+w\int \phi(x-u)\gamma(u;c)\nabla^2\log\gamma(u;c) du. \label{last-term}
\end{align}
Under the assumption that $\sup_{u}|\nabla^2\log\gamma_0(u)|\leq \Lambda'$, we see that
\begin{align}\label{bound-3}
\left|\frac{w\int
\phi(x-u)\gamma(u;c)\nabla^2\log\gamma(u;c)du}{m(x;\theta)}\right|\leq
\frac{\Lambda' wg(x;c)}{m(x;\theta)}\leq \Lambda'.
\end{align}
The rest of the proof is similar to those for Lemma
\ref{lemma:mean}. we skip the details here to conserve space.

The argument in (\ref{bound-3}) is not applicable to double
exponential distribution but the conclusion remains true. When
$\gamma_0(u)=\Lambda\exp(-\Lambda|u|)/2$, we have
$\nabla\log\gamma_0(u)=-\Lambda\text{sign}(u)$ and $\nabla^2 \log
\gamma_0(u)=-2\Lambda\delta(u)$, where $\delta(u)$ is the Dirac
Delta function. Then (\ref{last-term}) becomes
$$-2\Lambda w\int \phi(x-u)\gamma(u;c)\delta(u-c)du=-2\Lambda w \phi(x-c)\gamma_0(0),$$
which is bounded uniformly over $x,c$ and $w$, when divided by
$m(x;\theta).$
\end{proof}

By Lemmas \ref{lemma:mean}-\ref{lemma:mean2}, we have
\begin{align*}
|W_i|\leq c_1 + c_2\{|c|+\log(1/w)\},
\end{align*}
for some positive constants $c_1,c_2>0.$ Applying the Hoeffding's
inequality to $p^{-1}\sum_{i=1}^{p}W_i$, we have for any
$\epsilon>0,$
\begin{align}\label{eq-hoff}
P\left(\frac{1}{p}|\hat{R}(w,c)-E\hat{R}(w,c)|>\frac{\epsilon}{\sqrt{p}}\right)\leq
2\exp\left[-\frac{2\epsilon^2}{\{c_1 +
c_2(|c|+\log(1/w))\}^2}\right].
\end{align}
For distinct $\theta'=(w',c')$ and $\theta=(w,c)$, we aim to bound
$|\hat{R}(w,c)-\hat{R}(w',c')|$. We assume that $w,w'\in
[1/\lambda_0,1]$ and $c,c'\in [-c_0,c_0]$ for $\lambda_0,c_0>0$,
where $\lambda_0$ and $c_0$ are allowed to grow with $p.$
The following equations are useful in the subsequent calculations,
\begin{equation}\label{eq-bound1}
\begin{split}
\nabla m(x;\theta)-\nabla m(x;\theta')
=&(w-w')x\phi(x)+ (w'-w)\int \gamma(u;c')(\partial \phi(x-u)/\partial u) d u
\\&+w\int (\gamma(u;c')-\gamma(u;c))(\partial \phi(x-u)/\partial u) d u,
\end{split}
\end{equation}
and
\begin{equation}\label{eq-bound2}
\begin{split}
&\nabla^2 m(x;\theta)-\nabla^2 m(x;\theta')
\\=& (w-w')\{\phi(x)-x^2\phi(x)\}+(w'-w)\int \{\partial \phi(x-u)/\partial u\} \gamma(u;c')\nabla \log\gamma(u;c') du
\\&+w\int \{\partial \phi(x-u)/\partial u\} \{\gamma(u;c')\nabla \log\gamma(u;c')-\gamma(u;c)\nabla \log\gamma(u;c)\} du.
\end{split}
\end{equation}

\subsubsection{Case 1: $c=0$}
To gain some insight, we focus on a simpler case where $c=0$. Note
that
\begin{align*}
&|\nabla \log m(x;\theta)-\nabla \log m(x;\theta')|
\\ =& \left|\frac{\nabla m(x;\theta)-\nabla m(x;\theta')}{m(x;\theta')}+\frac{\nabla \log m(x;\theta)}{m(x;\theta')}(m(x;\theta')-m(x;\theta))\right|
\\ \lesssim& |w-w'|/(ww')\leq \lambda_0^2|w-w'|,
\end{align*}
where we have used the fact that $|\nabla \log m(x;\theta)|\lesssim
1/w$, $|m(x;\theta)-m(x;\theta')|/m(x;\theta')\lesssim |w-w'|/w$,
and $|\nabla m(x;\theta)-\nabla m(x;\theta')|/m(x;\theta')\lesssim
|w-w'|/w'.$ Similarly, we can deduce that
\begin{align*}
&|\nabla^2 \log m(x;\theta)-\nabla^2 \log m(x;\theta')|
\\ =& \left|\frac{\nabla^2 m(x;\theta)}{m(x;\theta)}-\frac{\nabla^2 m(x;\theta')}{m(x;\theta')}\right|+|(\nabla \log m(x;\theta))^2-(\nabla \log m(x;\theta'))^2|
\\ \lesssim & \left|\frac{\nabla^2 m(x;\theta)}{m(x;\theta)m(x;\theta')}(m(x;\theta')-m(x;\theta))+\frac{\nabla^2 m(x;\theta)-\nabla^2 m(x;\theta')}{m(x;\theta')}\right|+\lambda_0^3|w-w'|
\\  \lesssim & \lambda_0^3|w-w'|.
\end{align*}
Thus we have
\begin{align*}
p^{-1}|\{\hat{R}(w,0)-E\hat{R}(w,0)\}-\{\hat{R}(w',0)-E\hat{R}(w',0)\}|\lesssim
\lambda_0^3|w-w'|.
\end{align*}
Now set $w_j=\delta j$ for $j=1,2,\dots$ such that $w_j \in
[1/\lambda_0,1]$. Choose $\delta$ so that $\delta
\lambda_0^3=o(1/\sqrt{p})$. Then we have
\begin{align*}
A=\left\{\max_{w\in [1/\lambda_0,1]}p^{-1}|\hat{R}(w,0)-E\hat{R}(w,0)|\geq 2\epsilon/\sqrt{p}\right\}\subseteq D,
\end{align*}
where
$$D=\left\{\max_{j}p^{-1}|\hat{R}(w_j,0)-E\hat{R}(w_j,0)|\geq \epsilon/\sqrt{p}\right\}.$$
Using the union bound and the Hoeffding's inequality in
(\ref{eq-hoff}) with $c=0$, we have for large enough $p$,
\begin{align}\label{1}
P\left(A\right)\leq P(D)\leq \frac{4(\lambda_0-1)}{\lambda_0\delta}\exp\left[-\frac{2\epsilon^2}{\{c_1 + c_2\log(\lambda_0)\}^2}\right].
\end{align}
Choosing $\epsilon^2=s^2\log (p)(c_1 + c_2 \log(\lambda_0))^2/2,$ we
obtain
$$P(A)\leq \frac{4(\lambda_0-1)}{\lambda_0\delta}p^{-s^2}.$$ This says that
\begin{align*}
P\left(\max_{w\in [1/\lambda_0,1]}\frac{1}{\sqrt{p\log (p)}(c_1+c_2\log(\lambda_0))}|\hat{R}(w,0)-E\hat{R}(w,0)|\geq \sqrt{2}s\right)\leq \frac{4(\lambda_0-1)}{\lambda_0\delta}p^{-s^2}.
\end{align*}
For example, with $\lambda_0=a_1p^{a_2}$, one can pick $\delta=1/p^{3a_2+1/2+\varepsilon}$ and large enough $s$, where $\varepsilon>0$. Then $\delta \lambda_0^3=o(1/\sqrt{p})$
and
$$\max_{w\in [1/\lambda_0,1]}p^{-1}|\hat{R}(w,0)-E\hat{R}(w,0)|=O_p\left(\frac{(\log(p))^{3/2}}{\sqrt{p}}\right).$$

\subsubsection{Case 2: general $c$}
Now we consider the general case: $c\in [-c_0,c_0]$, where $c_0$ is allowed to grow slowly with $p.$ In view of the proof of Case 1, we need to bound the following quantities:
\begin{align}
&\frac{m(x;\theta')-m(x;\theta)}{m(x;\theta')},\\
&\frac{\nabla m(x;\theta)-\nabla m(x;\theta')}{m(x;\theta')}, \\
&\frac{\nabla^2 m(x;\theta)-\nabla^2 m(x;\theta')}{m(x;\theta')}.
\end{align}
For clarity, we present the proof in the following 5 steps.

\noindent \emph{ Step 1:} We deal with the first quantity. By the triangle inequality,
\begin{align*}
|m(x;\theta)-m(x;\theta')|\leq |w-w'|\{\phi(x)+g(x;c')\}+w|g(x;c)-g(x;c')|.
\end{align*}
Notice that
\begin{align*}
|\log\gamma_0(u-c)-\log\gamma_0(u-c')|=\left|\int_{u-c}^{u-c'}\nabla \log \gamma_0(s)ds\right|\leq \Lambda |c-c'|.
\end{align*}
and $|e^x-1|\leq |x|e^{|x|}$ for any $x$. Using these facts, we get
\begin{align*}
|g(x;c)-g(x;c')|\leq& \int \phi(x-u)\gamma_0(u-c')|\gamma_0(u-c)/\gamma_0(u-c')-1|du
\\ =&  \int \phi(x-u)\gamma_0(u-c')|e^{\log\gamma_0(u-c)-\log\gamma_0(u-c')}-1|du
\\ \leq & g(u;c') \Lambda|c-c'| e^{\Lambda|c-c'|}.
\end{align*}
Combining these results, we have
\begin{align}
\left|\frac{m(x;\theta')-m(x;\theta)}{m(x;\theta')}\right|\lesssim |w-w'|e^{\Lambda|c'|}/w' + |c-c'|e^{\Lambda|c-c'|}/w',
\end{align}
where we use the bound $\phi(x)/m(x;\theta')\lesssim
e^{\Lambda|c'|}/w'$ uniformly over $x$.\footnote{This bound can be
improved if we are willing to assume an upper bound on $w$, i.e.,
$w\leq \tilde{c}<1$. In this case, $c_0$ is allowed to grow at a
faster rate.}

\noindent \emph{ Step 2:} To deal with the second quantity, we note that
\begin{align*}
&\left|\int (\gamma(u;c')-\gamma(u;c))(\partial \phi(x-u)/\partial u) d u\right|
\\ \leq &\int |\nabla \gamma(u;c)-\nabla \gamma(u;c')|\phi(x-u)du
\\=& \int \left|\nabla \log\gamma(u;c)\left(\frac{\gamma(u;c)}{\gamma(u;c')}-1\right)+\nabla \log\gamma(u;c)-\nabla \log \gamma(u;c')\right|\gamma(u;c')\phi(x-u)du.
\end{align*}
Then by (\ref{eq-bound1}) and similar argument as above, we obtain,
\begin{align*}
\left|\frac{\nabla m(x;\theta)-\nabla
m(x;\theta')}{m(x;\theta')}\right|&\lesssim
|w-w'|e^{\Lambda|c'|}/w'+|c-c'|e^{\Lambda |c-c'|}/w',
\end{align*}
where we have used the fact that $|\nabla \log\gamma(u;c)-\nabla \log \gamma(u;c')|=|\int^{u-c}_{u-c'}\nabla^2 \log \gamma_0(s)ds | \lesssim |c-c'|$.

\begin{remark}
{\rm For double exponential distribution distribution, we have
\begin{align*}
&\frac{\int \left|\nabla \log\gamma(u;c)-\nabla \log \gamma(u;c')\right|\gamma(u;c')\phi(x-u)du}{m(x;\theta')}
\\=&\frac{2\Lambda\left|\int_{c}^{c'}\gamma(u;c')\phi(x-u)du\right|}{m(x;\theta')}\lesssim \frac{ |c-c'|\phi(x-c^*)}{w'g(x;c')}
\\ \lesssim &\frac{ |c-c'|e^{-(x-c^*)^2/2+\Lambda|x-c'|}}{w'}
\\ \leq &\frac{ |c-c'|e^{-(x-c')^2/4+(c-c')^2/2+\Lambda|x-c'|}}{w'}
\\ \lesssim&  |c-c'|e^{(c-c')^2/2}/w',
\end{align*}
where $c^*$ is between $c$ and $c'.$ So we have
\begin{align}
\left|\frac{\nabla m(x;\theta')-\nabla
m(x;\theta)}{m(x;\theta')}\right|\lesssim |w-w'|e^{\Lambda|c'|}/w' +
|c-c'|e^{\Lambda|c-c'|+(c-c')^2/2}/w'.
\end{align}
}
\end{remark}

\noindent \emph{ Step 3:} Next we analyze the third quantity. In view of (\ref{eq-bound2}), we consider
\begin{align*}
&\int \{\partial \phi(x-u)/\partial u\} \{\gamma(u;c')\nabla \log\gamma(u;c')-\gamma(u;c)\nabla \log\gamma(u;c)\} du
\\=&\int \{\partial \phi(x-u)/\partial u\}\gamma(u;c')\{\nabla \log\gamma(u;c')-\nabla \log\gamma(u;c)\} du
\\&+\int \{\partial \phi(x-u)/\partial u\}\nabla \log\gamma(u;c)\{\gamma(u;c')-\gamma(u;c)\} du
\\=&I_1+I_2\quad \text{say}.
\end{align*}
For $I_1$, using integration by parts, we have
\begin{align*}
I_1=&-\int \phi(x-u)\gamma(u;c')\nabla\log \gamma(u;c')\{\nabla \log\gamma(u;c')-\nabla \log\gamma(u;c)\} du
\\&-\int \phi(x-u)\gamma(u;c')\{\nabla^2 \log\gamma(u;c')-\nabla^2 \log\gamma(u;c)\} du
\\=&I_{11}+I_{12}\quad \text{say}.
\end{align*}
Here $I_{11}$ can be bounded in a similar way as in Step 2. Under
(\ref{eq-ass}), it is straightforward to see that
$|I_{12}/m(x;\theta')|\lesssim |c-c'|/w'$. Notice that in the case
of double exponential distribution, we have
\begin{align*}
|I_{12}|\lesssim &|\phi(x-c')\gamma_0(0)-\phi(x-c)\gamma_0(c-c')|
\\ \lesssim &\phi(x-c')|\gamma_0(c-c')-\gamma_0(0)|+\gamma_0(c-c')|\phi(x-c)-\phi(x-c')|,
\end{align*}
which implies that $|I_{12}/m(x;\theta')|\lesssim
|c-c'|e^{\Lambda|c-c'|}/w'$.

On the other hand, we have
\begin{align*}
I_2=&-\int \phi(x-u)\nabla^2 \log\gamma(u;c)\{\gamma(u;c')-\gamma(u;c)\} du
\\&-\int \phi(x-u)\nabla\log\gamma(u;c)\{\nabla\gamma(u;c')-\nabla\gamma(u;c)\} du,
\end{align*}
which can be handled in a similar way as in Step 2. Combining the arguments, we can show that
\begin{align*}
\left|\frac{\nabla^2 m(x;\theta)-\nabla^2 m(x;\theta')}{m(x;\theta')}\right|\lesssim |w-w'|e^{\Lambda|c'|}/w'+|c-c'|e^{\Lambda|c-c'|}/w'.
\end{align*}

\noindent \emph{ Step 4:} Combining Steps 1-3 and using the arguments in Case 1, we can show that
\begin{align*}
&p^{-1}|\{\hat{R}(w,c)-E\hat{R}(w,c)\}-\{\hat{R}(w',c')-E\hat{R}(w',c')\}|
\\ \lesssim & \max(\lambda_0^2,c_0)\lambda_0\left(|w-w'|e^{\Lambda
c_0}+|c-c'|\right).
\end{align*}

\noindent \emph{ Step 5:} The rest of the proof is similar to those in Case 1. Set
$w_j=\delta j$ and $c_i=\delta' i$ for $w_j\in [1/\lambda_0,1]$ and
$c_i\in [-c_0,c_0]$. Choose
$\max(\lambda_0^2,c_0)\lambda_0\left(\delta e^{\Lambda
c_0}+\delta'\right)=o(1/\sqrt{p})$. Then we have
\begin{align*}
\tilde{A}=\left\{\max_{w\in [1/\lambda_0,1],|c|\leq c_0}p^{-1}|\hat{R}(w,c)-E\hat{R}(w,c)|\geq 2\epsilon/\sqrt{p}\right\}\subseteq \tilde{D},
\end{align*}
where
$$\tilde{D}=\left\{\max_{i,j}p^{-1}|\hat{R}(w_j,c_i)-E\hat{R}(w_j,c_i)|\geq \epsilon/\sqrt{p}\right\}.$$
Again using the union bound and the Hoeffding's inequality, we have
\begin{align*}
P(\tilde{A})\leq \frac{16(\lambda_0-1)c_0}{\lambda_0\delta\delta'}\exp\left[-\frac{2\epsilon^2}{\{c_1 +
c_2(|c|+\log(\lambda_0))\}^2}\right].
\end{align*}
Picking $\epsilon^2=s^2\log(p)\{c_1 +
c_2(|c|+\log(\lambda_0))\}^2/2,$ we get
\begin{align*}
P\left(\max_{w\in [1/\lambda_0,1],|c|\leq c_0}\frac{1}{\sqrt{p\log
(p)}\{c_1 +
c_2(|c|+\log(\lambda_0))\}}|\hat{R}(w,c)-E\hat{R}(w,c)|\geq
\sqrt{2}s\right)\leq
\frac{16(\lambda_0-1)c_0}{\lambda_0\delta\delta'}p^{-s^2}.
\end{align*}
For $\lambda_0=a_1p^{a_2}$, $c_0=a_3\log(p)$,
$\delta=p^{-a_3\Lambda-1/2-3a_2-\varepsilon}$, $\delta'=p^{-1/2-3a_2-\varepsilon}$ and large enough $s$ where $\varepsilon>0$, we have $\max(\lambda_0^2,c_0)\lambda_0\left(\delta e^{\Lambda
c_0}+\delta'\right)=o(1/\sqrt{p})$ and
\begin{align*}
\max_{w\in [1/\lambda_0,1],|c|\leq
c_0}p^{-1}|\hat{R}(w,c)-E\hat{R}(w,c)|=O_p\left(\frac{(\log(p))^{3/2}}{\sqrt{p}}\right).
\end{align*}

\subsection{EM+PAV algorithm for MMLE}\label{sec:empav}

\begin{algorithm}[H]\label{alg3}\small
\caption{}
0. Input $d$ and the initial values $(w^{(0)}_0,w^{(0)}_1,c^{(0)}_1,\dots,w^{(0)}_{d},c^{(0)}_{d})$ and $(b^{(0)}_{1i},\dots,b^{(0)}_{di})$ for $1\leq i\leq p$.\\
1. \textbf{E-step:} Given  $(w_0,w_1,c_1,\dots,w_{d},c_{d})$ and
$(b_{1i},\dots,b_{di})$ for $1\leq i\leq p$, let
  $$Q_{0i}=\frac{(1-w_0)\phi(Y_i)}{(1-w_0)\phi(Y_i)+\sum^{d}_{j=1}w_jg(X_i;\tau_{ji}^{-1/2},c_j/\sigma_i)},$$
  and
  $$Q_{ki}=\frac{w_kg(Y_i;\tau_{ki}^{-1/2},c_k)}{(1-w_0)\phi(X_i)+\sum^{d}_{j=1}w_jg(Y_i;\tau_{ji}^{-1/2},c_j/\sigma_i)},$$
  for $1\leq k\leq d,$ where $\tau_{ki}=1/(\sigma_i^2b_k^2)$.\\
2. \textbf{M-step:} For fixed $(c_1,\dots,c_{d})$, solve the weighted isotonic regression,
\begin{align}\label{opt-mon3}
(\tilde{\tau}_{k1},\dots,\tilde{\tau}_{kp})=\arg\min\sum^{p}_{i=1}Q_{ki}\left\{(Y_i-c_k/\sigma_i)^2-1-\tau_{ki}\right\}^2
\quad \text{subject to}\quad 0\leq \tau_{ki}\leq \tau_{kj} \quad \text{if}
\quad \sigma_i\geq \sigma_j.
\end{align}
Let $\hat{\tau}_{ki}=\max\{\tilde{\tau}_{ki},0\}$ for $1\leq i\leq p.$
For fixed $(\tau_{k1},\dots,\tau_{kp})$, let
\begin{align}\label{opt-cw3}
\hat{c}_k=\frac{\sum_{i=1}^{p}Q_{ki}Y_i/\{\sigma_i(1+\tau_{ki})\}}{\sum_{i=1}^{p}Q_{ki}/\{\sigma_i^2(1+\tau_{ki})\}}\quad
\text{and}\quad \hat{w}_k=\frac{1}{p}\sum^{p}_{i=1}Q_{ki},
\end{align}
with $1\leq k\leq d.$ Iterate between (\ref{opt-mon3}) and
(\ref{opt-cw3}) until
convergence.\\
3. Repeat the above E-step and M-step until the algorithm converges.
\end{algorithm}

\bibliographystyle{apalike}
\bibliography{refs}

\end{document}